\theoremstyle{plain}
\newtheorem{theorem}{Theorem}
\numberwithin{theorem}{subsection}
\newtheorem{corollary}{Corollary}
\numberwithin{corollary}{subsection}
\newtheorem{definition}{Definition}
\numberwithin{definition}{section}
\newtheorem{lemma}{Lemma}
\numberwithin{lemma}{section}
\newtheorem{proposition}{Proposition}
\numberwithin{proposition}{section}
\numberwithin{remark}{section}
\numberwithin{equation}{section}
\newcommand {\be}{\begin{equation}}
\newcommand {\ee}{\end{equation}}
\newcommand{\h}{\begin{eqnarray*}}
 \newcommand{\e}{\end{eqnarray*}}
\newcommand{\CC}{\mathbf{C}}
\newcommand{\M}{\rightarrow}
\begin{document}
\title[Gravitational Anomaly Cancellation and Modular Invariance]{Gravitational Anomaly
 Cancellation and Modular Invariance}
\author{Fei Han}
\address{F. Han, Department of Mathematics, National University of Singapore,
 Block S17, 10 Lower Kent Ridge Road,
Singapore 119076 (mathanf@nus.edu.sg)}
\author{Kefeng Liu}
\address{K. Liu, Department of Mathematics, University of California at Los Angeles,
Los Angeles, CA 90095, USA (liu@math.ucla.edu) and  Center of Mathematical Sciences, Zhejiang University, 310027, P.R. China}
\maketitle

\begin{abstract}  In this paper, by combining modular forms and characteristic forms,
we obtain general anomaly cancellation formulas of any dimension.
For $4k+2$ dimensional manifolds, our results include the
gravitational anomaly cancellation formulas of
Alvarez-Gaum\'e and Witten in dimensions 2, 6 and 10 (\cite{AW}) as
special cases.  In dimension $4k+1$, we derive anomaly cancellation
formulas for index gerbes. In dimension $4k+3$, we obtain certain
results about eta invariants, which are interesting in spectral
geometry.
\end{abstract}

\section{Introduction}
In \cite{AW},  it is shown that in certain parity-violating gravity theory in $4k+2$ dimensions,
when  Weyl fermions of spin-$\frac{1}{2}$ or spin-$\frac{3}{2}$ or self-dual antisymmetric tensor
 field are coupled to gravity, perturbative anomalies occur.  Alvarez-Gaum\'e and Witten calculate
 the anomalies and  show that there are cancellation formulas for these anomalies in dimensions
 $2, 6, 10$.  Let $\widehat{I}_{1/2}, \widehat{I}_{3/2}$ and $\widehat{I}_{A}$ be the spin-$\frac{1}{2}$,
   spin-$\frac{3}{2}$ and antisymmetric tensor anomalies respectively.  By direct computations,
   Alvarez-Gaum\'e and Witten find anomaly cancellation formulas in dimensions 2, 6, 10  respectively,
\be -\widehat{I}_{1/2}+\widehat{I}_{A}=0,\ee

\be 21\widehat{I}_{1/2}- \widehat{I}_{3/2}+8\widehat{I}_{A}=0,\ee
and \be -\widehat{I}_{1/2}+\widehat{I}_{3/2}+\widehat{I}_{A}=0.\ee
These anomaly cancellation formulas can tell us how many fermions of
different types should be coupled to the gravity to make the theory
anomaly free.  Alvarez, Singer and Zumion \cite{ASZ} reproduce the above anomalies in a
different way by  using the family index theorem instead of Feynman
diagram methods.

When perturbative anomalies cancel, this means that the effective
action is invariant under gauge and coordinate transformations  that
can be reached continuously from the identity.  In \cite{W}, Witten
introduced the global anomaly by asking whether the effective action
is invariant under  gauge and coordinate transformations  that are
not continuously connected to the identity. Witten's work suggests
that the global anomaly should be related to the holonomy of a
natural connection on the determinant line bundle of the family
Dirac operators.

From the topological point of view, anomaly measures the
nontriviality of the determinant line bundle of a family of Dirac
operators.  The perturbative anomaly detects the real first Chern
class of the determinant line bundle while the global anomaly
detects the integral first Chern class beyond the real information
(cf. \cite{F}).

For a family of Dirac operators on an even dimensional closed
manifold, the determinant line bundle over the parametrizing space
carries the Quillen metric as well as the Bismut-Freed connection
compatible with the Quillen metric such that the curvature of the
Bismut-Freed connection is the two form component of the
Atiyah-Singer family index theorem \cite{BF1, BF2}. The curvature
of the Bismut-Freed connection is the representative (up to a
constant) of the real first Chern class of the determinant line
bundle.   In this paper, by developing modular invariance of certain
characteristic forms, we derive cancellation formulas for the
curvatures of determinant line bundles of family signature operators
and family tangent twisted Dirac operators on $4k+2$ dimensional manifolds
(see Theorem 2.2.1 and Theorem 2.2.2). When $k=0, 1, 2$, i.e. in
dimensions 2, 6, and 10, our cancellation formulas just give the
Alvarez-Gaum\'e-Witten cancellation formulas (1.1)-(1.3) (see
Theorem 2.2.5 and its proof).

For global anomaly, in \cite{BF2}, Bismut and Freed prove the
holonomy theorem suggsted by Witten.
 Later, to detect the integral information of the first Chern class of the determinant line bundle,
  Freed uses Sullivan's $\mathbf{Z}/k$ manifolds \cite{F}. In this paper, we also give cancellation
  formulas for the holonomies (with respect to the Bismut-Freed connections) of
determinant line bundles of family signature operators and family tangent
twisted Dirac operators on $4k+2$ dimensional manifolds for torsion
loops which appear in the data of $\mathbf{Z}/k$ surfaces (Theorem
2.2.3 and 2.2.4).

The general anomaly cancellation formulas in dimension $4k$ have been studied in  \cite{Liu1}.  One naturally asks if there are similar results in odd dimensions.

For a family of Dirac operators on an odd dimensional manifold, Lott
(\cite{Lo}) constructed  an abelian gerbe-with-connection whose
curvature is the three form component of the Atiyah-Singer families
index theorem. This gerbe is called the index gerbe, which is a
higher analogue of the determinant line bundle. As Lott remarks in
his paper that the curvature of such gerbes are also certain
nonabelian gauge anomaly from a Hamitonian point of view. In this paper,  we derive anomaly cancellation formulas for the
curvatures of index gerbes of family odd signature operators and
family tangent twisted Dirac operators on $4k+1$ dimensional manifolds (see 
Theorem 2.3.1, Theorem 2.3.2 and Corollary 2.3.1-2.3.3).  Moreover,
based on a result of Ebert  \cite{Eb}, we can also derive anomaly
cancellation formulas on the de Rham cohomology level (but not on
the form level), which does not involve the family odd signature
operators (see Theorem 2.3.3, Theorem 2.3.4 and Corollary
2.3.1-2.3.3).  We hope there is some physical meaning related to our
cohomological anomaly cancellation formulas.

In dimension $4k+3$, we derive some results for the
reduced $\eta$-invariants of family odd signature operators and family 
tangent twisted Dirac operators,  which are interesting in spectral geometry
(Theorem 2.4.1, Theorem 2.4.2 and Corollary 2.4.1-2.4.3).  Moreover, function of the form $$\exp\{2\pi \sqrt{-1} (\mathrm{linear\ combination\ of\ reduced\ eta\ invariants})\}$$ has appeared in physics (\cite{DMW}) as phase of effective action of $M$-theory  in 11 dimension. 
We hope our results for reduced $\eta$-invariants can also find applications in
physics.

We obtain our anomaly cancellation formulas by combining the family
index theory and modular invariance of characteristic forms.
Gravitational and gauge anomaly cancellations are very important in
physics because they can keep the consistency of certain quantum
field theories. It is quite interesting to notice that these
cancellation formulas are consequences of the modular properties of
characteristic forms which are rooted in elliptic genera.

\section{Results}

In this section, we will first prepare some geometric settings in
Section 2.1 and then present our results in Section 2.2-2.4. The proofs of the theorems in Section 2.2-2.4 will be given in Section 3.

\subsection{Geometric Settings} Following \cite{B1}, we define some geometric data on a fiber bundle as follows. Let $\pi: M \M Y$ be a smooth fiber bundle
with compact fibers $Z$ and connected base $Y$. Let $TZ$ be the
vertical tangent bundle of the fiber bundle and $g^{Z}$ be a metric
on $TZ$. Let $T^HM$ be a smooth subbundle of $TM$ such that $TM=T^HM\oplus TZ$. Assume that
$TY$ is endowed with a metric $g^{Y}$. We lift the metric of $TY$ to
$T^HM$ and by assuming that $T^HM$ and $TZ$ are orthogonal, $TM$ is
endowed with a metric which we denote $g^{Y}\oplus g^{Z}$. Let $\nabla^L$ be the Levi-Civita connection of $TM$ for
the metric $g^{Y}\oplus g^{Z}$ and $P_{Z}$ 
denote the orthogonal projection from $TM$ to $TZ$. Let $\nabla^Z$ denote the connection on $TZ$ defined by the
relation $U\in TM, V\in TZ, \nabla_U^ZV=P_Z\nabla^L_UV$. $\nabla^Z$
preserves the metric $g^Z$. Let $R^Z=\nabla^{Z,2}$ be the curvature of
$\nabla^Z$.

Let $E,F$ be two Hermitian vector bundles over $M$ carrying
Hermitian connections $\nabla^E, \nabla^F$ respectively. Let
$R^E=\nabla^{E,\ 2}$ (resp. $R^F=\nabla^{F,\ 2}$) be the curvature
of $\nabla^E$ (resp. $\nabla^F$). If we set the formal difference
$G=E-F$, then $G$ carries an induced Hermitian connection
$\nabla^G$ in an obvious sense. We define the associated Chern
character form as (cf. \cite{Z})
$$ {\rm ch}(G,\nabla^G)={\rm tr}\left[{\rm exp}\left(\frac{\sqrt{-1}}{2
\pi}R^{E}\right)\right]-{\rm tr}\left[{\rm
exp}\left(\frac{\sqrt{-1}}{2 \pi}R^{F}\right)\right].$$

For any complex number $t$, let
$$\Lambda_t(E)=\mathbf{C}|_M+tE+t^2\Lambda^2(E)+\cdots ,
\ \  S_t(E)=\mathbf{C}|_M+tE+t^2S^2(E)+\cdots$$  denote respectively
the total exterior and symmetric powers  of $E$, which live in
$K(M)[[t]].$ The following relations between these two operations
(\cite{At}, Chap. 3) hold, \be S_t(E)=\frac{1}{\Lambda_{-t}(E)},\ \ \ \
 \Lambda_t(E-F)=\frac{\Lambda_t(E)}{\Lambda_t(F)}.\ee

 The connections
$\nabla^E, \nabla^F$ naturally induce connections on $S_tE,
\Lambda_tE,$ etc. Moreover, if $\{\omega_i \}$, $\{{\omega_j}' \}$
are formal Chern roots for Hermitian vector bundles $E$, $F$
respectively, then [15, Chap. 1] \be
\mathrm{ch}\left(\Lambda_t{(E)},
\nabla^{\Lambda_t(E)}\right)=\prod\limits_i(1+e^{\omega_i}t).\ee

We have the following formulas for Chern character forms,
\be{\rm ch}\left(S_t(E), \nabla^{S_t(E)} \right)=\frac{1}{{\rm
ch}\left(\Lambda_{-t}(E),\nabla^{\Lambda_{-t}(E)}
\right)}=\frac{1}{\prod\limits_i (1-e^{\omega_i}t)}\ ,\ee \be{\rm
ch}\left(\Lambda_t(E-F), \nabla^{\Lambda_t(E-F)} \right)=\frac{{\rm
ch}\left(\Lambda_{t}(E),\nabla^{\Lambda_t(E)} \right)}{{\rm
ch}\left(\Lambda_{t}(F),\nabla^{\Lambda_t(F)}
\right)}=\frac{\prod\limits_i(1+e^{\omega_i}t)}{\prod\limits_j(1+e^{{\omega_j}'}t)}\
.\ee

If $W$ is a  real Euclidean vector bundle over $M$ carrying a
Euclidean connection $\nabla^W$, then its complexification
$W_\CC=W\otimes \CC$ is a complex vector bundle over
$M$ carrying a canonically induced Hermitian metric from that of
$W$, as well as a Hermitian connection $\nabla^{W_\CC}$
induced from $\nabla^W$.

If $E$ is a vector bundle (complex or real)
over $M$, set $\widetilde{E}=E-{{\rm dim}E}$ in $K(M)$ or $KO(M)$.

Let $q=e^{2\pi \sqrt{-1}\tau}$ with $\tau \in \mathbf{H}$, the upper
half complex plane. Let $T_\CC Z$ be the complexification of $TZ$. Set
\be \Theta_1(T_\CC Z)=\bigotimes_{n=1}^\infty S_{q^n}(\widetilde{T_\CC Z})
\otimes \bigotimes_{m=1}^\infty \Lambda_{q^m}(\widetilde{T_\CC Z}), \ee

\be\Theta_2(T_\CC Z)=\bigotimes_{n=1}^\infty
S_{q^n}(\widetilde{T_\CC Z}) \otimes \bigotimes_{m=1}^\infty
\Lambda_{-q^{m-{1\over 2}}}(\widetilde{T_\CC Z}),\ee which are
elements in $K(M)[[q^{1\over2}]]$.

$\Theta_1(T_\CC Z)$ and $\Theta_2(T_\CC Z)$ admit
formal Fourier expansion in $q^{1/2}$ as
\be \Theta_1(T_\CC Z)=A_0(T_\CC Z)+ A_1(T_\CC Z)q^{1/2}+\cdots, \ee
\be \Theta_2(T_\CC Z)=B_0(T_\CC Z)+B_1(T_\CC Z)q^{1/2}+\cdots,\ee
where the $A_j$'s and $B_j$'s are elements in the semi-group formally generated by
complex vector bundles over $M$. Moreover, they carry canonically induced
connections denoted by $\nabla^{A_j}$ and $\nabla^{B_j}$ respectively, and let
$\nabla^{\Theta_1(T_\CC Z)}$, $\nabla^{\Theta_2(T_\CC Z)}$ be the induced connections
with $q^{1/2}$-coefficients on
$\Theta_1$, $\Theta_2$ from the $\nabla^{A_j}$, $\nabla^{B_j}$.

The four Jacobi theta functions are defined as follows (cf.
\cite{C}): \be\theta(v,\tau)=2q^{1/8}\sin(\pi v)
\prod_{j=1}^\infty\left[(1-q^j)(1-e^{2\pi \sqrt{-1}v}q^j)(1-e^{-2\pi
\sqrt{-1}v}q^j)\right]\ ,\ee \be \theta_1(v,\tau)=2q^{1/8}\cos(\pi
v)
 \prod_{j=1}^\infty\left[(1-q^j)(1+e^{2\pi \sqrt{-1}v}q^j)
 (1+e^{-2\pi \sqrt{-1}v}q^j)\right]\ ,\ee
\be \theta_2(v,\tau)=\prod_{j=1}^\infty\left[(1-q^j)
 (1-e^{2\pi \sqrt{-1}v}q^{j-1/2})(1-e^{-2\pi \sqrt{-1}v}q^{j-1/2})\right]\
 ,\ee
\be \theta_3(v,\tau)=\prod_{j=1}^\infty\left[(1-q^j) (1+e^{2\pi
\sqrt{-1}v}q^{j-1/2})(1+e^{-2\pi \sqrt{-1}v}q^{j-1/2})\right]\ .\ee
They are all holomorphic functions for $(v,\tau)\in \mathbf{C \times
H}$, where $\mathbf{C}$ is the complex plane and $\mathbf{H}$ is the
upper half plane.

Define two $q$-series (see Section 3 for details) \be
\delta_2(\tau)=-\frac{1}{8}(\theta_1(0,\tau)^4+\theta_3(0,\tau)^4),
\ \ \ \ \varepsilon_2(\tau)=\frac{1}{16}\theta_1(0,\tau)^4
\theta_3(0,\tau)^4.\ee  They have the following Fourier expansions
in $q^{1/2}$: $$\delta_2(\tau)=-{1\over 8}-3q^{1/2}-3q+\cdots,\ \ \
\ \varepsilon_2(\tau)=q^{1/2}+8q+\cdots.$$

When the dimension of the fiber is $8m+1, 8m+2$ or $8m+3$, define
virtual complex vector bundles $b_r(T_\CC Z)$ on $M$, $0\leq r \leq
m,$ via the equality \be \Theta_2(T_\CC Z)\equiv \sum_{r=0}^m
b_r(T_\CC Z)(8\delta_2)^{2m+1-2r}\varepsilon_2^r\  \ \ \
\mathrm{mod}\, q^{\frac{m+1}{2}}\cdot K(M)[[q^{\frac{1}{2}}]].
\ee

When the dimension of the fiber is $8m-1, 8m-2$ or $8m-3$, define
virtual complex vector bundles $z_r(T_\CC Z)$ on $M$, $0\leq r \leq
m,$ via the equality \be \Theta_2(T_\CC Z)\equiv \sum_{r=0}^m
z_r(T_\CC Z)(8\delta_2)^{2m-2r}\varepsilon_2^r\ \ \ \ \
\mathrm{mod}\, q^{\frac{m+1}{2}}\cdot
K(M)[[q^{\frac{1}{2}}]].\ee

It's not hard to show that each $b_r(T_\CC Z), 0\leq r \leq m$, is a
canonical linear combination of $B_j(T_\CC Z), 0\leq j\leq r.$ This
is also true for $z_r(T_\CC Z)$'s. These $b_r(T_\CC Z)$'s and
$z_r(T_\CC Z)$'s carry canonically induced metrics and connections.

From (2.14) and (2.15), it's not hard to calculate that
\be b_0(T_\CC Z)=-\mathbf{C}, \ b_1(T_\CC Z)=T_\CC Z+\mathbf{C}^{24(2m+1)-\mathrm{dim}Z}\ee and
\be z_0(T_\CC Z)=\mathbf{C},\ z_1(T_\CC Z)=-T_\CC Z-\mathbf{C}^{48m-\mathrm{dim}Z}.\ee

\subsection{Determinant Line Bundles and Anomaly Cancellation Formulas}

Suppose the dimension of the fiber is $2n$ and the dimension of the
base $Y$ is $p$. Assume that $TZ$ is oriented. Let $T^*Z$ be the
dual bundle of $TZ$.

Let $E=\overset{2n}{\underset{i=0}{\oplus}} E^i$ be the smooth
infinite-dimensional $\mathbf{Z}$-graded vector bundle over $Y$
whose fibre over $y\in Y$ is $C^\infty(Z_y, \Lambda_\CC
(T^*Z)|_{Z_y})$, i.e.
$$C^\infty(Y, E^i)=C^\infty(M, \Lambda_\CC (T^*Z)), $$ where $\Lambda_\CC (T^*Z)$ is the complexified
exterior algebra bundle of $TZ$.

For $X\in TZ$, let $c(X), \widehat{c}(X)$ be the Clifford actions on
$\Lambda_\CC (T^*Z)$ defined by $c(X)=X^*-i_X, \widehat{c}(X)=X^*+i_X$, where $X^*\in T^*Z$ corresponds to $X$ via
$g^Z$.

Let $\{e_1,e_2, \cdots, e_{2n}\}$ be an oriented orthogonal basis of $TZ$. Set
$$\Omega=(\sqrt{-1})^{n}c(e_1)\cdots c(e_{2n}).$$ Then $\Omega$ is a self-adjoint element acting
on $\Lambda_\CC (T^*Z)$ such that $\Omega^2=\mathrm{Id}|_{\Lambda_\CC(T^*Z)}$.

Let $dv_Z$  be the Riemannian volume form on fibers $Z$ associated
to the metric $g^{Z}$ ($dv_Z$ is actually a section of
$\Lambda_\CC^{\mathrm{dim}Z}(T^*Z)$). Let $\langle \ , \
\rangle_{\Lambda_\CC (T^*Z)}$ be metric on $\Lambda_\CC(T^*Z)$
induced by $g^{Z}$. Then $E$ has a Hermitian metric $h^E$ such that
for $\alpha, \alpha'\in C^\infty(Y, E)$ and $y\in Y$,
$$\langle \alpha, \alpha'\rangle_{h^E}(y)=\int_{Z_y}\langle \alpha,
 \alpha'\rangle_{\Lambda_\CC (T^*Z)}\,dv_{Z_y}.$$

Let $d^Z$ be the exterior differentiation along fibers. $d^Z$ can be
considered as an element of $C^\infty(Y, \mathrm{Hom}(E^\bullet,
E^{\bullet+1}))$. Let $d^{Z\ast}$ be the formal adjoint of $d^Z$
with respect to the inner product $\langle\ ,\ \rangle_{h^E}$.
Define the family signature operator (c.f.\cite{MZ1}) $D^Z_{sig}$ to be
\be D^Z_{sig}=d^Z+d^{Z\ast}: C^\infty(M, \Lambda_\CC(T^*Z))\M
C^\infty(M, \Lambda_\CC(T^*Z)).\ee The $\mathbf{Z}_2$-grading of
$D^Z_{sig}$ is given by the $+1$ and $-1$ eigenbundles of $\Omega$.
Clearly, for each $y\in Y$,  $$(D^Z_{sig})_y: C^\infty(Z_y,
\Lambda_\CC(T^*Z)|_y)\M C^\infty(Z_y, \Lambda_\CC(T^*Z)|_y)$$ is the
signature operator for the fiber $Z_y$.

Further assume that $TZ$ is spin. Following \cite{B1}, the family Dirac operators are defined as follows.

Let $O$ be the $SO(2n)$ bundle of oriented orthogonal frames in
$TZ$. Since $TZ$ is spin, the $SO(2n)$ bundle
$\xymatrix@C=0.5cm{O\ar[r]_{\varrho}& M}$ lifts to a $Spin(2n)$
bundle $$\xymatrix@C=0.5cm{O'\ar[r]_{\sigma}& O\ar[r]_{\varrho}&
M}$$ such that $\sigma$ induces the covering projection $Spin(2n)\M
SO(2n)$ on each fiber. Assume $F, F_{\pm}$ denote the Hermitian
bundles of spinors
$$F=O'\times_{Spin(2n)}S_{2n}, \ F_{\pm}=O'\times_{Spin(2n)}S_{\pm, {2n}},$$ where
$S_{2n}=S_{+, {2n}}\oplus S_{-, {2n}}$ is the space of complex
spinors. The connection $\nabla^Z$ on $O$ lifts to a connection on
$O'$. $F, F_{\pm}$ are then naturally endowed with a unitary
connection, which we simply denote by $\nabla$.

Let $V$ be a $l$-dimensional complex Hermitian bundle on $M$. Assume
that $V$ is endowed with a unitary connection $\nabla^V$ whose
curvature is $R^V$. The Hermitian bundle $F\otimes V$ is naturally
endowed with a unitary connection which we still denote by $\nabla$.

Let $H^\infty, H_{\pm}^\infty$ be the sets of $C^\infty$ sections of $F\otimes V, F_{\pm}\otimes V$ over $M$. $H^\infty, H^\infty_{\pm}$ are viewed as the sets of $C^\infty$ sections over $Y$ of infinite dimensional bundles which are still denoted by $H^\infty, H^\infty_{\pm}$. For $y\in Y$, $H^\infty_y, H^\infty_{y, \pm}$ are the sets of $C^\infty$ sections over $Z_y$ of $F\otimes V, F_{\pm}\otimes V$.

The elements of $TZ$ acts by Clifford multiplication on $F\otimes
V$. Suppose $\{e_1, e_2,\cdots e_{2n}\}$ is a local orthogonal basis
of $TZ$. Define the family Dirac operator twisted by $V$ to be
$D^{Z}\otimes V=\overset{2n}{\underset{i=1}{\sum}}
e_i\nabla_{e_i}$. Let $(D^{Z}\otimes V)_{\pm}$ denote the
restriction of $D^{Z}\otimes V$ to $H^\infty_{\pm}$. For each $y\in
Y$, \be (D^{Z}\otimes V)_y=\left[\begin{array}{cc}
      0&(D^{Z}\otimes V)_y,_-\\
      (D^{Z}\otimes V)_y,_+&0
\end{array}\right]\in \mathrm{End}^{odd}(H^\infty_{y, +}\oplus H^\infty_{y, -})\ee is the twisted
Dirac operator on the fiber $Z_y$.

The family signature operator is a twisted
family Dirac operator. Actually, we have $D^Z_{sig}=D^{Z}\otimes F$
(cf. \cite{BGV}).

Let $\mathcal{L}_{D^{Z}\otimes
V}=\mathrm{det}(\mathrm{Ker}(D^{Z}\otimes V)_+)^*\otimes
 \mathrm{det}(\mathrm{Ker}(D^{Z}\otimes V)_-)$ be the determinant
line bundle of the family operator $D^{Z}\otimes V$ over $Y$
(\cite{Q, BF1}). The nontriviality of $\mathcal{L}_{D^{Z}\otimes V}$
is certain anomaly in physics. 

The determinant line bundle carries the Quillen metric
$g^{\mathcal{L}_{D^{Z}\otimes V}}$ as well as the Bismut-Freed
connection $\nabla^{\mathcal{L}_{D^{Z}\otimes V}}$ compatible to
$g^{\mathcal{L}_{D^{Z}\otimes V}}$,  the curvature
$R^{\mathcal{L}_{D^{Z}\otimes V}}$ of which is equal to the two-form
component of the Atiyah-Singer families index theorem \cite{BF1,
BF2}. $\frac{1}{2\pi}R^{\mathcal{L}_{D^{Z}\otimes V}}$ is a
representative of the local anomaly.

For the global anomaly, in \cite{BF2}, Bismut and Freed give a
heat equation proof of the holonomy theorem in the form suggested by
Witten in \cite{W}. To detect information for the integral first
Chern class of $\mathcal{L}_{D^{Z}\otimes V}$,  Freed uses
$\mathbf{Z}/k$ manifolds in \cite{F}. $\mathbf{Z}/k$ manifold is introduced
by Sullivan in his studies of geometric topology.  A closed
$\mathbf{Z}/k$ manifold (c.f. \cite{F})  consists of (1)   a compact
manifold $Q$ with boundary; (2)  a closed manifold $P$; (3) a
decomposition $\partial
Q={\underset{i=1}{\overset{k}\coprod}}(\partial Q)_i$ of the
boundary of $Q$ into $k$ disjoint manifolds and diffeomorphisms
$\alpha_i: P\rightarrow (\partial Q)_i$. The  identification space
$\overline{Q}$, formed by attaching $Q$ to $P$ by $\alpha_i$ is more
properly called $\mathbf{Z}/k$ manifolds.  $\overline{Q}$ is
singular at identification points.  If $Q$ and $P$ are compatibly
oriented, then  $\overline{Q}$ carries a fundamental class
$[\overline{Q}]\in H_*(\overline{Q}, \mathbf{Z}/k)$. In \cite{F}, the first Chern class of the determinant line bundle over
$\overline{\Sigma}\rightarrow Y$ is evaluated  for all $\mathrm{Z}/k$ surfaces and
all maps to detect the rest information other than the real
information.

For local anomalies, we have the following cancellation formula formulas.
\begin{theorem}If the fiber is $8m+2$ dimensional, then the following local
anomaly cancellation formula holds,
\be
R^{\mathcal{L}_{D^Z_{sig}}}-8\sum_{r=0}^{m}2^{6m-6r}R^{\mathcal{L}_{D^Z\otimes
b_r(T_\CC Z)}}=0.\ee \end{theorem}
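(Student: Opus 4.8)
The plan is to convert the stated identity of Bismut--Freed curvatures into an identity of characteristic forms on $M$ and then to obtain the latter from the ring structure of modular forms, in the spirit of combining the family index theorem with the modular invariance of the elliptic characteristic forms attached to $\Theta_1(T_\CC Z)$ and $\Theta_2(T_\CC Z)$.

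First I would use the family index theorem of Bismut and Freed. Since the curvature of the Bismut--Freed connection is the two-form component of the family index density, for every auxiliary bundle $V$ one has
\be
R^{\mathcal{L}_{D^Z\otimes V}}=c\left\{\int_Z\widehat{A}(TZ,\nabla^Z)\,\mathrm{ch}(V,\nabla^V)\right\}^{(2)},
\ee
where $c$ is a universal nonzero constant independent of $V$, $\{\cdot\}^{(2)}$ is the degree-two part on $Y$, and $\int_Z$ is integration over the fiber. Because $D^Z_{sig}=D^Z\otimes F$, applying this to $V=F$ and to $V=b_r(T_\CC Z)$ reduces the theorem to a single identity of characteristic forms on $M$ in total degree $8m+4$, between the signature density and $8\sum_{r}2^{6m-6r}\,\widehat{A}(TZ)\mathrm{ch}(b_r(T_\CC Z))$; integrating over $Z$ then restores the curvatures.

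Next I would package the two sides into modular objects. Set $P_1(\tau)=\{\widehat{L}(TZ,\nabla^Z)\mathrm{ch}(\Theta_1(T_\CC Z),\nabla^{\Theta_1})\}$ and $P_2(\tau)=\{\widehat{A}(TZ,\nabla^Z)\mathrm{ch}(\Theta_2(T_\CC Z),\nabla^{\Theta_2})\}$, whose $q^{1/2}$-coefficients are, by (2.7)--(2.8) and the family index theorem, built from the curvatures $R^{\mathcal{L}_{D^Z\otimes A_j}}$ and $R^{\mathcal{L}_{D^Z\otimes B_j}}$; the constant term of $P_1$ is the signature density, while (2.14) records $P_2$ in terms of the $b_r$. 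Writing $P_1$ and $P_2$ through the Jacobi theta functions (2.9)--(2.12) — $P_1$ via $\theta_1$, $P_2$ via $\theta_2$ — I would apply the transformation laws under $\tau\mapsto\tau+1$ (which interchanges $\theta_2$ and $\theta_3$) and $\tau\mapsto-1/\tau$ (which interchanges $\theta_1$ and $\theta_2$ and fixes $\theta_3$). The decisive point is that for these particular symmetric and exterior products the exponential anomaly factors $e^{\pi\sqrt{-1}\tau x^2}$ cancel, so that the degree-$(8m+4)$ component of $P_1$ is a modular form of weight $4m+2$ over $\Gamma_0(2)$, that of $P_2$ over $\Gamma^0(2)$, and $S:\tau\mapsto-1/\tau$ carries $P_2$ to $P_1$ up to an explicit automorphy factor and powers of $2$.

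The core is then the structure theory of modular forms: the ring over $\Gamma^0(2)$ is $\mathbf{C}[\delta_2,\varepsilon_2]$ with $\delta_2,\varepsilon_2$ of weights $2$ and $4$, so the weight-$(4m+2)$ forms are spanned by the $m+1$ monomials $\delta_2^{2m+1-2r}\varepsilon_2^{r}$, $0\le r\le m$. Hence the congruence (2.14), imposed only $\bmod\,q^{(m+1)/2}$, in fact determines the $b_r$ uniquely and upgrades to an exact identity for the degree-$(8m+4)$ part of $P_2$. Substituting the $S$-images $\delta_2(-1/\tau)=\tau^2\delta_1(\tau)$, $\varepsilon_2(-1/\tau)=\tau^4\varepsilon_1(\tau)$ together with the $P_1\leftrightarrow P_2$ relation recasts this as an expansion of $P_1$ in the $\Gamma_0(2)$-basis $\{\delta_1,\varepsilon_1\}$; comparing the two descriptions of $P_1$ — one whose leading behaviour is the signature density, the other carrying the $\widehat{A}\mathrm{ch}(b_r)$ — and using the values of the theta constants $\theta_i(0,\tau)$ (equivalently the cusp behaviour of $\delta_i,\varepsilon_i$) together with the automorphy factor should produce exactly the numbers $8$ and $2^{6m-6r}$, giving the degree-$(8m+4)$ form identity; fiber integration and Bismut--Freed then return the curvature identity.

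I expect the main obstacle to be the modular step together with the constant bookkeeping: establishing the exact transformation of $P_1$ and $P_2$ — in particular the cancellation of the exponential factors that would otherwise destroy modularity — and then tracking every power of $2$ (from the interchange $\theta_1\leftrightarrow\theta_2$, from the normalizations relating the theta-products to $\widehat{L}$ and $\widehat{A}$, and from the values of $\delta_i,\varepsilon_i$ at the cusps) so that the coefficients emerge as precisely $8\cdot2^{6m-6r}$ rather than some other powers.
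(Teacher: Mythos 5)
Your proposal is correct and takes essentially the same route as the paper's proof: reduce via the Bismut--Freed curvature theorem to a form-level identity, show $P_1$ is modular of weight $4m+2$ over $\Gamma_0(2)$ and $P_2$ over $\Gamma^0(2)$ using the theta transformation laws, expand $P_2$ in the basis $(8\delta_2)^{2m+1-2r}\varepsilon_2^r$ via $\mathcal{M}_\mathbf{R}(\Gamma^0(2))=\mathbf{R}[\delta_2,\varepsilon_2]$, transport by $S$ to a $(\delta_1,\varepsilon_1)$-expansion of $P_1$, and compare constant terms to obtain $\{\widehat{L}(TZ,\nabla^Z)\}^{(8m+4)}=8\sum_{r=0}^{m}2^{6m-6r}\{\widehat{A}(TZ,\nabla^Z)\mathrm{ch}(b_r(T_\CC Z))\}^{(8m+4)}$ before integrating over the fiber. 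The two points you single out as delicate are exactly how the paper handles them: the $e^{\pi\sqrt{-1}\tau v^2}$ anomaly factors cancel in the ratios appearing in the Chern--Weil expressions of $P_1$ and $P_2$ (giving the relation $P_1(\nabla^Z,-1/\tau)=2^{4m+2}\tau^{4m+2}P_2(\nabla^Z,\tau)$ of Proposition 3.2.1), and the triangular $q^{1/2}$-expansions of the monomials in $\delta_2,\varepsilon_2$ are what make the congruence (2.14) pin down the $b_r$ and yield the exact identity (3.12).
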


\begin{theorem} If the fiber be $8m-2$ dimensional, then the following local anomaly
 cancellation formula holds,
\be
R^{\mathcal{L}_{D^Z_{sig}}}-\sum_{r=0}^{m}2^{6m-6r}R^{\mathcal{L}_{D^Z\otimes
z_r(T_\CC Z)}}=0.\ee \end{theorem}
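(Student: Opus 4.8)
The plan is to reduce the statement to an identity among Chern--Weil differential forms, and then to extract that identity from the modular invariance of two naturally associated $q$-series of forms. First I would invoke the Bismut--Freed family index theorem: for any tangent-twisted family Dirac operator $D^Z\otimes V$, the curvature of the Bismut--Freed connection on $\mathcal{L}_{D^Z\otimes V}$ equals $2\pi\sqrt{-1}$ times the degree-two component of the fibre integral $\int_Z \widehat{A}(TZ,\nabla^Z)\,\mathrm{ch}(V,\nabla^V)$, as an honest differential form on $Y$. Since $D^Z_{sig}=D^Z\otimes F$ and $\widehat{A}(TZ,\nabla^Z)\,\mathrm{ch}(F,\nabla^F)=\widehat{L}(TZ,\nabla^Z)$ for the suitably normalized Hirzebruch form, the theorem becomes the form-level identity obtained after dividing out $2\pi\sqrt{-1}$. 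Everything is thus reduced to comparing $\{\int_Z \widehat{L}(TZ)\}^{(2)}$ with the forms $h_r:=\{\int_Z \widehat{A}(TZ)\,\mathrm{ch}(z_r(T_\CC Z))\}^{(2)}$, where $\{\ \}^{(2)}$ denotes the degree-two part on $Y$.

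The engine is to introduce the two form-valued $q$-series
\be P_1(\tau)=\left\{\int_Z \widehat{L}(TZ,\nabla^Z)\,\mathrm{ch}\big(\Theta_1(T_\CC Z),\nabla^{\Theta_1}\big)\right\}^{(2)},\ee
\be P_2(\tau)=\left\{\int_Z \widehat{A}(TZ,\nabla^Z)\,\mathrm{ch}\big(\Theta_2(T_\CC Z),\nabla^{\Theta_2}\big)\right\}^{(2)}.\ee
Because $A_0(T_\CC Z)=\CC$, the constant term $P_1(\tau)|_{q^0}$ is exactly the normalized signature curvature $\{\int_Z\widehat{L}(TZ)\}^{(2)}$. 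The heart of the argument --- and the step I expect to be the main obstacle --- is to prove, via the transformation laws (2.9)--(2.12) of the Jacobi theta functions under the generators of $SL(2,\mathbf{Z})$, that $P_1$ is a modular form of weight $4m$ over $\Gamma_0(2)$, that $P_2$ is a modular form of weight $4m$ over $\Gamma^0(2)$, and that they are interchanged under $S:\tau\mapsto -1/\tau$ by an explicit automorphy factor $P_2(-1/\tau)=\kappa\,\tau^{4m}P_1(\tau)$; the delicate point is pinning down $\kappa$ (the precise powers of $2$ and of $\sqrt{-1}$) coming from the $\widehat{L}$-versus-$\widehat{A}$ normalization. Crucially, this modularity is established at the level of differential forms, so the resulting identity will hold for curvatures and not merely in cohomology.

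With modularity in hand the rest is routine. Since $\{(8\delta_2)^{2m-2r}\varepsilon_2^r\}_{r=0}^m$ is a basis of the weight-$4m$ modular forms over $\Gamma^0(2)$, such a form is determined by its Fourier coefficients modulo $q^{(m+1)/2}$; hence applying $\{\int_Z\widehat{A}(TZ)\,\mathrm{ch}(\cdot)\}^{(2)}$ to the defining congruence (2.15) and upgrading it by modularity yields the \emph{exact} identity $P_2(\tau)=\sum_{r=0}^m h_r\,(8\delta_2)^{2m-2r}\varepsilon_2^r$. Applying $S$ and using $\delta_2(-1/\tau)=\tau^2\delta_1(\tau)$ and $\varepsilon_2(-1/\tau)=\tau^4\varepsilon_1(\tau)$ gives $P_1(\tau)=\kappa^{-1}\sum_{r=0}^m h_r\,(8\delta_1)^{2m-2r}\varepsilon_1^r$.

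Finally I would read off the $q^0$ term of both sides. On the left it is the normalized signature curvature; on the right, using $8\delta_1|_{q^0}=2$ and $\varepsilon_1|_{q^0}=\frac{1}{16}$, the $q^0$ coefficient of $(8\delta_1)^{2m-2r}\varepsilon_1^r$ is $2^{2m-2r}\cdot 2^{-4r}=2^{2m-6r}$, so that
\be \left\{\int_Z\widehat{L}(TZ)\right\}^{(2)}=\kappa^{-1}\sum_{r=0}^m 2^{2m-6r}\,h_r.\ee
Since the automorphy computation delivers $\kappa^{-1}=2^{4m}$, the right-hand side is exactly $\sum_{r=0}^m 2^{6m-6r}h_r$. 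Multiplying through by $2\pi\sqrt{-1}$ and recalling the index-theoretic identifications of $h_r$ and of the signature curvature recovers the claimed cancellation $R^{\mathcal{L}_{D^Z_{sig}}}-\sum_{r=0}^m 2^{6m-6r}R^{\mathcal{L}_{D^Z\otimes z_r(T_\CC Z)}}=0$.
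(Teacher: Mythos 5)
Your proposal is correct and takes essentially the same route as the paper's proof: the paper defines $Q_1,Q_2$ as the $(8m)$-degree components of $\widehat{L}(TZ,\nabla^Z)\,\mathrm{ch}(\Theta_1)$ and $\widehat{A}(TZ,\nabla^Z)\,\mathrm{ch}(\Theta_2)$, proves weight-$4m$ modularity over $\Gamma_0(2)$ and $\Gamma^0(2)$ with the transformation $Q_1(\nabla^Z,-1/\tau)=2^{4m}\tau^{4m}Q_2(\nabla^Z,\tau)$ (your $\kappa^{-1}=2^{4m}$ is exactly this factor), expands $Q_2$ in the basis $(8\delta_2)^{2m-2r}\varepsilon_2^r$ via $\mathcal{M}_{\mathbf{R}}(\Gamma^0(2))=\mathbf{R}[\delta_2,\varepsilon_2]$, compares constant terms to get $\{\widehat{L}(TZ,\nabla^Z)\}^{(8m)}=\sum_{r=0}^m 2^{6m-6r}\{\widehat{A}(TZ,\nabla^Z)\,\mathrm{ch}(z_r(T_\CC Z))\}^{(8m)}$, and then concludes by fiber integration and the Bismut--Freed curvature theorem. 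The only cosmetic difference is that you integrate along the fiber before invoking modularity, while the paper works with forms on $M$ and integrates at the end; the two are equivalent since fiber integration commutes with the $q$-expansion and picks out the same degree.
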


For global anomalies, we have the following cancellation formulas concerning the holonomies.

\begin{theorem} If the fiber is $8m+2$ dimensional, $(\Sigma, S)$ is a $\mathbf{Z}/k$ surface and
$f:\overline{\Sigma}\to Y$ is a map, then
\be
\begin{split}& \frac{\sqrt{-1}}{2\pi}\mathrm{ln
hol}_{\mathcal{L}_{D^Z_{sig}}}(S)-8\sum_{r=0}^{m}2^{6m-6r}\frac{\sqrt{-1}}{2\pi}\mathrm{ln
hol}_{\mathcal{L}_{D^Z\otimes
b_r(T_\CC Z)}}(S)\\
\equiv & c_1(f^*(\mathcal{L}_{D^Z_{sig}}))[\overline{\Sigma}]
-8\sum_{r=0}^{m}2^{6m-6r}c_1\left(f^*\left(\mathcal{L}_{D^Z\otimes
b_r(T_\CC Z)}\right)\right)[\overline{\Sigma}]\ \
\mathrm{mod}\  1,\end{split}\ee where we view
$\mathbf{Z}/k\cong\mathbf{Z}\left[1/k \right]/\mathbf{Z}\subset
\mathbf{Q}/\mathbf{Z}.$
\end{theorem}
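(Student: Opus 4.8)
The plan is to reduce the global statement to two ingredients that are already available: the local cancellation of Theorem 2.2.1 and Freed's evaluation of the first Chern class of a determinant line bundle over a $\mathbf{Z}/k$-surface \cite{F}. The guiding idea is that for a single determinant line bundle the three quantities entering the problem---the holonomy around the torsion loop $S$, the first Chern number on the fundamental class $[\overline{\Sigma}]$, and the integral of the first Chern form over the smooth surface-with-boundary underlying $\overline{\Sigma}$---are tied together by a single congruence modulo $1$; after forming the correct integer combination the curvature contribution becomes the integral of precisely the form that Theorem 2.2.1 shows to vanish identically.

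First I would invoke Freed's formula \cite{F}, which refines the Bismut--Freed holonomy theorem \cite{BF2}. For the determinant line bundle $\mathcal{L}$ of a family of (twisted) Dirac operators over $Y$, carrying the Quillen metric and the Bismut--Freed connection, and for a $\mathbf{Z}/k$-surface $(\Sigma,S)$ together with $f:\overline{\Sigma}\to Y$, it gives
$$\frac{\sqrt{-1}}{2\pi}\ln\mathrm{hol}_{\mathcal{L}}(S)\equiv c_1(f^*\mathcal{L})[\overline{\Sigma}]-\frac{\sqrt{-1}}{2\pi}\int_{\Sigma}f^*R^{\mathcal{L}}\ \ \mathrm{mod}\ 1,$$
where $\frac{\sqrt{-1}}{2\pi}R^{\mathcal{L}}$ is the first Chern form, the integral is over the compact surface underlying $\overline{\Sigma}$, and the identification $\mathbf{Z}/k\cong\mathbf{Z}[1/k]/\mathbf{Z}\subset\mathbf{Q}/\mathbf{Z}$ is the one in the statement. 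I would apply this to $\mathcal{L}_{D^Z_{sig}}$, which is legitimate because $D^Z_{sig}=D^Z\otimes F$ is an honest twisted Dirac operator, and to each $\mathcal{L}_{D^Z\otimes b_r(T_\CC Z)}$. Since $b_r(T_\CC Z)$ is only a virtual bundle, namely a $\mathbf{Z}$-linear combination of the genuine bundles $B_j(T_\CC Z)$, I would first observe that holonomy, first Chern number and first Chern form are all additive under the formal tensor and dual operations defining $\mathcal{L}_{D^Z\otimes b_r(T_\CC Z)}$, so that the displayed congruence extends verbatim to these virtual determinant lines.

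Finally I would take the integer combination $(\mathrm{signature})-8\sum_{r=0}^{m}2^{6m-6r}(b_r)$ of the congruences above. The holonomy terms assemble into the left-hand side of the asserted formula, the Chern-number terms into the right-hand side, and the leftover curvature contribution is
$$-\frac{\sqrt{-1}}{2\pi}\int_{\Sigma}f^*\left(R^{\mathcal{L}_{D^Z_{sig}}}-8\sum_{r=0}^{m}2^{6m-6r}R^{\mathcal{L}_{D^Z\otimes b_r(T_\CC Z)}}\right),$$
whose integrand vanishes identically by Theorem 2.2.1. Hence this term is $0$ and the desired congruence follows. The only substantial analytic input, the cancellation of curvatures, is already established in Theorem 2.2.1, so here there is no serious conceptual obstacle; the point requiring care is purely formal, namely checking that Freed's formula---originally proved for a single family of honest Dirac operators---propagates through the virtual bundles $b_r(T_\CC Z)$ and that the sign and the embedding $\mathbf{Z}/k\hookrightarrow\mathbf{Q}/\mathbf{Z}$ remain consistent when the integer combination is taken, the curvature term being in any case annihilated by Theorem 2.2.1 irrespective of its sign.
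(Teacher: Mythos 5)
Your proposal is correct and follows essentially the same route as the paper: apply Freed's $\mathbf{Z}/k$-surface formula (Theorem 3.2.2) to $\mathcal{L}_{D^Z_{sig}}$ and to each $\mathcal{L}_{D^Z\otimes b_r(T_\CC Z)}$, take the integer combination $1-8\sum_{r=0}^{m}2^{6m-6r}(\cdot)$, and kill the remaining curvature/characteristic-form integral by the local cancellation of Theorem 2.2.1 (equivalently the fiber-integrated identity (3.17)). Your restatement of Freed's formula drops the $\frac{1}{k}$ factor and has slightly different sign/normalization conventions than the paper's (3.16), but as you note this is immaterial since the combined integrand vanishes identically; your explicit remark that the formula extends additively to the virtual bundles $b_r(T_\CC Z)$ is a point the paper leaves implicit.
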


\begin{theorem} If the fiber is $8m-2$ dimensional, $(\Sigma, S)$ is a $\mathbf{Z}/k$ surface and
 $f:\overline{\Sigma}\to Y$ is a map, then
\be
\begin{split}& \frac{\sqrt{-1}}{2\pi}\mathrm{ln
hol}_{\mathcal{L}_{D^Z_{sig}}}(S)-\sum_{r=0}^{m}2^{6m-6r}\frac{\sqrt{-1}}{2\pi}\mathrm{ln
hol}_{\mathcal{L}_{D^Z\otimes
z_r(T_\CC Z)}}(S)\\
\equiv & c_1(f^*(\mathcal{L}_{D^Z_{sig}}))[\overline{\Sigma}]
-\sum_{r=0}^{m}2^{6m-6r}c_1\left(f^*\left(\mathcal{L}_{D^Z\otimes
z_r(T_\CC Z)}\right)\right)[\overline{\Sigma}]\ \
\mathrm{mod}\  1,\end{split}\ee where we view
$\mathbf{Z}/k\cong\mathbf{Z}\left[1/k \right]/\mathbf{Z}\subset
\mathbf{Q}/\mathbf{Z}.$
\end{theorem}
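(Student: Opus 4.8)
The plan is to reduce the statement to the \emph{form-level} curvature cancellation already established in Theorem 2.2.2, using the holonomy theorem of Bismut--Freed \cite{BF2} together with Freed's $\mathbf{Z}/k$-manifold formalism \cite{F}. The argument runs in complete parallel to the proof of Theorem 2.2.3, with the virtual bundles $z_r(T_\CC Z)$ and the coefficients $2^{6m-6r}$ coming from (2.15) taking the place of the $b_r(T_\CC Z)$ data from (2.14). Throughout, $\mathbf{Z}/k$ is identified with $\mathbf{Z}[1/k]/\mathbf{Z}\subset\mathbf{Q}/\mathbf{Z}$ as in the statement.

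First I would recall the relevant output of Freed's analysis. For the determinant line bundle $\mathcal{L}$ of a family of Dirac-type operators, pulled back by $f$ to the $\mathbf{Z}/k$ surface $\overline{\Sigma}$, the Bismut--Freed holonomy theorem identifies $\mathrm{hol}_{\mathcal{L}}(S)$ with an exponentiated reduced $\eta$-invariant of the odd-dimensional total space lying over the torsion loop $S$, while Freed's treatment of $\mathbf{Z}/k$ surfaces expresses the mod-$1$ defect between this normalized log-holonomy and the integral Chern-class pairing $c_1(f^*\mathcal{L})[\overline{\Sigma}]$ as a Chern--Weil curvature integral over the surface-with-boundary $\Sigma$. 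Concretely, for each individual determinant line bundle one obtains a relation of the form
\be
\frac{\sqrt{-1}}{2\pi}\mathrm{ln\, hol}_{\mathcal{L}}(S)-c_1(f^*\mathcal{L})[\overline{\Sigma}]\equiv \frac{\sqrt{-1}}{2\pi}\int_{\Sigma} f^*R^{\mathcal{L}}\pmod{1},
\ee
which is exactly the mechanism exploited in the proof of Theorem 2.2.3.

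Next I would apply this relation to $\mathcal{L}_{D^Z_{sig}}$ and to each $\mathcal{L}_{D^Z\otimes z_r(T_\CC Z)}$, and take the linear combination with coefficients $1$ and $-2^{6m-6r}$. Here the key point is that Theorem 2.2.2 asserts the \emph{pointwise} identity of curvature forms
\be
R^{\mathcal{L}_{D^Z_{sig}}}-\sum_{r=0}^{m}2^{6m-6r}R^{\mathcal{L}_{D^Z\otimes z_r(T_\CC Z)}}=0,
\ee
so its pullback by $f$ vanishes identically on $\Sigma$, whence the combined curvature integral over $\Sigma$ is the integral of the zero form and is therefore exactly $0$. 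Substituting this into the linear combination of the Freed relations makes all right-hand sides cancel, leaving precisely the asserted congruence modulo $1$ between the combination of normalized log-holonomies and the combination of Chern-class pairings.

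The main obstacle, I expect, is not the cancellation itself but the careful bookkeeping in the first step: one must pin down the exact normalization and sign in Freed's formula, track that the $k$ boundary components $(\partial\Sigma)_i$ all map to the single loop $S$ under the $\mathbf{Z}/k$ identification (so that the holonomy around $S$ enters with the correct multiplicity relative to the curvature integral over $\Sigma$), and verify that the cancellation in Theorem 2.2.2 is genuinely an equality of differential forms rather than merely of de Rham classes. This last point is exactly what guarantees that the boundary curvature integral over $\Sigma$ vanishes on the nose, and not just modulo integers, which is what the mod-$1$ statement ultimately requires.
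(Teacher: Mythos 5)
Your proposal is correct and follows essentially the same route as the paper: the paper proves this theorem by combining Freed's $\mathbf{Z}/k$-surface formula (its Theorem 3.2.2, which carries an extra $\frac{1}{k}$ in front of the curvature-type integral, a normalization you rightly flag as needing care but which is immaterial once the integrand vanishes) with the two-form component of the modular-invariance identity $\{\widehat{L}(TZ,\nabla^Z)\}^{(8m)}=\sum_{r=0}^{m}2^{6m-6r}\{\widehat{A}(TZ,\nabla^Z)\mathrm{ch}(z_r(T_\CC Z))\}^{(8m)}$ integrated along the fiber. Your invocation of Theorem 2.2.2 as a pointwise equality of forms is exactly this same cancellation (via the Bismut--Freed curvature theorem), so the two arguments coincide in substance.
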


Putting $m=0$ and $m=1$ in the above theorems  and using (2.16) as well as (2.17), we have
\begin{corollary}If the fiber is $2$ dimensional, then the
following local anomaly cancellation formula holds,
\be
R^{\mathcal{L}_{D^Z_{sig}}}+8R^{\mathcal{L}_{D^Z}}=0.\ee If $(\Sigma, S)$ is a $\mathbf{Z}/k$ surface and
 $f:\overline{\Sigma}\to Y$ is a map, then \be
\begin{split}&\frac{\sqrt{-1}}{2\pi}\mathrm{ln
hol}_{\mathcal{L}_{D^Z_{sig}}}(S)+8\frac{\sqrt{-1}}{2\pi}\mathrm{ln
hol}_{\mathcal{L}_{D^Z}}(S)\\
\equiv &c_1\left(f^*(\mathcal{L}_{D^Z_{sig}})\right)[\overline{\Sigma}]
+8c_1\left(f^*\left(\mathcal{L}_{D^Z}\right)\right)[\overline{\Sigma}] \ \ \ \ \ \ \ \ \
\mathrm{mod}\  1.\end{split}\ee
\end{corollary}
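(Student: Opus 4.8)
The plan is to derive both parts purely by specializing the four preceding theorems to $m=0$ (the fiber dimension $2=8\cdot 0+2$ places us in the $8m+2$ family) and then rewriting the single surviving $r=0$ term by means of the explicit identity $b_0(T_\CC Z)=-\mathbf{C}$ recorded in (2.16).

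First I would set $m=0$ in Theorem 2.2.1. The sum over $r$ collapses to its $r=0$ term, whose coefficient is $8\cdot 2^{6\cdot 0-6\cdot 0}=8$, so the theorem reads
\[
R^{\mathcal{L}_{D^Z_{sig}}}-8\,R^{\mathcal{L}_{D^Z\otimes b_0(T_\CC Z)}}=0.
\]
By (2.16) the twisting bundle is $b_0(T_\CC Z)=-\mathbf{C}$, minus the trivial line bundle. Since the determinant line bundle is additive in the virtual twisting bundle, one has $\mathcal{L}_{D^Z\otimes(-\mathbf{C})}=\mathcal{L}_{D^Z}^{-1}$, where $D^Z\otimes\mathbf{C}=D^Z$ is the untwisted fiberwise Dirac operator. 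Consequently its Bismut-Freed curvature---being the two-form component of the families index, which depends on the twist only through its Chern character---reverses sign, $R^{\mathcal{L}_{D^Z\otimes b_0(T_\CC Z)}}=-R^{\mathcal{L}_{D^Z}}$. Substituting this into the displayed equality yields
\[
R^{\mathcal{L}_{D^Z_{sig}}}+8\,R^{\mathcal{L}_{D^Z}}=0,
\]
which is (2.24).

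For the holonomy statement I would run the identical $m=0$ specialization on Theorem 2.2.3, obtaining
\[
\frac{\sqrt{-1}}{2\pi}\mathrm{ln hol}_{\mathcal{L}_{D^Z_{sig}}}(S)-8\frac{\sqrt{-1}}{2\pi}\mathrm{ln hol}_{\mathcal{L}_{D^Z\otimes b_0(T_\CC Z)}}(S)\equiv c_1\left(f^*\mathcal{L}_{D^Z_{sig}}\right)[\overline{\Sigma}]-8\,c_1\left(f^*\mathcal{L}_{D^Z\otimes b_0(T_\CC Z)}\right)[\overline{\Sigma}]\ \ \mathrm{mod}\ 1.
\]
Again I would insert $b_0(T_\CC Z)=-\mathbf{C}$ and use $\mathcal{L}_{D^Z\otimes(-\mathbf{C})}=\mathcal{L}_{D^Z}^{-1}$: the holonomy of the inverse line bundle is the inverse holonomy, so $\mathrm{ln hol}_{\mathcal{L}_{D^Z}^{-1}}(S)=-\mathrm{ln hol}_{\mathcal{L}_{D^Z}}(S)$, and likewise $c_1(f^*\mathcal{L}_{D^Z}^{-1})=-c_1(f^*\mathcal{L}_{D^Z})$. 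These two sign reversals convert the subtracted $r=0$ terms on both sides into the added terms displayed in (2.25).

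The content here is entirely a specialization, so there is no genuine obstacle; the only point demanding care is the functorial behaviour of the determinant line bundle and of its associated curvature, holonomy, and first Chern class under the replacement of the twisting bundle by its $K$-theoretic negative. Once one records that each of these invariants is additive (respectively multiplicative) in the virtual twist, so that $V\mapsto -V$ reverses every relevant sign, the corollary follows by direct substitution of (2.16) into Theorems 2.2.1 and 2.2.3 at $m=0$.
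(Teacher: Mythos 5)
Your proposal is correct and coincides with the paper's own proof: the paper obtains the corollary exactly by setting $m=0$ in Theorems 2.2.1 and 2.2.3 and substituting $b_0(T_\CC Z)=-\mathbf{C}$ from (2.16), with the sign reversals you describe implicit in the Bismut--Freed curvature formula and Freed's holonomy theorem, since each depends on the twist only through $\mathrm{ch}(V,\nabla^V)$. Your extra care about the functorial behaviour of curvature, holonomy, and $c_1$ under $V\mapsto -V$ is exactly the (unstated) content of the paper's ``using (2.16)'' step.
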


\begin{corollary}If the fiber is $6$ dimensional, then the following local anomaly cancellation formula
holds,
\be
R^{\mathcal{L}_{D^Z_{sig}}}+R^{\mathcal{L}_{D^Z\otimes T_\CC Z}}-22R^{\mathcal{L}_{D^Z}}=0.\ee
 If $(\Sigma, S)$ is a $\mathbf{Z}/k$ surface and  $f:\overline{\Sigma}\to Y$ is a map, then \be
\begin{split}&\frac{\sqrt{-1}}{2\pi}\mathrm{ln
hol}_{\mathcal{L}_{D^Z_{sig}}}(S)+\frac{\sqrt{-1}}{2\pi}\mathrm{ln
hol}_{\mathcal{L}_{D^Z\otimes T_\CC Z}}(S)-22\frac{\sqrt{-1}}{2\pi}\mathrm{ln
hol}_{\mathcal{L}_{D^Z}}(S) \\
\equiv & c_1(f^*(\mathcal{L}_{D^Z_{sig}}))[\overline{\Sigma}]
+c_1\left(f^*\left(\mathcal{L}_{D^Z\otimes T_\CC Z}\right)\right)[\overline{\Sigma}]-22c_1
\left(f^*\left(\mathcal{L}_{D^Z}\right)\right)[\overline{\Sigma}]\ \
\mathrm{mod}\  1.\end{split}\ee
\end{corollary}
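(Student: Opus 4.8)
The plan is to treat this corollary as the $m=1$ specialization of the $8m-2$ dimensional results already established, namely Theorem 2.2.2 (equation (2.21)) for the curvature identity and Theorem 2.2.4 (equation (2.23)) for the holonomy identity, combined with the explicit formulas (2.17) for $z_0(T_\CC Z)$ and $z_1(T_\CC Z)$. The one structural fact I would invoke beyond these theorems is that the Bismut-Freed data attached to the twisted family $D^Z\otimes V$ depends additively on the twisting bundle $V$: since the curvature $R^{\mathcal{L}_{D^Z\otimes V}}$ is, up to a universal constant, the two-form component of $\int_Z \widehat{A}(TZ,\nabla^Z)\,\mathrm{ch}(V,\nabla^V)$ and the Chern character is additive, the rule $V\mapsto R^{\mathcal{L}_{D^Z\otimes V}}$ descends to a homomorphism on formal differences of bundles. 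In particular $R^{\mathcal{L}_{D^Z\otimes \mathbf{C}}}=R^{\mathcal{L}_{D^Z}}$, and for $G=E-F$ one has $R^{\mathcal{L}_{D^Z\otimes(E-F)}}=R^{\mathcal{L}_{D^Z\otimes E}}-R^{\mathcal{L}_{D^Z\otimes F}}$, with the analogous difference behaviour for $\frac{\sqrt{-1}}{2\pi}\mathrm{ln\,hol}$ and for the integral $c_1$-evaluations, because $\mathcal{L}_{D^Z\otimes(E-F)}$ is by definition $\mathcal{L}_{D^Z\otimes E}\otimes\mathcal{L}_{D^Z\otimes F}^{-1}$.

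First I would set $m=1$ in (2.21), so that the dimension $8m-2$ equals $6$ and the sum $\sum_{r=0}^{m}2^{6m-6r}$ reduces to two terms with coefficients $2^{6}=64$ (for $r=0$) and $2^{0}=1$ (for $r=1$); thus Theorem 2.2.2 reads
\be
R^{\mathcal{L}_{D^Z_{sig}}}-64\,R^{\mathcal{L}_{D^Z\otimes z_0(T_\CC Z)}}-R^{\mathcal{L}_{D^Z\otimes z_1(T_\CC Z)}}=0.
\ee
Next I would substitute the values from (2.17) with $m=1$ and $\mathrm{dim}\,Z=6$, namely $z_0(T_\CC Z)=\mathbf{C}$ and $z_1(T_\CC Z)=-T_\CC Z-\mathbf{C}^{42}$. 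Applying the additivity above gives $R^{\mathcal{L}_{D^Z\otimes z_0}}=R^{\mathcal{L}_{D^Z}}$ and $R^{\mathcal{L}_{D^Z\otimes z_1}}=-R^{\mathcal{L}_{D^Z\otimes T_\CC Z}}-42\,R^{\mathcal{L}_{D^Z}}$, where the summand $\mathbf{C}^{42}$ contributes $42$ copies of the untwisted term. Substituting and collecting the coefficient of $R^{\mathcal{L}_{D^Z}}$, which becomes $-64+42=-22$, yields exactly the curvature identity (2.26).

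For the global statement (2.27), I would repeat the identical algebra starting from Theorem 2.2.4 at $m=1$. The same substitution of $z_0$ and $z_1$ applies termwise to each of the three quantities appearing there, the holonomy $\frac{\sqrt{-1}}{2\pi}\mathrm{ln\,hol}$, the pullback first Chern number $c_1(f^*(\cdot))[\overline{\Sigma}]$, and the congruence $\mathrm{mod}\ 1$, because each is additive (respectively multiplicative before taking logarithms) under formal differences of the twisting bundle. Collecting coefficients once more produces the combination $R^{\mathcal{L}_{D^Z_{sig}}}+R^{\mathcal{L}_{D^Z\otimes T_\CC Z}}-22\,R^{\mathcal{L}_{D^Z}}$ at each level, giving (2.27).

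The computation itself is entirely mechanical, so there is no serious obstacle at the level of this corollary; all the analytic and modular content has already been absorbed into Theorems 2.2.2 and 2.2.4. If anything requires care, it is only the verification that the additivity used for the real curvature continues to hold for the holonomy and for the integral $c_1$-evaluation over the $\mathbf{Z}/k$ surface, i.e. that passing to the formal difference $-T_\CC Z-\mathbf{C}^{42}$ commutes with the holonomy and Chern-number functionals; this is immediate from the functorial definition of the determinant line bundle on formal differences, and it is already implicitly used in the statements of the theorems themselves.
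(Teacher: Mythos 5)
Your proposal is correct and is essentially identical to the paper's own proof: the paper obtains this corollary precisely by putting $m=1$ in Theorems 2.2.2 and 2.2.4 and substituting $z_0(T_{\mathbf{C}}Z)=\mathbf{C}$ and $z_1(T_{\mathbf{C}}Z)=-T_{\mathbf{C}}Z-\mathbf{C}^{42}$ from (2.17), relying implicitly on the same additivity of $V\mapsto R^{\mathcal{L}_{D^Z\otimes V}}$ (and of the holonomy and $c_1$-evaluations) under formal differences that you spell out. Your coefficient bookkeeping ($2^6=64$, $48m-\mathrm{dim}\,Z=42$, hence $-64+42=-22$) matches the paper exactly.
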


\begin{corollary}If the fiber is $10$ dimensional, then the the following local anomaly
cancellation formula holds,
\be
R^{\mathcal{L}_{D^Z_{sig}}}-8R^{\mathcal{L}_{D^Z\otimes T_\CC Z}}+16R^{\mathcal{L}_{D^Z}}=0.\ee
 If $(\Sigma, S)$ is a $\mathbf{Z}/k$ surface and  $f:\overline{\Sigma}\to Y$ is a map, then \be
\begin{split}&\frac{\sqrt{-1}}{2\pi}\mathrm{ln
hol}_{\mathcal{L}_{D^Z_{sig}}}(S)-8\frac{\sqrt{-1}}{2\pi}\mathrm{ln
hol}_{\mathcal{L}_{D^Z\otimes T_\CC Z}}(S)+16\frac{\sqrt{-1}}{2\pi}\mathrm{ln
hol}_{\mathcal{L}_{D^Z}}(S) \\
\equiv & c_1(f^*(\mathcal{L}_{D^Z_{sig}}))[\overline{\Sigma}]
-8c_1\left(f^*\left(\mathcal{L}_{D^Z\otimes T_\CC Z}\right)\right)[\overline{\Sigma}]+
16c_1\left(f^*\left(\mathcal{L}_{D^Z}\right)\right)[\overline{\Sigma}]\ \
\mathrm{mod}\  1.\end{split}\ee

\end{corollary}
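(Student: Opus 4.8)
The plan is to obtain both ten-dimensional formulas as the $m=1$ specialization of the $8m+2$ statements, Theorem 2.2.1 (curvature) and Theorem 2.2.3 (holonomy), fed with the explicit bundle identities (2.16). Since $8m+2=10$ forces $m=1$ (the alternative $8m-2=10$ has no integer solution), these two theorems are the only inputs, and the whole argument reduces to substitution and collecting terms. I expect no genuine obstacle beyond keeping the trivial-summand contributions and the powers of two straight.

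First I would set $m=1$, so that the fiber dimension is $8m+2=10$, and expand the sum in Theorem 2.2.1. The index range $0\le r\le 1$ produces the two weights $2^{6m-6r}=2^{6}=64$ at $r=0$ and $2^{0}=1$ at $r=1$, so Theorem 2.2.1 becomes $R^{\mathcal{L}_{D^Z_{sig}}}-8\left(64\,R^{\mathcal{L}_{D^Z\otimes b_0(T_\CC Z)}}+R^{\mathcal{L}_{D^Z\otimes b_1(T_\CC Z)}}\right)=0$.

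Next I would substitute the values furnished by (2.16). With $m=1$ and $\dim Z=10$ one has $b_0(T_\CC Z)=-\mathbf{C}$ and $b_1(T_\CC Z)=T_\CC Z+\mathbf{C}^{62}$, since $24(2m+1)-\dim Z=72-10=62$. The key structural input is that $R^{\mathcal{L}_{D^Z\otimes V}}$ is additive in the twisting class $V$: the determinant line bundle of a direct sum is the tensor product of the determinant line bundles, so its Bismut--Freed curvature, being the two-form component of the family index, depends linearly on $V\in K(M)$, with $D^Z\otimes\mathbf{C}=D^Z$. Hence $R^{\mathcal{L}_{D^Z\otimes b_0}}=-R^{\mathcal{L}_{D^Z}}$ and $R^{\mathcal{L}_{D^Z\otimes b_1}}=R^{\mathcal{L}_{D^Z\otimes T_\CC Z}}+62\,R^{\mathcal{L}_{D^Z}}$. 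Collecting, the $T_\CC Z$-twist produces $-8\,R^{\mathcal{L}_{D^Z\otimes T_\CC Z}}$ while the trivial-bundle contributions combine as $-8(-64+62)R^{\mathcal{L}_{D^Z}}=16\,R^{\mathcal{L}_{D^Z}}$, yielding the first display $R^{\mathcal{L}_{D^Z_{sig}}}-8R^{\mathcal{L}_{D^Z\otimes T_\CC Z}}+16R^{\mathcal{L}_{D^Z}}=0$.

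For the holonomy congruence I would run the identical specialization on Theorem 2.2.3, using that $\frac{\sqrt{-1}}{2\pi}\ln\mathrm{hol}$ and $c_1(f^*(\cdot))[\overline{\Sigma}]$ are likewise additive in $V$: the log-holonomy adds under tensor product of line bundles and the first Chern number is additive on $\mathrm{Pic}$, with negative $K$-theory coefficients handled by passing to dual line bundles. The same integer-coefficient reduction of the $b_r$ terms to the $T_\CC Z$ and trivial twists then applies term by term to both sides of the mod $1$ congruence, and since an integer linear combination preserves congruences modulo $1$, one arrives at the second display. The only point requiring real care, and the step I would verify most closely, is that this additivity over the formal class $V$ is applied consistently to all three quantities (curvature, log-holonomy, first Chern number) and that the dimension count $24(2m+1)-\dim Z$ from (2.16) is inserted correctly; granting this, the cancellation of the spurious factors $2^{6m-6r}$ is automatic.
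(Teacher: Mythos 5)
Your proposal is correct and is exactly the paper's argument: the paper obtains this corollary by putting $m=1$ in Theorems 2.2.1 and 2.2.3 and substituting $b_0(T_\CC Z)=-\mathbf{C}$, $b_1(T_\CC Z)=T_\CC Z+\mathbf{C}^{62}$ from (2.16), with the same weights $2^6=64$ and $2^0=1$ and the same collection of terms $-8(-64+62)=16$. Your explicit remarks on additivity of the curvature, log-holonomy, and first Chern number in the virtual twisting class $V$ merely spell out what the paper leaves implicit, so there is no substantive difference.
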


Our anomaly cancellation formulas actually imply the Alvarez-Gaum\'e
and Witten anomaly cancellation formulas.
\begin{theorem} In dimensions $2, 6, 10$, our anomaly cancellation formulas (2.24),
(2.26) and (2.28) give the gravitational anomaly cancellation formulas (1.1)-(1.3) of
Alvarez-Gaum\'e and Witten.\end{theorem}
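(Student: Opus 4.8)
The plan is to read Theorem 2.2.5 as a translation statement: I will match the three curvature cancellation formulas (2.24), (2.26), (2.28) against the Alvarez-Gaum\'e--Witten identities (1.1)--(1.3) through the standard index-theoretic dictionary for the physical anomalies. The starting point is the Bismut--Freed description already recalled in Section 2.2: for a twisted family Dirac operator $D^Z\otimes V$, the curvature $R^{\mathcal{L}_{D^Z\otimes V}}$ is, up to a universal normalization constant $\kappa$ independent of $V$, the $2$-form component of the fiber integral $\int_Z \widehat{A}(TZ,\nabla^{TZ})\,\mathrm{ch}(V,\nabla^V)$ furnished by the Atiyah--Singer families index theorem; likewise, by the families signature theorem, $R^{\mathcal{L}_{D^Z_{sig}}}$ is $\kappa$ times the $2$-form component of $\int_Z L(TZ,\nabla^{TZ})$, with $L$ the Hirzebruch $L$-form in the signature normalization. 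Since the same $\kappa$ and the same fiber integration enter every term, each homogeneous linear relation among the curvatures is equivalent to the corresponding relation among these characteristic forms, so the constant $\kappa$ (and its sign) will drop out of the final comparison.

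I will then invoke the anomaly polynomials of \cite{AW} (recast index-theoretically in \cite{ASZ}): the spin-$\tfrac12$ anomaly corresponds to $\widehat{A}(TZ)$, so that $\widehat{I}_{1/2}\leftrightarrow R^{\mathcal{L}_{D^Z}}$; the self-dual antisymmetric tensor anomaly corresponds to $-\tfrac18 L(TZ)$, giving $R^{\mathcal{L}_{D^Z_{sig}}}\leftrightarrow -8\,\widehat{I}_{A}$; and the Rarita--Schwinger (spin-$\tfrac32$) anomaly, after the ghost subtraction, corresponds to $\widehat{A}(TZ)\bigl(\mathrm{ch}(T_\CC Z)-1\bigr)$, the index density of $D^Z\otimes(T_\CC Z-\mathbf{C})$, so that $\widehat{I}_{3/2}\leftrightarrow R^{\mathcal{L}_{D^Z\otimes T_\CC Z}}-R^{\mathcal{L}_{D^Z}}$ and hence $R^{\mathcal{L}_{D^Z\otimes T_\CC Z}}\leftrightarrow \widehat{I}_{3/2}+\widehat{I}_{1/2}$ (additivity of the determinant-line curvature under the formal difference $T_\CC Z-\mathbf{C}$ being built into the geometric setting of Section 2.1). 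Substituting these correspondences into (2.24) yields $-8\widehat{I}_{A}+8\widehat{I}_{1/2}=0$, i.e. (1.1); into (2.26) yields $-8\widehat{I}_{A}+(\widehat{I}_{3/2}+\widehat{I}_{1/2})-22\widehat{I}_{1/2}=0$, i.e. (1.2); and into (2.28) yields $-8\widehat{I}_{A}-8(\widehat{I}_{3/2}+\widehat{I}_{1/2})+16\widehat{I}_{1/2}=0$, i.e. (1.3). The coefficient matching is therefore a routine verification once the dictionary is fixed.

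The substance of the argument, and the step I expect to be the main obstacle, is pinning down this dictionary with the correct relative normalizations rather than the final arithmetic. Two points need care. First, I must confirm that the universal Bismut--Freed constant $\kappa$ (the powers of $2\pi$ and $\sqrt{-1}$ and the sign conventions) is genuinely the same for the signature operator as for the chirally graded twisted Dirac operators, so that it cancels from the homogeneous relations; here the delicate point is that $D^Z_{sig}=D^Z\otimes F$ is graded by $\Omega$ rather than by chirality, so its index density must be identified with $L$ \emph{without} a spurious factor of $2^{n}$ and in exactly the normalization used by Alvarez-Gaum\'e and Witten. Second, the physical definitions of $\widehat{I}_{3/2}$ and $\widehat{I}_{A}$ must be matched to the index-theoretic twisting data: the ghost subtraction that converts the vector-spinor into $\widehat{A}\,(\mathrm{ch}(T_\CC Z)-1)$, and the factor $-\tfrac18$ for the self-dual tensor. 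These I would read off from \cite{AW} and \cite{ASZ}. Once the dictionary is justified, the three substitutions above complete the proof.
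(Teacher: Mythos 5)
Your proposal is correct and matches the paper's own proof essentially step for step: the paper likewise identifies, up to a common constant, $\widehat{I}_{1/2}=R^{\mathcal{L}_{D^Z}}$, $\widehat{I}_{3/2}=R^{\mathcal{L}_{D^Z\otimes T_\CC Z}}-R^{\mathcal{L}_{D^Z}}$ and $\widehat{I}_{A}=-\frac{1}{8}R^{\mathcal{L}_{D^Z_{sig}}}$ (reading these off from (32), (38) and (56) of \cite{AW}, with the Bismut--Freed curvature formula supplying the characteristic-form expressions), and then performs exactly your three substitutions into (2.24), (2.26) and (2.28). Your extra care about the universal normalization $\kappa$ and the signature grading is precisely what the paper compresses into the phrase ``up to a same constant,'' so nothing is missing and nothing is genuinely different.
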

The proof this theorem will also be given in Section 3.

\subsection{Index Gerbes and Anomaly Cancellation Formulas}
Now we still assume that $TZ$ is oriented but the dimension of the
fiber is $2n+1$, i.e. we consider odd dimensional fibers. We still
adopt the geometric settings in Section 2.1.

Let $\{e_1,e_2, \cdots, e_{2n+1}\}$ be an oriented orthogonal basis of $TZ$. Set
$$\Gamma=(\sqrt{-1})^{n+1}c(e_1)\cdots c(e_{2n+1}).$$ Then $\Gamma$ is a self-adjoint element acting
on $\Lambda_\CC(T^*Z)$ such that $\Gamma^2=\mathrm{Id}|_{\Lambda_\CC(T^*Z)}$.

Define the family odd signature operator $B^Z_{sig}$ to be \be
B^Z_{sig}=\Gamma d^Z+d^Z\Gamma: C^\infty(M, \Lambda_\CC
^{even}(T^*Z))\M C^\infty(M, \Lambda_\CC^{even}(T^*Z)).\ee  For each
$y\in Y$,  \be (B^Z_{sig})_y: C^\infty(Z_y,
\Lambda_\CC^{even}(T^*Z)|_y)\M C^\infty(Z_y, \Lambda_\CC
^{even}(T^*Z)|_y)\ee is the odd signature operator $B_{even}$ for the
fiber $Z_y$ in \cite{APS2}.

Now assume that $TZ$ is spin and still let $V$ be a $l$-dimensional
complex Hermitian bundle with the unitary connection $\nabla^V$. One
can still define the family Dirac operator $D^{Z}\otimes V$ similar
as the even dimensional fiber case. The only difference is that now
the spinor bundle $F'$ associated to $TZ$ is not
$\mathbf{Z}_2$-graded. Let $H^\infty$ be the set of $C^\infty$
sections of $F'\otimes V$ over $M$. $H^\infty$ is viewed as the set
of $C^\infty$ sections over $Y$ of infinite dimensional bundles
which are still denoted by $H^\infty$. For $y\in Y$, $H^\infty_y$ is
the set of $C^\infty$ sections over $Z_y$ of $F'\otimes V$. For each
$y\in Y$, $$(D^{Z}\otimes V)_y\in \mathrm{End}(H^\infty_{y})$$ is
the twisted Dirac operator on the fiber $Z_y$.

The family odd signature operator is a twisted
family Dirac operator. Actually, we have $B^Z_{sig}=D^{Z}\otimes
F'$ (c.f. \cite{KL}).

As a higher analogue of the determinant line bundle, Lott (\cite{Lo})
constructs the index gerbe $\mathcal{G}^{D^Z\otimes V}$ with connection on
$Y$ for the family twisted odd Dirac operator $D^Z\otimes V$, the
curvature $R^{\mathcal{G}_{D\otimes V}}$ (a closed 3-form on $Y$) of
which  is equal to the the three-form component of the Atiyah-Singer
families index theorem. As remarked in \cite{Lo},  the curvature of
the index gerbe is certain nonabelian gauge anomaly in physics
(\cite{Fa}, cf. \cite{Lo}).

We have the following anomaly cancellation formulas for index gerbes. 

\begin{theorem}If fiber is $8m+1$ dimensional, then the following anomaly cancellation formula holds,
\be
R^{\mathcal{G}_{B^Z_{sig}}}-8\sum_{r=0}^{m}2^{6m-6r}R^{\mathcal{G}_{D\otimes
b_r(T_\CC Z)}}=0.\ee

\end{theorem}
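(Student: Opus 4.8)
The plan is to reduce the assertion to a single universal identity among characteristic forms on the total space $M$, proved by modular invariance, and then to read off its degree-$3$ part on $Y$ via Lott's description of the index gerbe curvature. This is the odd-dimensional counterpart of Theorem 2.2.1, and the algebraic core is identical; only the geometric extraction changes. The crucial bookkeeping observation is that in both cases the relevant object is the degree-$(8m+4)$ component of a characteristic form on $M$: for an $(8m+2)$-fiber one integrates it down to a $2$-form on $Y$, while for an $(8m+1)$-fiber one integrates it down to a $3$-form, so the very same form identity serves both theorems.

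First I would invoke the family index theorem for index gerbes: by Lott's construction, $R^{\mathcal{G}_{D^Z\otimes V}}$ equals, up to a universal constant, the degree-$3$ component along $Y$ of the fiber integral $\int_Z \widehat{A}(TZ,\nabla^Z)\,\mathrm{ch}(V,\nabla^V)$. Since $B^Z_{sig}=D^Z\otimes F'$, the first term is governed by $\widehat{A}(TZ)\,\mathrm{ch}(F')$, which in the Chern roots $x_i$ of $TZ$ equals the Hirzebruch-type form $\prod_i x_i\coth(x_i/2)$, abbreviated $\widehat{L}(TZ)$. Thus the entire left-hand side of the theorem is the degree-$3$ projection of the fiber integral of the degree-$(8m+4)$ component of
$$\widehat{L}(TZ) - 8\sum_{r=0}^m 2^{6m-6r}\,\widehat{A}(TZ)\,\mathrm{ch}(b_r(T_\CC Z)),$$
so it suffices to show that this degree-$(8m+4)$ characteristic form vanishes.

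Next I would establish that form identity by the modular method. Introduce the two $\tau$-dependent characteristic forms $P_1(\tau)=\{\widehat{L}(TZ)\,\mathrm{ch}(\Theta_1(T_\CC Z))\}^{(8m+4)}$ and $P_2(\tau)=\{\widehat{A}(TZ)\,\mathrm{ch}(\Theta_2(T_\CC Z))\}^{(8m+4)}$, whose $q^0$-coefficients reproduce $\widehat{L}(TZ)$ and $\widehat{A}(TZ)$ because $A_0=B_0=\mathbf{C}$. Using the product expansions (2.9)--(2.12) of the Jacobi theta functions, I would show that $P_1$ and $P_2$ are modular forms of weight $4m+2$ over $\Gamma_0(2)$ and $\Gamma^0(2)$ respectively, interchanged up to an automorphy factor by $S:\tau\mapsto -1/\tau$ (which conjugates one group into the other). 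Because the graded ring of modular forms over $\Gamma^0(2)$ is the polynomial ring on $\delta_2,\varepsilon_2$, the defining expansion (2.14) exhibits $P_2$ as $\sum_{r}\{\widehat{A}(TZ)\,\mathrm{ch}(b_r(T_\CC Z))\}(8\delta_2)^{2m+1-2r}\varepsilon_2^r$; matching this against $P_1$ through the explicit transformation laws of $\delta_2$ and $\varepsilon_2$ under $S$ produces exactly the coefficients $8\cdot 2^{6m-6r}$ and the required equality of $q^0$-coefficients, which is precisely the desired vanishing of the degree-$(8m+4)$ form.

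The main obstacle is this modular step: proving that the \emph{differential forms} $P_1,P_2$, not merely their cohomology classes, transform as modular forms of the stated weight and level, and pinning down the automorphy factors so that the powers of $2$ emerge exactly. This is shared verbatim with the proof of Theorem 2.2.1 and is carried out in Section 3; granting it, the passage to the index gerbe is routine, since the Chern character and fiber integration are linear and so extend without change to the virtual bundles $b_r(T_\CC Z)$. Finally I would apply the degree-$3$ projection (rather than the degree-$2$ one used for determinant line bundles) to convert the vanishing characteristic form into the asserted relation among the gerbe curvatures, completing the proof.
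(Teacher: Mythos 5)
Your proposal is correct and follows essentially the same route as the paper: reduce to the vanishing of the degree-$(8m+4)$ form $\{\widehat{L}(TZ,\nabla^Z)\}^{(8m+4)}-8\sum_{r=0}^{m}2^{6m-6r}\{\widehat{A}(TZ,\nabla^Z)\,\mathrm{ch}(b_r(T_\CC Z))\}^{(8m+4)}$ via the modular forms $P_1,P_2$ over $\Gamma_0(2)$ and $\Gamma^0(2)$, the decomposition in $\delta_2,\varepsilon_2$, and the $S$-transformation with comparison of $q^0$-coefficients, then fiber-integrate and apply Lott's curvature formula together with $\widehat{A}(TZ,\nabla^Z)\,\mathrm{ch}(F',\nabla^{F'})=\widehat{L}(TZ,\nabla^Z)$ to identify $R^{\mathcal{G}_{B^Z_{sig}}}$. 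This is exactly the paper's argument in Section 3.2 (Proposition 3.1, Lemma 3.1, equations (3.12)--(3.14), (3.21)--(3.22)), including the bookkeeping that the same degree-$(8m+4)$ identity serves both the even- and odd-dimensional fiber cases.
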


\begin{theorem} If the fiber is $8m-3$ dimensional, then the following anomaly cancellation formula holds,
\be
R^{\mathcal{G}_{B^Z_{sig}}}-\sum_{r=0}^{m}2^{6m-6r}R^{\mathcal{G}_{D\otimes
z_r(T_\CC Z)}}=0.\ee\end{theorem}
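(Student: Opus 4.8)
The plan is to reduce the statement to a single identity of characteristic forms on the total space $M$ and then to derive that identity by the modular-invariance argument of Section 3, exactly as for the even-dimensional Theorem 2.2.2. By Lott's construction, the curvature $R^{\mathcal{G}_{D\otimes V}}$ of the index gerbe equals, up to a universal constant, the three-form component on $Y$ of the fiberwise integral $\int_Z \hat{A}(TZ,\nabla^Z)\,\mathrm{ch}(V,\nabla^V)$, this being the three-form part of the Atiyah--Singer families index density. Since fiber integration and the extraction of form components are linear, and since $B^Z_{sig}=D^Z\otimes F'$, it suffices to prove the form-level identity
\be
\{\hat{A}(TZ)\,\mathrm{ch}(F')\}^{(8m)}=\sum_{r=0}^m 2^{6m-6r}\{\hat{A}(TZ)\,\mathrm{ch}(z_r(T_\CC Z))\}^{(8m)},
\ee
where $\{\ \cdot\ \}^{(8m)}$ denotes the degree-$8m$ component on $M$; integrating this degree-$8m$ form over the $(8m-3)$-dimensional fiber yields the claimed three-form identity on $Y$. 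I would stress that the target degree is again $8m$ because $(8m-3)+3=8m$, precisely as the two-form component for the $(8m-2)$-dimensional fiber of Theorem 2.2.2 arises from $(8m-2)+2=8m$; thus no new characteristic-form computation is needed beyond Section 3, only the change of index-theoretic interpretation from determinant line bundle to index gerbe.

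To obtain the displayed identity I would follow Liu's modular set-up. Introduce the $\tau$-dependent characteristic forms $P_1(\tau)=\{\hat{A}(TZ,\nabla^Z)\,\mathrm{ch}(F')\,\mathrm{ch}(\Theta_1(T_\CC Z),\nabla^{\Theta_1})\}^{(8m)}$ and $P_2(\tau)=\{\hat{A}(TZ,\nabla^Z)\,\mathrm{ch}(\Theta_2(T_\CC Z),\nabla^{\Theta_2})\}^{(8m)}$, with $\Theta_1,\Theta_2$ the bundles of (2.5)--(2.6). Using the transformation laws of the Jacobi theta functions (2.9)--(2.12) together with the definitions (2.13) of $\delta_2$ and $\varepsilon_2$, one checks that $P_2(\tau)$ is a modular form of weight $4m$ over $\Gamma^0(2)$ and $P_1(\tau)$ a modular form of weight $4m$ over $\Gamma_0(2)$, and that $S:\tau\mapsto -1/\tau$ carries the one into the other up to the factor $\tau^{4m}$ and an explicit power of $2$. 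The constant ($q^0$) term of $P_1(\tau)$ degenerates to the signature density $\{\hat{A}(TZ)\,\mathrm{ch}(F')\}^{(8m)}$, since every factor $S_{q^n}$, $\Lambda_{q^m}$ in $\Theta_1$ tends to $\mathbf{C}$ as $q\to 0$.

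The algebraic heart is the structure of modular forms over $\Gamma^0(2)$: the graded ring is the polynomial ring on $\delta_2$ (weight $2$) and $\varepsilon_2$ (weight $4$), so the weight-$4m$ space is spanned by the $m+1$ forms $(8\delta_2)^{2m-2r}\varepsilon_2^r$, $0\le r\le m$. By the defining congruence (2.15) of the bundles $z_r(T_\CC Z)$,
\be
P_2(\tau)=\sum_{r=0}^m \{\hat{A}(TZ)\,\mathrm{ch}(z_r(T_\CC Z))\}^{(8m)}\,(8\delta_2)^{2m-2r}\varepsilon_2^r
\ee
holds modulo $q^{(m+1)/2}$; since both sides are weight-$4m$ modular forms in a space of dimension exactly $m+1$, agreement of the first $m+1$ coefficients in $q^{1/2}$ forces exact equality. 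Applying $S$ and extracting the $q^0$ term, via the transformation laws $\delta_2(-1/\tau)=\tau^2\delta_1(\tau)$, $\varepsilon_2(-1/\tau)=\tau^4\varepsilon_1(\tau)$ for the $S$-conjugate $\Gamma_0(2)$-forms $\delta_1,\varepsilon_1$, then rewrites the constant term of $P_1$, the signature density, as the combination $\sum_r 2^{6m-6r}\{\hat{A}(TZ)\,\mathrm{ch}(z_r(T_\CC Z))\}^{(8m)}$, which is the desired identity.

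The main obstacle is the exact bookkeeping of the powers of $2$. Combining the constant terms $8\delta_1(\infty)=2$ and $\varepsilon_1(\infty)=\frac{1}{16}$ produces the $r$-dependent factor $2^{2m-6r}$, and one must track the remaining overall factor $2^{4m}$ accumulated under the $S$-transformation of the theta functions and the normalization of the signature density $\hat{A}(TZ)\,\mathrm{ch}(F')$ against the $\hat{L}$-genus, so that the coefficient of $R^{\mathcal{G}_{D\otimes z_r(T_\CC Z)}}$ comes out exactly as $2^{6m-6r}$. A minor point to verify is that the rank of $TZ$ differs between the $(8m-3)$- and $(8m-2)$-dimensional fibers, so strictly one re-runs the modular argument for rank $8m-3$; but that argument is insensitive to the rank once it exceeds $4m$, and the resulting universal degree-$8m$ identity is the same.
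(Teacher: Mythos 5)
Your proposal is correct and follows essentially the same route as the paper's proof in Section 3.3: you form the same pair of modular forms (your $P_1=\{\widehat{L}(TZ)\,\mathrm{ch}(\Theta_1)\}^{(8m)}$ and $P_2=\{\widehat{A}(TZ)\,\mathrm{ch}(\Theta_2)\}^{(8m)}$, i.e.\ the paper's $Q_1,Q_2$, using $\widehat{A}(TZ)\,\mathrm{ch}(F')=\widehat{L}(TZ)$), expand $P_2$ in the basis $(8\delta_2)^{2m-2r}\varepsilon_2^r$ of weight-$4m$ forms over $\Gamma^0(2)$ via (2.15), apply the $S$-transformation with the factor $2^{4m}\tau^{4m}$, and compare constant terms to get $\{\widehat{L}(TZ)\}^{(8m)}=\sum_{r=0}^m 2^{6m-6r}\{\widehat{A}(TZ)\,\mathrm{ch}(z_r(T_\CC Z))\}^{(8m)}$, which after fiber integration and Lott's curvature formula gives the theorem. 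Your power-of-two bookkeeping ($2^{4m}\cdot 2^{2m-2r}\cdot 2^{-4r}=2^{6m-6r}$) matches the paper's (3.34)--(3.35), and your explicit triangularity argument for why the defining congruence mod $q^{(m+1)/2}$ forces exact equality is a detail the paper leaves implicit.
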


If $\omega$ is a closed differential form on $Y$, denote the
cohomology class $\omega$ represents in the de Rham cohomology of
$Y$ by $[\omega]$.

We have the following cancellation formulas for cohomology anomalies.

\begin{theorem}If the fiber is $8m+1$ dimensional, then the following anomaly cancellation
 formula in cohomology holds,
\be
\sum_{r=0}^{m}2^{6m-6r}[R^{\mathcal{G}_{D\otimes b_r(T_\CC Z)}}]=0.\ee

\end{theorem}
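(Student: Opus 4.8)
The plan is to deduce the cohomological identity directly from the form-level cancellation formula of Theorem 2.3.1 by passing to de Rham cohomology and then showing that the contribution of the family odd signature operator is cohomologically trivial. First I would note that each of the $3$-forms $R^{\mathcal{G}_{B^Z_{sig}}}$ and $R^{\mathcal{G}_{D\otimes b_r(T_\CC Z)}}$ ($0\le r\le m$) is closed, being the $3$-form component of the Atiyah--Singer families index density (equivalently, the curvature of Lott's index gerbe), and hence represents a well-defined class in $H^3(Y;\mathbf{R})$. Taking de Rham classes on both sides of the form identity
\[ R^{\mathcal{G}_{B^Z_{sig}}}-8\sum_{r=0}^{m}2^{6m-6r}R^{\mathcal{G}_{D\otimes b_r(T_\CC Z)}}=0 \]
of Theorem 2.3.1 gives $[R^{\mathcal{G}_{B^Z_{sig}}}]=8\sum_{r=0}^{m}2^{6m-6r}[R^{\mathcal{G}_{D\otimes b_r(T_\CC Z)}}]$ in $H^3(Y;\mathbf{R})$, so the assertion is equivalent to the single vanishing statement $[R^{\mathcal{G}_{B^Z_{sig}}}]=0$.

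Next I would identify this class. Since $B^Z_{sig}=D^Z\otimes F'$, the families index theorem identifies $[R^{\mathcal{G}_{B^Z_{sig}}}]$, up to a universal nonzero constant, with the degree-$3$ part of the fiber integral $\pi_*\big(\widehat{A}(TZ,\nabla^Z)\,\mathrm{ch}(F',\nabla)\big)$, that is, with $[\pi_* L_{2m+1}(TZ)]$, where $L_{2m+1}$ is the degree-$(8m+4)$ component of the Hirzebruch $L$-form of the vertical tangent bundle (the degree is forced since $\pi_*$ lowers degree by $\dim Z=8m+1$ and $8m+4=4(2m+1)$). In particular $[R^{\mathcal{G}_{B^Z_{sig}}}]$ is the fiber integral of a polynomial in the Pontryagin forms of $TZ$.

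The crucial input is then Ebert's vanishing theorem \cite{Eb}: for a smooth oriented fiber bundle whose fibers are \emph{odd}-dimensional, the fiber integral in $H^*(Y;\mathbf{Q})$ of any polynomial in the Pontryagin classes of the vertical tangent bundle vanishes. As the fiber dimension here is $8m+1$, which is odd, this applies to $L_{2m+1}(TZ)$ and yields $[\pi_* L_{2m+1}(TZ)]=0$, hence $[R^{\mathcal{G}_{B^Z_{sig}}}]=0$. Substituting into the cohomology identity above and dividing by $8$ gives $\sum_{r=0}^{m}2^{6m-6r}[R^{\mathcal{G}_{D\otimes b_r(T_\CC Z)}}]=0$, as claimed. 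One could equally apply Ebert's theorem to each summand, since $\mathrm{ch}(b_r(T_\CC Z))$ is itself a polynomial in the Pontryagin classes of $TZ$; the route through the signature operator is simply what makes precise the sense in which the cohomological formula ``drops'' the signature term that is genuinely present in the form-level Theorem 2.3.1.

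The main obstacle I anticipate is the correct invocation of Ebert's result. One must verify that the de Rham class of Lott's gerbe curvature really is the fiber-integrated index density in degree $3$, so that it is genuinely a fiber integral of a characteristic form to which Ebert's theorem applies, and that the identification $B^Z_{sig}=D^Z\otimes F'$ produces, via the families index theorem, a fiber integral of a polynomial in the Pontryagin forms of $TZ$ alone (the precise power-of-$2$ normalization being irrelevant for the vanishing). Once closedness of the curvature forms and this identification are granted, the passage from the form level to cohomology is routine.
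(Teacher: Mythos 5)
Your proposal is correct and follows essentially the same route as the paper: both pass the form-level identity of Theorem 2.3.1 to de Rham cohomology and then kill the signature term by showing $\left[R^{\mathcal{G}_{B^Z_{sig}}}\right]=0$ via Ebert's result, using $B^Z_{sig}=D^Z\otimes F'$ and $\widehat{A}(TZ,\nabla^Z)\,\mathrm{ch}(F',\nabla)=\widehat{L}(TZ,\nabla^Z)$. The only (cosmetic) difference is in how Ebert is invoked: the paper combines the $K$-theoretic statement $\mathrm{ind}(B^Z_{sig})=0\in K^1(Y)$ with the Bismut--Freed theorem that $\int_Z\widehat{A}(TZ,\nabla^Z)\,\mathrm{ch}(V,\nabla^V)$ represents the odd Chern character of the family index, whereas you quote Ebert's cohomological vanishing theorem for fiber integrals of Pontryagin polynomials directly --- the same input in different packaging.
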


\begin{theorem} If the fiber is $8m-3$ dimensional, then the following anomaly cancellation
formula in cohomology holds,
\be
\sum_{r=0}^{m}2^{6m-6r}[R^{\mathcal{G}_{D\otimes z_r(T_\CC Z)}}]=0.\ee

\end{theorem}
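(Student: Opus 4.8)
The plan is to deduce this cohomological identity from the form-level cancellation of Theorem 2.3.2 together with a vanishing theorem of Ebert \cite{Eb} for characteristic classes of odd-dimensional fiber bundles. The point is that passing from differential forms to de Rham cohomology kills the contribution of the family odd signature operator, leaving exactly the stated sum over $r$.

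First I would invoke Theorem 2.3.2, which gives the equality of closed $3$-forms on $Y$
\[
R^{\mathcal{G}_{B^Z_{sig}}}=\sum_{r=0}^{m}2^{6m-6r}R^{\mathcal{G}_{D\otimes z_r(T_\CC Z)}}.
\]
Taking de Rham cohomology classes of both sides yields
\[
[R^{\mathcal{G}_{B^Z_{sig}}}]=\sum_{r=0}^{m}2^{6m-6r}[R^{\mathcal{G}_{D\otimes z_r(T_\CC Z)}}],
\]
so it suffices to prove that the single class $[R^{\mathcal{G}_{B^Z_{sig}}}]$ vanishes.

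Next I would make this class explicit. By Lott's construction the curvature of the index gerbe of a family twisted Dirac operator $D^Z\otimes V$ is the $3$-form component of the family index density $\int_Z\widehat{A}(TZ,\nabla^Z)\,\mathrm{ch}(V,\nabla^V)$. Since $B^Z_{sig}=D^Z\otimes F'$ and $F'$ is the spinor bundle associated to $TZ$, the integrand $\widehat{A}(TZ)\,\mathrm{ch}(F')$ is a polynomial in the Pontryagin forms of $TZ$ (up to a universal constant it is the Hirzebruch $L$-form of $TZ$). Hence $[R^{\mathcal{G}_{B^Z_{sig}}}]$ is the degree-$3$ part of the fiber integral $\pi_!$ of a polynomial in the Pontryagin classes of the vertical tangent bundle.

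Finally, because the fibers $Z$ are closed and of odd dimension $8m-3$, Ebert's theorem \cite{Eb} asserts that the fiber integral over such a bundle of any polynomial in the Pontryagin classes of $TZ$ vanishes in $H^*(Y;\mathbf{Q})$. Therefore $[R^{\mathcal{G}_{B^Z_{sig}}}]=0$, and combining this with the cohomological identity above gives the theorem. The main obstacle is the bookkeeping in the third paragraph: one must verify that Lott's gerbe curvature genuinely represents the fiber integral of a Pontryagin-class polynomial of $TZ$ (tracking the normalization constants and the passage from $\widehat{A}\cdot\mathrm{ch}(F')$ to the signature form), so that Ebert's vanishing theorem applies as stated; once this is checked, the odd-dimensionality of the fiber forces the vanishing with no further computation.
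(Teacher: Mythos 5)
Your proposal is correct, and its skeleton coincides with the paper's: the paper likewise starts from the form-level identity of Theorem 2.3.2 (obtained by integrating the modular-invariance identity (3.35) along the fiber and applying Lott's curvature formula, Theorem 3.2.3) and reduces the whole statement to the single vanishing $[R^{\mathcal{G}_{B^Z_{sig}}}]=0$. Where you genuinely diverge is in how that vanishing is established. You invoke Ebert's theorem in its strong cohomological form --- fiber integrals of all polynomials in the Pontryagin classes of $TZ$ vanish rationally when the closed oriented fibers are odd-dimensional --- applied to $\widehat{A}(TZ)\,\mathrm{ch}(F')=\widehat{L}(TZ)$, whose homogeneous components are indeed rational Pontryagin polynomials (so the normalization issue you flag, $\widehat{L}$ versus the Hirzebruch $L$-class, is harmless: the discrepancy is only powers of $2$ in each degree, irrelevant for vanishing). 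This is legitimate, since that full vanishing is the main theorem of \cite{Eb}. The paper, however, uses only the weaker statement it actually quotes from \cite{Eb} (Theorem 3.2.4: $\mathrm{ind}(B^Z_{sig})=0$ in $K^1(Y)$) and converts it into the needed cohomological statement via the Bismut--Freed theorem (Theorem 3.2.5) that $\int_Z\widehat{A}(TZ,\nabla^{Z})\,\mathrm{ch}(V,\nabla^V)$ represents the odd Chern character of the family index; taking $V=F'$ gives $\bigl[\int_Z\widehat{L}(TZ,\nabla^Z)\bigr]=0$, hence $[R^{\mathcal{G}_{B^Z_{sig}}}]=0$ by Lott's formula. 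Your route buys a quicker final step and bypasses Theorem 3.2.5 entirely, at the cost of relying on the full strength of Ebert's vanishing theorem; the paper's route is more economical in what it imports from \cite{Eb}, needing only the $K^1$-triviality of the untwisted odd signature family index, which is exactly tailored to the one class that must die.
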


Putting $m=0$ and $m=1$ in the above theorems  and using (2.16) as well as (2.17), we have
\begin{corollary}If the fiber is $1$ dimensional, i.e. for the circle bundle case, the following anomaly
cancellation formula holds,
\be
R^{\mathcal{G}_{B^Z_{sig}}}+8R^{\mathcal{G}_{D^Z}}=0.\ee  The cohomology anomaly,
\be [R^{\mathcal{G}_{D^Z}}]=0. \ee

\end{corollary}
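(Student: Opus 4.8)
The plan is to obtain both identities purely by specializing the general odd-dimensional cancellation formulas to their lowest case and inserting the explicit value of the bottom bundle $b_0(T_\CC Z)$ recorded in (2.16). A one-dimensional fiber is exactly the case $8m+1=1$, i.e. $m=0$, of Theorem 2.3.1 (so of the three corollaries grouped under ``Putting $m=0$ and $m=1$'', only the value $m=0$ enters here). No new analytic input is needed: everything reduces to an algebraic substitution together with the linearity of the index-gerbe curvature in the twisting bundle.

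First I would specialize Theorem 2.3.1 to $m=0$. The sum over $0\le r\le m$ collapses to the single term $r=0$, whose coefficient is $8\cdot 2^{6\cdot 0-6\cdot 0}=8$, giving
\[
R^{\mathcal{G}_{B^Z_{sig}}}-8R^{\mathcal{G}_{D\otimes b_0(T_\CC Z)}}=0 .
\]
Next I would substitute $b_0(T_\CC Z)=-\mathbf{C}$ from (2.16). Because the curvature of the index gerbe is the three-form component of the Atiyah--Singer family index density, it is additive under formal differences of the twisting bundle; in particular $R^{\mathcal{G}_{D\otimes(-\mathbf{C})}}=-R^{\mathcal{G}_{D\otimes\mathbf{C}}}=-R^{\mathcal{G}_{D^Z}}$, since twisting the family Dirac operator by the trivial line bundle $\mathbf{C}$ returns the untwisted operator $D^Z$. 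Substituting this into the displayed identity yields $R^{\mathcal{G}_{B^Z_{sig}}}+8R^{\mathcal{G}_{D^Z}}=0$, the asserted local anomaly cancellation formula.

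For the cohomology statement I would run the identical specialization on Theorem 2.3.3 at $m=0$, where the sum collapses to $[R^{\mathcal{G}_{D\otimes b_0(T_\CC Z)}}]=0$; using $b_0(T_\CC Z)=-\mathbf{C}$ and the same linearity gives $[R^{\mathcal{G}_{D^Z}}]=0$. The computation is essentially a substitution, so there is no genuine obstacle; the one point meriting care is justifying that the gerbe curvature, a priori attached to honest Hermitian bundles $V$, extends additively to the virtual bundle $b_0(T_\CC Z)$ and behaves as claimed under the identification $D\otimes\mathbf{C}=D^Z$. This is immediate from the fiber-integral characteristic-form expression for $R^{\mathcal{G}_{D\otimes V}}$, which is linear in $\mathrm{ch}(V,\nabla^V)$.
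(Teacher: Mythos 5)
Your proposal is correct and matches the paper's own derivation: the corollary is obtained exactly by setting $m=0$ in Theorem 2.3.1 and Theorem 2.3.3, substituting $b_0(T_\CC Z)=-\mathbf{C}$ from (2.16), and using the linearity of the gerbe curvature in the twisting bundle, which is immediate from Lott's fiber-integral formula (Theorem 3.2.3). Your extra remark justifying the extension to virtual bundles is a sensible precaution but involves nothing beyond what the paper implicitly uses.
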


\begin{corollary}If the fiber is $5$ dimensional, then the following anomaly cancellation formula holds,
\be
R^{\mathcal{G}_{B^Z_{sig}}}+R^{\mathcal{G}_{D^Z\otimes T_\CC Z}}-21R^{\mathcal{G}_{D^Z}}=0.\ee
The cohomology anomaly,
\be [R^{\mathcal{G}_{D^Z\otimes T_\CC Z}}]-21[R^{\mathcal{G}_{D^Z}}]=0.\ee
\end{corollary}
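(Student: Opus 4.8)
The plan is to obtain this corollary as the case $m=1$ of the general odd-dimensional formulas Theorem 2.3.2 and Theorem 2.3.4, which I take as already established. Since $5=8\cdot 1-3$, the fiber dimension $5$ lies in the $8m-3$ family with $m=1$ and $\dim Z=5$, so both theorems apply. First I would specialize Theorem 2.3.2 to $m=1$: the sum over $r$ then has only the terms $r=0$ and $r=1$, with coefficients $2^{6}=64$ and $2^{0}=1$, so it reads
\be
R^{\mathcal{G}_{B^Z_{sig}}}-64\,R^{\mathcal{G}_{D\otimes z_0(T_\CC Z)}}-R^{\mathcal{G}_{D\otimes z_1(T_\CC Z)}}=0.
\ee
Next I would insert the explicit low-order bundles from (2.17). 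With $m=1$ and $\dim Z=5$ the exponent $48m-\dim Z$ equals $43$, so $z_0(T_\CC Z)=\mathbf{C}$ and $z_1(T_\CC Z)=-T_\CC Z-\mathbf{C}^{43}$.

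The one structural input I rely on is that the gerbe curvature $R^{\mathcal{G}}$ is additive in the twisting bundle. This holds because, as recalled in the text, $R^{\mathcal{G}_{D\otimes V}}$ is the three-form component of the family index density $\int_Z \widehat{A}(TZ,\nabla^{TZ})\,\mathrm{ch}(V,\nabla^V)$, and the Chern character form is additive under direct sum and formal difference; hence $R^{\mathcal{G}_{D\otimes(V\pm W)}}=R^{\mathcal{G}_{D\otimes V}}\pm R^{\mathcal{G}_{D\otimes W}}$ and $R^{\mathcal{G}_{D\otimes\mathbf{C}}}=R^{\mathcal{G}_{D^Z}}$. Consequently $R^{\mathcal{G}_{D\otimes z_0(T_\CC Z)}}=R^{\mathcal{G}_{D^Z}}$ and $R^{\mathcal{G}_{D\otimes z_1(T_\CC Z)}}=-R^{\mathcal{G}_{D^Z\otimes T_\CC Z}}-43\,R^{\mathcal{G}_{D^Z}}$. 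Substituting these into the displayed identity and collecting the coefficient of $R^{\mathcal{G}_{D^Z}}$, which becomes $-64+43=-21$, yields the local formula (2.40). The cohomology formula (2.41) I would derive identically: Theorem 2.3.4 at $m=1$ reads $64\,[R^{\mathcal{G}_{D\otimes z_0(T_\CC Z)}}]+[R^{\mathcal{G}_{D\otimes z_1(T_\CC Z)}}]=0$, and passing to de Rham classes with the same additivity gives $21\,[R^{\mathcal{G}_{D^Z}}]-[R^{\mathcal{G}_{D^Z\otimes T_\CC Z}}]=0$, which is (2.41).

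The computation is entirely routine; the only points requiring care are the additivity of the curvature (and of its cohomology class) under the $K$-theoretic operations defining the $z_r$, and correctly tracking the constant $48m-\dim Z=43$, on which the final coefficient $64-43=21$ depends. I do not expect any genuine obstacle here, since all of the analytic content is already packaged in Theorems 2.3.2 and 2.3.4 (to be proved in Section 3); the corollary itself is pure bookkeeping, obtained by specializing $m=1$ and expanding the two lowest bundles from (2.17).
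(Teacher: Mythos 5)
Your proof is correct and takes essentially the same route as the paper, which obtains this corollary precisely by setting $m=1$ in Theorems 2.3.2 and 2.3.4 and substituting $z_0(T_\CC Z)=\mathbf{C}$ and $z_1(T_\CC Z)=-T_\CC Z-\mathbf{C}^{43}$ from (2.17), with additivity of $R^{\mathcal{G}_{D\otimes V}}$ in $V$ implicit in Lott's curvature formula (Theorem 3.2.3). Your bookkeeping ($2^{6}=64$, $48m-\dim Z=43$, $-64+43=-21$) matches the paper's coefficients exactly.
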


\begin{corollary}If the fiber is $9$ dimensional, then the following anomaly cancellation formula holds,
\be
R^{\mathcal{G}_{B^Z_{sig}}}-8R^{\mathcal{G}_{D^Z\otimes T_\CC Z}}+8R^{\mathcal{G}_{D^Z}}=0.\ee
The cohomology anomaly,
\be [R^{\mathcal{G}_{D^Z\otimes T_\CC Z}}]-[R^{\mathcal{G}_{D^Z}}]=0.\ee

\end{corollary}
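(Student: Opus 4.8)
The plan is to obtain both formulas as the $m=1$ specialization of Theorem 2.3.1 and Theorem 2.3.3, feeding in the explicit low-order bundles recorded in (2.16). First I would note that a $9$-dimensional fiber is exactly the case $8m+1$ with $m=1$, so Theorem 2.3.1 applies and its sum runs only over $r\in\{0,1\}$, giving
\be
R^{\mathcal{G}_{B^Z_{sig}}}-8\left(2^{6}R^{\mathcal{G}_{D\otimes b_0(T_\CC Z)}}+R^{\mathcal{G}_{D\otimes b_1(T_\CC Z)}}\right)=0.
\ee
Next I would substitute the values from (2.16) at $m=1,\ \dim Z=9$, namely $b_0(T_\CC Z)=-\CC$ and $b_1(T_\CC Z)=T_\CC Z+\CC^{63}$, where the exponent is $24(2m+1)-\dim Z=72-9=63$.

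The one structural fact I would invoke is that the index-gerbe curvature $R^{\mathcal{G}_{D\otimes V}}$ is the three-form component of the family index density and is therefore additive in the K-theory class of the twisting bundle $V$: a direct sum adds curvatures, a formal difference subtracts, and a trivial summand $\CC^{N}$ contributes $N\,R^{\mathcal{G}_{D^Z}}$ (since $D\otimes\CC=D^Z$). Granting this, I would expand $R^{\mathcal{G}_{D\otimes b_0(T_\CC Z)}}=-R^{\mathcal{G}_{D^Z}}$ and $R^{\mathcal{G}_{D\otimes b_1(T_\CC Z)}}=R^{\mathcal{G}_{D^Z\otimes T_\CC Z}}+63\,R^{\mathcal{G}_{D^Z}}$. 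The weights $8\cdot2^{6}=512$ and $8\cdot1=8$ then combine so that the $R^{\mathcal{G}_{D^Z}}$ terms reduce to $(512-8\cdot63)R^{\mathcal{G}_{D^Z}}=8\,R^{\mathcal{G}_{D^Z}}$, and (2.42) drops out.

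For the cohomology statement (2.43) I would repeat the substitution verbatim, but starting from Theorem 2.3.3 with $m=1$, which omits the $R^{\mathcal{G}_{B^Z_{sig}}}$ term, carries no overall factor of $8$, and is stated for de Rham classes. The same arithmetic with the constant $63$ gives $64(-[R^{\mathcal{G}_{D^Z}}])+[R^{\mathcal{G}_{D^Z\otimes T_\CC Z}}]+63[R^{\mathcal{G}_{D^Z}}]=0$, i.e. $[R^{\mathcal{G}_{D^Z\otimes T_\CC Z}}]-[R^{\mathcal{G}_{D^Z}}]=0$. The computation is entirely routine once the additivity is granted; I expect no genuine obstacle, since all the real content already sits in Theorems 2.3.1 and 2.3.3, and the only point needing care is the bookkeeping of the trivial-bundle constant $\CC^{63}$ together with the verification that its contribution collapses as shown.
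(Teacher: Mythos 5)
Your proposal is correct and is exactly the paper's proof: the authors obtain this corollary by setting $m=1$ in Theorems 2.3.1 and 2.3.3 and substituting $b_0(T_\CC Z)=-\CC$, $b_1(T_\CC Z)=T_\CC Z+\CC^{63}$ from (2.16), with the same arithmetic $8\cdot 2^6-8\cdot 63=8$. The additivity of $R^{\mathcal{G}_{D\otimes V}}$ in the K-theory class of $V$ that you invoke is precisely what Lott's curvature formula (Theorem 3.2.3) provides, via the additivity of the Chern character form, so there is no gap.
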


\subsection{Results for $\eta$-invariants} For $y\in Y$, let $\eta_y(D^{Z}\otimes V)(s)$ be the
eta function associated with $(D^{Z}\otimes V)_y$.  Define (\cite{APS1})
\be \overline{\eta}_y(D^{Z}\otimes V)(s) =\frac{\eta_y(D^{Z}\otimes V)(s)+\mathrm{ker}(D^{Z}\otimes V)_y}{2}.\ee  Denote 
$\overline{\eta}_y(D^{Z}\otimes V)(0) $ (a function on $Y$) by $\overline{\eta}(D^{Z}\otimes V)$.

We still adopt the setting of family odd signature operators and
family twisted Dirac operators on a family of odd manifolds in Section 2.3.  We
have the following theorems on the reduced $\eta$-invariants.
\begin{theorem}If the fiber is $8m+3$ dimensional, then
\be \exp\left\{2\pi \sqrt{-1}\left(\overline{\eta}(B^{Z}_{sig})-8\sum_{r=0}^{m}2^{6m-6r}\overline{\eta}
(D^{Z}\otimes b_r(T_\CC Z))\right)\right\}\ee is a constant function on $Y$.
\end{theorem}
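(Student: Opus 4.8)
The plan is to prove that the exponent, viewed as a smooth function $Y\to\mathbf{R}/\mathbf{Z}$, has vanishing differential; since the base $Y$ is connected this forces the $U(1)$-valued function $\exp\{\cdots\}$ to be constant. The essential point is that although each reduced invariant $\overline{\eta}(D^Z\otimes V)$ can jump by integers as eigenvalues of $(D^Z\otimes V)_y$ cross zero, the composite $\exp\{2\pi\sqrt{-1}\,\overline{\eta}\}$ is smooth on $Y$ and its logarithmic derivative is the genuine real $1$-form $2\pi\sqrt{-1}\,d\overline{\eta}$. Working with the exponential from the outset therefore lets me ignore the integer spectral-flow jumps entirely: it suffices to prove that the $1$-form
$$d\overline{\eta}(B^Z_{sig})-8\sum_{r=0}^m 2^{6m-6r}\,d\overline{\eta}(D^Z\otimes b_r(T_\CC Z))$$
vanishes identically on $Y$.

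First I would invoke the variation formula for reduced $\eta$-invariants in a family (Atiyah--Patodi--Singer, in the family form of Bismut--Freed): for a family of self-adjoint twisted Dirac operators, the smooth $1$-form $d\overline{\eta}(D^Z\otimes V)$ equals the degree-one component on $Y$ of the fibre integral $\int_Z \widehat{A}(TZ,\nabla^Z)\,\mathrm{ch}(V,\nabla^V)$. Applying this to each summand, and using $B^Z_{sig}=D^Z\otimes F'$ together with the identity expressing $\widehat{A}(TZ)\,\mathrm{ch}(F')$ as the suitably normalized Hirzebruch $\widehat{L}$-density, reduces the desired vanishing to the statement that the degree-$(8m+4)$ component of a fixed combination of characteristic forms on $M$ integrates to zero over the fibre. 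This is precisely the combination that appears on the form level in Theorem 2.2.1, only read off in a different total $Y$-degree.

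The heart of the argument is then the master modular identity of Section 3, the same one that proves Theorems 2.2.1 and 2.2.2. The relevant degree count is that a $1$-form on $Y$ obtained by integrating over the $(8m+3)$-dimensional fibre comes from the total degree $8m+4=4(2m+1)$ part on $M$, hence corresponds to a modular form of weight $4m+2$; note this is the very same total degree $8m+4$ that yields the $2$-form curvature in the $8m+2$-dimensional case. The form built from $\widehat{A}(TZ)$ and $\mathrm{ch}(\Theta_2(T_\CC Z))$ is modular over $\Gamma^0(2)$, so by (2.14) it expands as $\sum_{r=0}^m\{\,b_r\text{-density}\,\}(8\delta_2)^{2m+1-2r}\varepsilon_2^r$; the companion form attached to the signature operator $B^Z_{sig}$ is recovered as the $q^0$ (cusp) coefficient of the $\Gamma_0(2)$-modular form built from $\mathrm{ch}(\Theta_1(T_\CC Z))$, and the transformation $S:\tau\mapsto-1/\tau$ carries one into the other. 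Matching the two sides --- where the powers of $2$ in $(8\delta_2)^{2m+1-2r}=2^{3}\cdot 2^{6m-6r}\,\delta_2^{2m+1-2r}$ are exactly the constants $8$ and $2^{6m-6r}$ --- reproduces the cancellation of the degree-$(8m+4)$ forms verbatim from the even-dimensional case, so $d\overline{\eta}(B^Z_{sig})-8\sum_r 2^{6m-6r}\,d\overline{\eta}(D^Z\otimes b_r(T_\CC Z))=0$.

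I expect the main obstacle to lie in the first reduction rather than in the modular algebra. One must secure the variation formula in the clean shape ``$d\overline{\eta}=$ fibre integral of the local index form,'' with no residual local or boundary contributions surviving, and one must confirm that the degree-one component on $Y$ is exactly the slot in which the Section 3 master identity operates. Once the variation formula is pinned down, the cancellation is inherited from Theorem 2.2.1 and the connectedness of $Y$ completes the argument.
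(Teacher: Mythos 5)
Your proposal is correct and follows the paper's own proof essentially step for step: the Bismut--Freed variation formula $d\{\overline{\eta}(D^Z\otimes V)\}=\left\{\int_Z\widehat{A}(TZ,\nabla^Z)\,\mathrm{ch}(V,\nabla^V)\right\}^{(1)}$, combined with $B^Z_{sig}=D^Z\otimes F'$ and $\widehat{A}(TZ)\,\mathrm{ch}(F')=\widehat{L}(TZ)$, reduces the claim to the degree-$(8m+4)$ modular identity (3.14), whose fibre integral over the $(8m+3)$-dimensional fibre kills the relevant $1$-form, and connectedness of $Y$ concludes via the smooth $\mathbf{R}/\mathbf{Z}$-valued reduction, exactly as in Section 3.2.2. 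The only cosmetic discrepancy is that you attribute the constants $8\cdot 2^{6m-6r}$ to the factor $(8\delta_2)^{2m+1-2r}$ itself, whereas in the paper they arise from the cusp values $\delta_1(0)=1/4$, $\varepsilon_1(0)=1/16$ together with the factor $2^{4m+2}$ from the $S$-transformation; the two bookkeepings agree numerically.
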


\begin{theorem} If the fiber is $8m-1$ dimensional, then
\be \exp\left\{2\pi \sqrt{-1}\left(\overline{\eta}(B^{Z}_{sig})-\sum_{r=0}^{m}2^{6m-6r}\overline{\eta}(D^{Z}\otimes z_r(T_\CC Z))\right)\right\}\ee is a constant function on $Y$.

\end{theorem}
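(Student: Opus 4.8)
The plan is to show that the differential of the exponent vanishes identically on the connected base $Y$, so that the $S^1$-valued function in the statement is locally constant and hence constant. Two ingredients enter. The first is the variation formula for reduced $\eta$-invariants in a family, which reduces the assertion to the vanishing of a single degree-one characteristic form on $Y$. The second, which is the genuine content, is a pointwise identity of characteristic forms on $M$ obtained by the same modular-invariance mechanism that proves Theorem 2.2.2.

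First I would recall that for the smooth family of self-adjoint twisted Dirac operators $(D^Z\otimes V)_y$ over the connected base $Y$, the reduction mod $\mathbf{Z}$ of $\overline{\eta}(D^Z\otimes V)$ is smooth, the integer jumps coming from spectral flow disappearing upon reduction, so that $\exp\{2\pi\sqrt{-1}\,\overline{\eta}(D^Z\otimes V)\}$ is a smooth $S^1$-valued function on $Y$. By the variation formula of Atiyah--Patodi--Singer and its family refinement, its logarithmic derivative is the degree-one component on $Y$ of the local index form,
\[
\frac{1}{2\pi\sqrt{-1}}\,d\log\exp\{2\pi\sqrt{-1}\,\overline{\eta}(D^Z\otimes V)\}=\left\{\int_Z \widehat{A}(TZ,\nabla^Z)\,\mathrm{ch}(V,\nabla^V)\right\}^{(1)}.
\]
Since $B^Z_{sig}=D^Z\otimes F'$ and both $\overline{\eta}$ and $\mathrm{ch}$ are additive in the twisting bundle, it suffices to prove that
\[
\left\{\int_Z \widehat{A}(TZ,\nabla^Z)\left(\mathrm{ch}(F',\nabla)-\sum_{r=0}^{m}2^{6m-6r}\,\mathrm{ch}(z_r(T_\CC Z),\nabla)\right)\right\}^{(1)}=0.
\]

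Second I would obtain this from the pointwise identity of characteristic forms on $M$, in degree $8m$,
\[
\{\widehat{A}(TZ,\nabla^Z)\,\mathrm{ch}(F',\nabla)\}^{(8m)}=\sum_{r=0}^{m}2^{6m-6r}\{\widehat{A}(TZ,\nabla^Z)\,\mathrm{ch}(z_r(T_\CC Z),\nabla)\}^{(8m)}.
\]
Integrating the $(8m-1)$-dimensional fibre $Z$ out of this degree-$8m$ form leaves exactly the degree-one form on $Y$ above, which therefore vanishes. This is the analogue, in fibre dimension $8m-1$, of the identity underlying Theorem 2.2.2, and I would establish it by the modular method of Section 3. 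One attaches to $\Theta_2(T_\CC Z)$ the characteristic $q$-series $\{\widehat{A}(TZ)\,\mathrm{ch}(\Theta_2(T_\CC Z))\}$, whose degree-$8m$ component is a modular form of weight $4m$ with form-valued coefficients over $\Gamma^0(2)$; by (2.15) this form is expanded in the basis $\{(8\delta_2)^{2m-2r}\varepsilon_2^r\}_{0\le r\le m}$ of weight-$4m$ forms over $\Gamma^0(2)$, with coefficients $\{\widehat{A}(TZ)\,\mathrm{ch}(z_r(T_\CC Z))\}^{(8m)}$. The transformation $S:\tau\mapsto-1/\tau$ sends this to the signature series over the conjugate group $\Gamma_0(2)$, whose $q^0$-coefficient is the density $\widehat{A}(TZ)\,\mathrm{ch}(F')$ of $B^Z_{sig}=D^Z\otimes F'$; the normalization factors carried through $S$ turn the $r$-th monomial into the coefficient $2^{6m-6r}=8^{2m-2r}$, the power of $8$ matching the power of $8\delta_2$, and comparing $q^0$-coefficients yields the displayed identity.

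The main obstacle is this modular-invariance step. One must verify that the two degree-$8m$ characteristic $q$-series are genuinely modular of weight $4m$ over $\Gamma^0(2)$ and $\Gamma_0(2)$; determine the $S$-transformation of $\delta_2,\varepsilon_2$ (hence of the theta-constants $\theta_1,\theta_2,\theta_3$ in (2.13)) so that the normalization factors assemble into exactly $2^{6m-6r}$ while the weight factor $\tau^{4m}$ is absorbed; and confirm that the $q^0$-coefficient on the signature side is precisely $\widehat{A}(TZ)\,\mathrm{ch}(F')$, with the $\dim Z$-dependent trivial summands in the $z_r$ (the rank $48m-\dim Z$ in (2.17) and its higher analogues) being exactly those forced by this normalization. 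By contrast the analytic input is a citation: the only points to note are that only the degree-one part over the connected $Y$ is needed and that the integer spectral-flow jumps of $\overline{\eta}$ are invisible after forming $\exp\{2\pi\sqrt{-1}(\cdot)\}$. The whole weight of the argument thus rests on the modular arithmetic over $\Gamma^0(2)$, identical to that behind the even-dimensional cancellation formulas of Theorems 2.2.1--2.2.2.
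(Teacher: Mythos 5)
Your proposal is correct and takes essentially the same route as the paper: the pointwise degree-$8m$ identity $\{\widehat{L}(TZ,\nabla^Z)\}^{(8m)}=\sum_{r=0}^{m}2^{6m-6r}\{\widehat{A}(TZ,\nabla^Z)\,\mathrm{ch}(z_r(T_\CC Z))\}^{(8m)}$ obtained from the modularity of $Q_1,Q_2$ over $\Gamma_0(2)$ and $\Gamma^0(2)$, the expansion in $\mathbf{R}[\delta_2,\varepsilon_2]$ and comparison of $q^0$-coefficients (giving exactly $2^{4m}\cdot 2^{2m-2r}\cdot 2^{-4r}=2^{6m-6r}$), followed by fiber integration, the Bismut--Freed variation formula $d\{\overline{\eta}(D^Z\otimes V)\}=\left\{\int_Z\widehat{A}(TZ,\nabla^Z)\,\mathrm{ch}(V,\nabla^V)\right\}^{(1)}$, and connectedness of $Y$. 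The only cosmetic difference is that you fold the identity $\widehat{A}(TZ,\nabla^Z)\,\mathrm{ch}(F',\nabla^{F'})=\widehat{L}(TZ,\nabla^Z)$ into the statement of the degree-$8m$ identity from the start, whereas the paper compares constant terms to get $\widehat{L}$ and then cites this identity when passing to $B^Z_{sig}=D^Z\otimes F'$.
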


Putting $m=0$ and $m=1$ in the above theorems  and using (2.16) as well as (2.17), we have
\begin{corollary}If the fiber is $3$ dimensional, then
\be  \exp\left\{2\pi \sqrt{-1}\left(\overline{\eta}(B^{Z}_{sig})+8\overline{\eta}(D^{Z})\right)\right\}\ee is a constant function on $Y$.
\end{corollary}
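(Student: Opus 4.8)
The plan is to derive this corollary as the $m=0$ specialization of Theorem 2.4.1, so that the whole argument reduces to a short substitution. First I would set $m=0$ in the exponent of Theorem 2.4.1. The sum $\sum_{r=0}^{m}$ then collapses to its single term $r=0$, and the weight $2^{6m-6r}$ becomes $2^{0}=1$, so the bracketed combination reduces to
\[
\overline{\eta}(B^Z_{sig})-8\,\overline{\eta}\bigl(D^Z\otimes b_0(T_\CC Z)\bigr).
\]

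Next I would insert the explicit value $b_0(T_\CC Z)=-\mathbf{C}$ recorded in (2.16). The reduced $\eta$-invariant of a family twisted by a virtual bundle is defined by linear extension, hence is additive under direct sums and changes sign under formal negation; since twisting by the trivial line $\mathbf{C}$ leaves $D^Z$ unchanged, this gives
\[
\overline{\eta}\bigl(D^Z\otimes b_0(T_\CC Z)\bigr)=\overline{\eta}\bigl(D^Z\otimes(-\mathbf{C})\bigr)=-\overline{\eta}(D^Z).
\]
Substituting back turns the bracketed combination into $\overline{\eta}(B^Z_{sig})+8\,\overline{\eta}(D^Z)$, which is exactly the exponent appearing in the statement. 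As Theorem 2.4.1 asserts that the exponential of $2\pi\sqrt{-1}$ times the original combination is constant on $Y$, the identical conclusion holds after this substitution, which is the corollary.

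Because the passage from Theorem 2.4.1 is purely algebraic, there is no genuine obstacle at the level of the corollary; its content lies entirely in the theorem. Were one to prove the theorem directly, the difficulty would be to show that the differential on $Y$ of the displayed linear combination of reduced $\eta$-invariants vanishes. I would expect this to follow by combining the variation formula for reduced $\eta$-invariants of families of odd-dimensional Dirac-type operators --- expressing $d\overline{\eta}$ as the degree-one component on $Y$ of the fiberwise integral of the local family-index density --- with the modular invariance of the characteristic forms assembled from $\Theta_2(T_\CC Z)$ through (2.14). That defining relation is arranged precisely so that, on an $(8m+3)$-dimensional fiber, the resulting combination is a modular form whose relevant component is forced to vanish, making the exponential locally constant and hence constant on the connected base. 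For the present corollary all of this is already packaged inside Theorem 2.4.1, and only the identity $b_0=-\mathbf{C}$ is needed.
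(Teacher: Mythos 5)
Your proposal is correct and follows exactly the paper's own route: the authors obtain Corollary 2.4.1 by putting $m=0$ in Theorem 2.4.1 and substituting $b_0(T_\CC Z)=-\mathbf{C}$ from (2.16), with the sign flip from the formal negative turning $-8\,\overline{\eta}(D^Z\otimes b_0(T_\CC Z))$ into $+8\,\overline{\eta}(D^Z)$, precisely as you argue. Your closing sketch of how Theorem 2.4.1 itself is proved (the Bismut--Freed variation formula $d\{\overline{\eta}\}$ as the degree-one component of the fiberwise integral of the index density, combined with the modular-invariance identity (3.14)) also matches the paper's Section 3 argument.
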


\begin{corollary}If the fiber is $7$ dimensional, then
\be \exp\left\{2\pi \sqrt{-1}\left(\overline{\eta}(B^{Z}_{sig})+\overline{\eta}(D^{Z}\otimes T_\CC Z)-23\overline{\eta}(D^{Z})\right)\right\}\ee is a constant function on $Y$.

\end{corollary}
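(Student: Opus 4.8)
The plan is to derive this corollary entirely as the $m=1$ case of Theorem 2.4.2 (the $8m-1$ dimensional statement); no new analytic input is required, only a substitution together with the additivity of the reduced $\eta$-invariant in its coefficient bundle. First I would put $m=1$, so that the fiber dimension is $8m-1=7$ and the finite sum in the exponent of Theorem 2.4.2 collapses to its two terms
\[
\sum_{r=0}^{1}2^{6-6r}\,\overline{\eta}(D^{Z}\otimes z_r(T_\CC Z))
= 2^{6}\,\overline{\eta}(D^{Z}\otimes z_0(T_\CC Z))+\overline{\eta}(D^{Z}\otimes z_1(T_\CC Z)).
\]

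Next I would insert the explicit bundles from (2.17). With $\dim Z=7$ these are $z_0(T_\CC Z)=\CC$ and $z_1(T_\CC Z)=-T_\CC Z-\CC^{48m-\dim Z}=-T_\CC Z-\CC^{41}$. To evaluate $\overline{\eta}$ on the virtual bundle $z_1$ I would use that $\overline{\eta}(D^{Z}\otimes\,\cdot\,)$ is additive: for a direct sum the twisted operator splits as $(D^{Z}\otimes V_1)\oplus(D^{Z}\otimes V_2)$, so that both the eta function and $\dim\ker$ entering (2.42) add, and hence $\overline{\eta}(D^{Z}\otimes\,\cdot\,)$ extends by formal differences to a $\mathbf{Z}$-linear real-valued functional on $K(M)$. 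This yields $\overline{\eta}(D^{Z}\otimes\CC)=\overline{\eta}(D^{Z})$, $\overline{\eta}(D^{Z}\otimes\CC^{41})=41\,\overline{\eta}(D^{Z})$, and $\overline{\eta}(D^{Z}\otimes(-T_\CC Z))=-\overline{\eta}(D^{Z}\otimes T_\CC Z)$.

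Substituting and using $64-41=23$, the bracketed sum becomes
\[
2^{6}\,\overline{\eta}(D^{Z})-\overline{\eta}(D^{Z}\otimes T_\CC Z)-41\,\overline{\eta}(D^{Z})
=23\,\overline{\eta}(D^{Z})-\overline{\eta}(D^{Z}\otimes T_\CC Z),
\]
so the exponent $\overline{\eta}(B^{Z}_{sig})-\sum_{r}2^{6-6r}\overline{\eta}(D^{Z}\otimes z_r)$ of Theorem 2.4.2 equals $\overline{\eta}(B^{Z}_{sig})+\overline{\eta}(D^{Z}\otimes T_\CC Z)-23\,\overline{\eta}(D^{Z})$, which is exactly the exponent in the claimed expression; its exponential is therefore constant on $Y$ by Theorem 2.4.2. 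The only point demanding genuine care is this additivity: the definition (2.42) is stated only for honest Hermitian bundles $V$, whereas $z_1(T_\CC Z)$ is virtual, so I would record the split-spectrum argument above to justify extending $\overline{\eta}$ to $K(M)$. Since Theorem 2.4.2 already furnishes constancy modulo $\mathbf{Z}$, the displayed identity need only hold as an equality of real numbers, and the integer $41$ coming from $\CC^{48m-\dim Z}$ is absorbed harmlessly; no further obstacle arises.
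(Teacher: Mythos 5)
Your proposal is correct and follows essentially the same route as the paper, which obtains this corollary precisely by setting $m=1$ in Theorem 2.4.2 and substituting $z_0(T_\CC Z)=\mathbf{C}$ and $z_1(T_\CC Z)=-T_\CC Z-\mathbf{C}^{41}$ from (2.17), with $2^6-41=23$ exactly as you compute. Your explicit justification that $\overline{\eta}(D^{Z}\otimes\,\cdot\,)$ extends additively to formal differences of bundles with their canonically induced connections (the paper leaves this implicit, and it is the right caveat: the extension is on formal differences with connections, not on abstract classes in $K(M)$) is a sound supplement rather than a deviation.
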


\begin{corollary}If the fiber is $11$ dimensional, then
\be \exp\left\{2\pi \sqrt{-1} \left(\overline{\eta}(B^{Z}_{sig})-8\overline{\eta}(D^{Z}\otimes T_\CC Z)+24\overline{\eta}(D^{Z})\right)\right\}\ee is a constant function on $Y$.
\end{corollary}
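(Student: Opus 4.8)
The plan is to obtain this $11$-dimensional statement as the special case $m=1$ of Theorem 2.4.1, since $8\cdot 1+3=11$, and then to rewrite the resulting linear combination of reduced $\eta$-invariants using the explicit bundle identities (2.16). The genuine analytic content---that the exponential of the indicated linear combination is locally constant on $Y$---is furnished by Theorem 2.4.1 itself, whose proof through the modular invariance of the relevant characteristic forms (coupled to the family variation formula for reduced $\eta$-invariants) is deferred to Section 3. What remains here is purely an identification of two expressions.

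First I would specialize the exponent in (2.43) to $m=1$. The sum $\sum_{r=0}^{1}2^{6-6r}\,\overline{\eta}(D^Z\otimes b_r(T_\CC Z))$ then consists of exactly two terms, carrying the coefficients $2^{6}=64$ for $r=0$ and $2^{0}=1$ for $r=1$. Next I substitute (2.16): with $\dim Z=11$ and $m=1$ one has $b_0(T_\CC Z)=-\mathbf{C}$ and $b_1(T_\CC Z)=T_\CC Z+\mathbf{C}^{24\cdot 3-11}=T_\CC Z+\mathbf{C}^{61}$.

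I then invoke the additivity of the reduced $\eta$-invariant in the twisting bundle. Because $(D^Z\otimes(E\oplus E'))_y$ is the orthogonal direct sum $(D^Z\otimes E)_y\oplus(D^Z\otimes E')_y$, both the $\eta$-function and the kernel dimension entering (2.42) are additive, so $V\mapsto\overline{\eta}(D^Z\otimes V)$ extends to an additive function on $K(M)$; in particular $\overline{\eta}(D^Z\otimes(-\mathbf{C}))=-\overline{\eta}(D^Z)$ and $\overline{\eta}(D^Z\otimes\mathbf{C}^{61})=61\,\overline{\eta}(D^Z)$. Combining the two terms gives
$$64\bigl(-\overline{\eta}(D^Z)\bigr)+\bigl(\overline{\eta}(D^Z\otimes T_\CC Z)+61\,\overline{\eta}(D^Z)\bigr)=\overline{\eta}(D^Z\otimes T_\CC Z)-3\,\overline{\eta}(D^Z),$$
whence, after multiplying by $-8$ and adjoining $\overline{\eta}(B^Z_{sig})$, the exponent becomes
$$\overline{\eta}(B^Z_{sig})-8\,\overline{\eta}(D^Z\otimes T_\CC Z)+24\,\overline{\eta}(D^Z).$$
This is exactly the exponent appearing in (2.47), so the locally constant function produced by Theorem 2.4.1 coincides with the one in the statement.

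There is no real obstacle at the level of this corollary---the difficulty resides entirely in Theorem 2.4.1, which is assumed. The only place demanding care is the elementary bookkeeping of the trivial summands: one must check that the count $24(2m+1)-\dim Z$ in (2.16) yields precisely $\mathbf{C}^{61}$ when $m=1$, so that the coefficient of $\overline{\eta}(D^Z)$ resolves to $-3$ before the factor $-8$ is applied, producing the integer $+24$ in (2.47).
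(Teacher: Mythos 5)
Your proposal is correct and follows essentially the same route as the paper, which obtains this corollary exactly by putting $m=1$ in Theorem 2.4.1 and substituting (2.16). Your bookkeeping checks out: $24(2m+1)-\dim Z=72-11=61$, and additivity of $\overline{\eta}(D^Z\otimes\cdot)$ over direct sums (extended to virtual bundles) turns $64\cdot(-\overline{\eta}(D^Z))+\overline{\eta}(D^Z\otimes T_\CC Z)+61\,\overline{\eta}(D^Z)$ into $\overline{\eta}(D^Z\otimes T_\CC Z)-3\,\overline{\eta}(D^Z)$, which after the factor $-8$ gives precisely the exponent in (2.47).
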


\section{Proofs}
In this section, we prove the theorems stated in Section 2. 

\subsection{Preliminaries}
Let $$ SL_2(\mathbf{Z}):= \left\{\left.\left(\begin{array}{cc}
                                      a&b\\
                                      c&d
                                     \end{array}\right)\right|a,b,c,d\in\mathbf{Z},\ ad-bc=1
                                     \right\}
                                     $$
 as usual be the modular group. Let
$$S=\left(\begin{array}{cc}
      0&-1\\
      1&0
\end{array}\right), \ \ \  T=\left(\begin{array}{cc}
      1&1\\
      0&1
\end{array}\right)$$
be the two generators of $ SL_2(\mathbf{Z})$. Their actions on
$\mathbf{H}$ are given by
$$ S:\tau\rightarrow-\frac{1}{\tau}, \ \ \ T:\tau\rightarrow\tau+1.$$

Let
$$ \Gamma_0(2)=\left\{\left.\left(\begin{array}{cc}
a&b\\
c&d
\end{array}\right)\in SL_2(\mathbf{Z})\right|c\equiv0\ \ (\rm mod \ \ 2)\right\},$$

$$ \Gamma^0(2)=\left\{\left.\left(\begin{array}{cc}
a&b\\
c&d
\end{array}\right)\in SL_2(\mathbf{Z})\right|b\equiv0\ \ (\rm mod \ \ 2)\right\}$$
be the two modular subgroups of $SL_2(\mathbf{Z})$. It is known
that the generators of $\Gamma_0(2)$ are $T,ST^2ST$ and the generators
of $\Gamma^0(2)$ are $STS,T^2STS$.(cf. \cite{C}).

If we act theta-functions by $S$ and $T$, the theta functions obey
the following transformation laws (cf. \cite{C}), \be
\theta(v,\tau+1)=e^{\pi \sqrt{-1}\over 4}\theta(v,\tau),\ \ \
\theta\left(v,-{1}/{\tau}\right)={1\over\sqrt{-1}}\left({\tau\over
\sqrt{-1}}\right)^{1/2} e^{\pi\sqrt{-1}\tau v^2}\theta\left(\tau
v,\tau\right)\ ;\ee \be \theta_1(v,\tau+1)=e^{\pi \sqrt{-1}\over
4}\theta_1(v,\tau),\ \ \
\theta_1\left(v,-{1}/{\tau}\right)=\left({\tau\over
\sqrt{-1}}\right)^{1/2} e^{\pi\sqrt{-1}\tau v^2}\theta_2(\tau
v,\tau)\ ;\ee \be\theta_2(v,\tau+1)=\theta_3(v,\tau),\ \ \
\theta_2\left(v,-{1}/{\tau}\right)=\left({\tau\over
\sqrt{-1}}\right)^{1/2} e^{\pi\sqrt{-1}\tau v^2}\theta_1(\tau
v,\tau)\ ;\ee \be\theta_3(v,\tau+1)=\theta_2(v,\tau),\ \ \
\theta_3\left(v,-{1}/{\tau}\right)=\left({\tau\over
\sqrt{-1}}\right)^{1/2} e^{\pi\sqrt{-1}\tau v^2}\theta_3(\tau
v,\tau)\ .\ee

\begin{definition} Let $\Gamma$ be a subgroup of $SL_2(\mathbf{Z}).$ A modular form over $\Gamma$ is a holomorphic function $f(\tau)$ on $\mathbf{H}\cup
\{\infty\}$ such that for any
 $$ g=\left(\begin{array}{cc}
             a&b\\
             c&d
             \end{array}\right)\in\Gamma\ ,$$
 the following property holds
 $$f(g\tau):=f(\frac{a\tau+b}{c\tau+d})=\chi(g)(c\tau+d)^lf(\tau), $$
 where $\chi:\Gamma\rightarrow\mathbf{C}^*$ is a character of
 $\Gamma$ and $l$ is called the weight of $f$.
 \end{definition}

If $\Gamma$ is a modular subgroup, let
$\mathcal{M}_\mathbf{R}(\Gamma)$ denote the ring of modular forms
over $\Gamma$ with real Fourier coefficients. Writing simply
$\theta_j=\theta_j(0,\tau),\ 1\leq j \leq 3,$ we introduce four (the second two have already appeared in Section 2.1)
explicit modular forms (cf. \cite{Lan}, \cite{Liu1}),
$$ \delta_1(\tau)=\frac{1}{8}(\theta_2^4+\theta_3^4), \ \ \ \
\varepsilon_1(\tau)=\frac{1}{16}\theta_2^4 \theta_3^4\ ,$$
$$\delta_2(\tau)=-\frac{1}{8}(\theta_1^4+\theta_3^4), \ \ \ \
\varepsilon_2(\tau)=\frac{1}{16}\theta_1^4 \theta_3^4\ ,$$
They
have the following Fourier expansions in $q^{1/2}$:
$$\delta_1(\tau)={1\over 4}+6q+6q^2+\cdots,\ \ \ \ \varepsilon_1(\tau)={1\over
16}-q+7q^2+\cdots\ , $$
$$\delta_2(\tau)=-{1\over 8}-3q^{1/2}-3q+\cdots,\ \ \ \
\varepsilon_2(\tau)=q^{1/2}+8q+\cdots\ ,$$ where the
\textquotedblleft $\cdots$" terms are the higher degree terms, all
of which have integral coefficients. They also satisfy the
transformation laws (cf. \cite{Lan}, \cite{Liu1}), \be
\delta_2\left(-\frac{1}{\tau}\right)=\tau^2\delta_1(\tau)\ \ \ \ \ ,
\ \ \ \ \
\varepsilon_2\left(-\frac{1}{\tau}\right)=\tau^4\varepsilon_1(\tau).\ee

Let
$\widehat{A}(TZ, \nabla^{Z})$ and $L(TZ, \nabla^{Z})$ be the
Hirzebruch characteristic forms defined respectively by (cf.
\cite{Z}) for $(TZ, \nabla^Z)$:
\begin{equation}
\begin{split}
&\widehat{A}(TZ, \nabla^{Z}) ={\det}^{1/2}\left({{\sqrt{-1}\over
4\pi}R^{Z} \over \sinh\left({ \sqrt{-1}\over
4\pi}R^{Z}\right)}\right), \\ &\widehat{L}(TZ, \nabla^{Z})
={\det}^{1/2}\left({{\sqrt{-1}\over 2\pi}R^{Z} \over \tanh\left({
\sqrt{-1}\over 4\pi}R^{Z}\right)}\right).
\end{split}
\end{equation}

If $\omega$ is a differential form, denote the $j$-component of $\omega$ by $\omega^{(j)}$.

\subsection{Proofs of Theorem 2.2.1, 2.2.3, 2.2.5, 2.3.1, 2.3.3 and 2.4.1.} Suppose the dimension of $TZ$ be $8m+1, 8m+2$ or $8m+3$.  For the vertical tangent bundle $TZ$, set \be P_1(\nabla^{Z}, \tau):=\left\{ \widehat{L}(TZ,\nabla^{Z})
\mathrm{ch}\left(\Theta_1(T_\CC
Z),\nabla^{\Theta_1(T_\CC Z)}\right)\right\}^{(8m+4)}\ee and \be
P_2(\nabla^{Z}, \tau):= \left\{ \widehat{A}(TZ,\nabla^{Z})
\mathrm{ch}\left(\Theta_2(T_\CC Z),\nabla^{\Theta_2(T_\CC
Z)}\right)\right\}^{(8m+4)}. \ee

\begin{proposition} $P_1(\nabla^{Z}, \tau)$ is a modular form of weight $4m+2$ over $\Gamma_0(2)$; $P_2(\nabla^{Z}, \tau)$ is a modular form of weight $4m+2$ over $\Gamma^0(2)$.
\end{proposition}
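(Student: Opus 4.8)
\emph{Proof proposal.} The plan is to rewrite $P_1(\nabla^{Z},\tau)$ and $P_2(\nabla^{Z},\tau)$ as degree-$(8m+4)$ components of products of Jacobi theta functions evaluated at the Chern roots of $T_\CC Z$, and then to read off the modular behaviour directly from the transformation laws (3.1)--(3.4). Let $\{\pm x_j\}$ be the formal Chern roots of $T_\CC Z$ and put $v_j=x_j/(2\pi\sqrt{-1})$. Matching the infinite products in (2.9)--(2.12) against the Chern character formulas (2.2)--(2.4) applied to the definitions (2.5)--(2.6), and comparing the leading $\sinh$ and $\tanh$ factors of $\widehat{A}$ and $\widehat{L}$ with $\theta'(0,\tau)/\theta(v_j,\tau)$, one obtains, up to a universal multiplicative constant, identities of the shape
\be
\widehat{L}(TZ,\nabla^{Z})\,\mathrm{ch}\left(\Theta_1(T_\CC Z),\nabla^{\Theta_1(T_\CC Z)}\right)=\prod_j x_j\,\frac{\theta'(0,\tau)}{\theta(v_j,\tau)}\,\frac{\theta_1(v_j,\tau)}{\theta_1(0,\tau)},
\ee
and the same formula for $\widehat{A}\,\mathrm{ch}(\Theta_2)$ with $\theta_1$ replaced by $\theta_2$. (The prefactor $\cos(\pi v_j)=\cosh(x_j/2)$ from $\theta_1$ converts the $\widehat{A}$-type $x_j/\sinh(x_j/2)$ into the $\widehat{L}$-type $x_j\coth(x_j/2)$, while $\theta_2$ carries no such prefactor; this is exactly the $\widehat{L}/\widehat{A}$ dichotomy.) Each factor is an even power series in $x_j$, so the product is a symmetric power series whose degree-$(8m+4)$ part is $P_1$, respectively $P_2$.

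The second step is to transform each factor under the generators $S:\tau\mapsto-1/\tau$ and $T:\tau\mapsto\tau+1$. Under $T$ the phases $e^{\pi\sqrt{-1}/4}$ picked up by $\theta'(0,\tau)$ and $\theta_1(v_j,\tau)$ cancel those of $\theta(v_j,\tau)$ and $\theta_1(0,\tau)$, so $P_1(\nabla^{Z},\tau+1)=P_1(\nabla^{Z},\tau)$; since $\theta_2(v,\tau+1)=\theta_3(v,\tau)$, the same computation sends the $\theta_2$-product to the analogous $\theta_3$-product. Under $S$, within each quotient the four half-integral factors $(\tau/\sqrt{-1})^{1/2}$ cancel in pairs, the Gaussian $e^{\pi\sqrt{-1}\tau v_j^2}$ from $\theta_1(v_j,\tau)$ cancels the one from $\theta(v_j,\tau)$, the derivative $\theta'(0,-1/\tau)$ contributes the single surviving factor $\tau$, and $\theta_1$ is replaced by $\theta_2$; the arguments become $\tau v_j$, i.e. the Chern roots are rescaled $x_j\mapsto\tau x_j$. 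Because extracting the degree-$(8m+4)$ component selects the part homogeneous of degree $4m+2$ in the $x_j$, the rescaling contributes exactly $\tau^{4m+2}$, yielding
\be
P_1(\nabla^{Z},-1/\tau)=\tau^{4m+2}P_2(\nabla^{Z},\tau),
\ee
and symmetrically $P_2(\nabla^{Z},-1/\tau)=\tau^{4m+2}P_1(\nabla^{Z},\tau)$.

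Finally I would deduce the subgroup statements from the generators listed in Section 3.1. Since $\Gamma_0(2)=\langle T,\,ST^2ST\rangle$ and $P_1$ is already $T$-invariant, it remains to evaluate $P_1$ under $ST^2ST$ by composing the $S$- and $T$-relations above; tracking the induced permutation of the $\theta_1$-, $\theta_2$- and $\theta_3$-products (which mirrors the permutation of $\theta_1,\theta_2,\theta_3$ themselves) shows that the $\theta_3$-product never survives and that $P_1$ returns to itself with automorphy factor $(c\tau+d)^{4m+2}$, so $P_1$ is a modular form of weight $4m+2$ over $\Gamma_0(2)$. The same computation with the generators $STS,\,T^2STS$ of $\Gamma^0(2)$ stabilizes the $\theta_2$-product and gives the weight $4m+2$ modularity of $P_2$ over $\Gamma^0(2)$. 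Holomorphy on $\mathbf{H}$ is immediate; holomorphy at the cusp $\infty$ holds because $\Theta_1,\Theta_2$ are power series in $q^{1/2}$ with only non-negative powers, and the reciprocity relation transfers holomorphy to the cusp $0$.

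The main obstacle is the middle step. One must verify that all the half-integral factors $(\tau/\sqrt{-1})^{1/2}$ and the Gaussian exponentials genuinely cancel inside each theta-quotient, and that the per-factor powers of $\tau$ reassemble---via the degree-$(8m+4)$ homogeneity---into precisely $\tau^{4m+2}$ and no more. Getting this weight exactly right, together with the resulting character (into which the universal constants and residual phases are absorbed), and confirming through the $ST^2ST$ and $STS$ compositions that $\Gamma_0(2)$ and $\Gamma^0(2)$ really are the stabilizing subgroups, is where the argument must be carried out with care.
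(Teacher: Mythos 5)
Your proposal follows essentially the same route as the paper's proof: rewrite $P_1$ and $P_2$ by Chern--Weil theory as degree-$(8m+4)$ components of theta-function expressions in the curvature (the paper's (3.9)--(3.10)), apply the transformation laws (3.1)--(3.4) under $S$ and $T$, and conclude from the generators $T, ST^2ST$ of $\Gamma_0(2)$ and $STS, T^2STS$ of $\Gamma^0(2)$; the bookkeeping you describe (cancellation of the $(\tau/\sqrt{-1})^{1/2}$ factors and Gaussians, the extra $\tau$ from $\theta'(0,-1/\tau)$, and the conversion of the rescaling $v_j\mapsto \tau v_j$ into the weight via degree-$(8m+4)$ homogeneity) is exactly what the paper compresses into (3.11).

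One quantitative slip is worth flagging: your $S$-relation $P_1(-1/\tau)=\tau^{4m+2}P_2(\tau)$ is missing the factor $2^{4m+2}$; the paper's (3.11) reads $P_1(-1/\tau)=2^{4m+2}\tau^{4m+2}P_2(\tau)$. The source of the discrepancy is your first display, where you evaluate both theta-products at the same arguments $v_j$. In fact the $\widehat{L}$-side and $\widehat{A}$-side are normalized differently --- in (3.9) the theta-arguments are $R^Z/2\pi^2$, twice the arguments $R^Z/4\pi^2$ of (3.10), reflecting that the $\widehat{L}$-factor is $x\coth(x/2)$ rather than $(x/2)/\sinh(x/2)$ --- so under $S$ the argument of $P_1$ becomes $2\tau\cdot(R^Z/4\pi^2)$ and homogeneity yields $(2\tau)^{4m+2}$, not $\tau^{4m+2}$. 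For the Proposition itself this is harmless: in any word in $S,T$ that returns $P_1$ to itself (e.g.\ $ST^2ST$, where the two applications of $S$ contribute $2^{4m+2}$ and $2^{-(4m+2)}$) the constants cancel, so your weight, subgroup, and character conclusions stand. But the constant cannot simply be absorbed in this paper: it is precisely the $2^{4m+2}$ appearing in (3.13), which generates the coefficients $2^{6m-6r}$ and the overall factor $8$ in (3.14) and hence in all the anomaly cancellation formulas, so in a complete write-up it must be tracked exactly.
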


\begin{proof}  In terms of the theta functions, by the Chern-weil theory, the following identities hold,
\be P_1(\nabla^{Z}, \tau)=\left\{ \mathrm{det}^{1\over
2}\left(\frac{R^{Z}}{2{\pi}^2}\frac{\theta'(0,\tau)}{\theta(\frac{R^{Z}}{2{\pi}^2},\tau)}
\frac{\theta_{1}(\frac{R^{Z}}{2{\pi}^2},\tau)}{\theta_{1}(0,\tau)}\right)\right\}^{(8m+4)},
\ee

\be P_2(\nabla^{Z}, \tau)=\left\{\mathrm{det}^{1\over
2}\left(\frac{R^{Z}}{4{\pi}^2}\frac{\theta'(0,\tau)}{\theta(\frac{R^{Z}}{4{\pi}^2},\tau)}
\frac{\theta_{2}(\frac{R^{Z}}{4{\pi}^2},\tau)}{\theta_{2}(0,\tau)}\right)\right\}^{(8m+4)}.
\ee

Applying the transformation laws of the theta functions, we have
\be P_1\left(\nabla^{Z}, -\frac{1}{\tau}\right)=2^{4m+2}\tau^{4m+2}P_2(\nabla^{Z}, \tau), \ P_1(\nabla^{Z}, \tau+1)=P_1(\nabla^{Z}, \tau).  \ee Because the generators of $\Gamma_0(2)$ are $T,ST^2ST$ and the generators
of $\Gamma^0(2)$ are $STS,T^2STS$, the proposition follows easily.
\end{proof}

\begin{lemma} [\protect cf. \cite{Liu1}] One has that $\delta_1(\tau)\ (resp.\ \varepsilon_1(\tau) ) $
is a modular form of weight $2 \ (resp.\ 4)$ over $\Gamma_0(2)$,
$\delta_2(\tau) \ (resp.\ \varepsilon_2(\tau))$ is a modular form
of weight $2\ (resp.\ 4)$ over $\Gamma^0(2)$, while
$\delta_3(\tau) \ (resp.\ \varepsilon_3(\tau))$ is a modular form
of weight $2\ (resp.\ 4)$ over $\Gamma_\theta(2)$ and moreover
$\mathcal{M}_\mathbf{R}(\Gamma^0(2))=\mathbf{R}[\delta_2(\tau),
\varepsilon_2(\tau)]$.
\end{lemma}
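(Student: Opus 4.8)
The plan is to reduce everything to the behaviour of the four theta constants $\theta_j=\theta_j(0,\tau)$ under the generators $S,T$ of $SL_2(\mathbf{Z})$, and then to read off modularity over each index-three subgroup from a short finite computation. First I would specialize the transformation laws (3.2)--(3.5) to $v=0$ and take fourth powers. Since $\theta(0,\tau)\equiv 0$, only $\theta_1^4,\theta_2^4,\theta_3^4$ survive, and one obtains the clean rules: under $T$ one has $\theta_1^4\mapsto-\theta_1^4$, $\theta_2^4\mapsto\theta_3^4$, $\theta_3^4\mapsto\theta_2^4$, while under $S$ one has $\theta_1^4\mapsto-\tau^2\theta_2^4$, $\theta_2^4\mapsto-\tau^2\theta_1^4$, $\theta_3^4\mapsto-\tau^2\theta_3^4$, the factor $-\tau^2$ being the fourth power of $(\tau/\sqrt{-1})^{1/2}$. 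Substituting these into the definitions of $\delta_1,\delta_2,\delta_3$ and $\varepsilon_1,\varepsilon_2,\varepsilon_3$ (with $\delta_3,\varepsilon_3$ built analogously from the theta constants) shows that $S$ and $T$ act on each of the two triples exactly as the transpositions $(1\,2)$ and $(2\,3)$ generating a copy of $S_3$, carrying the weight-$2$ automorphy factor $\tau^2$ (for the $\delta$'s) or the weight-$4$ factor $\tau^4$ (for the $\varepsilon$'s) and no extra root of unity; in particular this recovers (3.6).

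With this $S_3$-picture in hand, modularity over the three subgroups is immediate, because $\Gamma_0(2)$, $\Gamma^0(2)$ and $\Gamma_\theta(2)$ are exactly the point-stabilizers of the indices $1,2,3$. Writing $f\mid\gamma$ for the usual weight-$k$ slash action and using the generators recorded in the text ($T,ST^2ST$ for $\Gamma_0(2)$; $STS,T^2STS$ for $\Gamma^0(2)$; and $S,T^2$ for $\Gamma_\theta(2)$), I would compose the $S$- and $T$-actions and check that $\delta_1,\varepsilon_1$ are fixed by the generators of $\Gamma_0(2)$, that $\delta_2,\varepsilon_2$ are fixed by those of $\Gamma^0(2)$ (for instance $\delta_2\mid STS=((\delta_2\mid S)\mid T)\mid S=(\delta_1\mid T)\mid S=\delta_1\mid S=\delta_2$, and likewise $\delta_2\mid T^2STS=\delta_2$), and that $\delta_3,\varepsilon_3$ are fixed by those of $\Gamma_\theta(2)$, always with automorphy factor exactly $(c\tau+d)^{2}$ or $(c\tau+d)^{4}$ and trivial character. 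Holomorphy on $\mathbf{H}$ is clear, and holomorphy at the cusps follows from the $q^{1/2}$-expansions in Section 2.1: nonnegative powers at $\infty$, while the remaining cusp is controlled by the $S$-transform, which lands on another member of the same triple with an equally good expansion. Since the expansions have real (indeed integral) coefficients, this gives the first three assertions with membership in $\mathcal{M}_\mathbf{R}$.

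For the structure statement $\mathcal{M}_\mathbf{R}(\Gamma^0(2))=\mathbf{R}[\delta_2,\varepsilon_2]$ I would argue by a dimension count. The group $\Gamma^0(2)$ has index $3$ in $SL_2(\mathbf{Z})$, genus $0$, two cusps and a single elliptic point of order $2$; the standard dimension formula then gives $\dim_\mathbf{R}\mathcal{M}_k(\Gamma^0(2))=\lfloor k/4\rfloor+1$ for even $k\ge0$, equivalently the Poincar\'e series $1/((1-t^2)(1-t^4))$. On the other side, the monomials $\delta_2^{a}\varepsilon_2^{b}$ with $2a+4b=k$ number exactly $\lfloor k/4\rfloor+1$, and they are linearly independent as $q^{1/2}$-series: since $\varepsilon_2=q^{1/2}+\cdots$ has positive order while $\delta_2=-\tfrac18+\cdots$ is a unit, the product $\delta_2^{a}\varepsilon_2^{b}$ has leading term of order $q^{b/2}$, so distinct $b$ produce distinct leading orders. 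Hence $\mathbf{R}[\delta_2,\varepsilon_2]$ already attains the full dimension in every even weight and must coincide with $\mathcal{M}_\mathbf{R}(\Gamma^0(2))$; in particular $\delta_2$ and $\varepsilon_2$ are algebraically independent.

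The transformation bookkeeping of the first two paragraphs is entirely routine once the $S_3$-symmetry is isolated; the only genuinely substantive input is the dimension formula used in the last step, which rests on the valence formula together with the correct ramification data (index, genus, number of cusps and elliptic points) of $\Gamma^0(2)$. I therefore expect the verification of those numerical invariants to be the main point requiring care, and would either cite it from \cite{Lan} or establish it directly from a fundamental domain for $\Gamma^0(2)$.
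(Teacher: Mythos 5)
Your proposal is correct, but there is nothing in the paper to compare it against line-by-line: the paper does not prove this lemma at all, importing it wholesale from \cite{Liu1} (which in turn rests on \cite{Lan}). What you have written is in effect a self-contained reconstruction of the standard argument behind that citation, and it checks out. The $S_3$-bookkeeping is right: specializing (3.1)--(3.4) at $v=0$ and taking fourth powers gives exactly your rules, and with the sign conventions $\delta_3=\frac18(\theta_1^4-\theta_2^4)$, $\varepsilon_3=-\frac{1}{16}\theta_1^4\theta_2^4$ one indeed gets $\delta_2(\tau+1)=\delta_3(\tau)$, $\delta_3(-1/\tau)=\tau^2\delta_3(\tau)$, etc., with trivial character; your word ``analogously'' hides the one spot where care is needed, since a wrong sign on $\delta_3$ or $\varepsilon_3$ would introduce characters into the permutation picture (harmlessly for the part of the lemma the paper actually uses, which only involves $\delta_2,\varepsilon_2$, and your verification $\delta_2\mid STS=\delta_2$, $\delta_2\mid T^2STS=\delta_2$ does not route through $\delta_3$). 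Your generator list for the theta group, $S$ and $T^2$, is also correct even though the paper only records generators for $\Gamma_0(2)$ and $\Gamma^0(2)$. For the structure statement, your numerical data are right: $\Gamma^0(2)$ is conjugate by $S$ to $\Gamma_0(2)$, which has genus $0$, two cusps, one elliptic point of order $2$ and none of order $3$, and the dimension formula gives $(k-1)(g-1)+\lfloor k/4\rfloor+k=\lfloor k/4\rfloor+1$ for even $k$, matching your count of monomials $\delta_2^a\varepsilon_2^b$ with $2a+4b=k$, which are independent by your leading-order argument since $\varepsilon_2=q^{1/2}+\cdots$ while $\delta_2(\infty)=-\frac18\neq0$. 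One pedantic point worth a sentence in a final write-up: the dimension formula computes complex dimensions, whereas the lemma concerns $\mathcal{M}_\mathbf{R}$; this is harmless because your monomial basis has real $q^{1/2}$-coefficients and the triangular leading-order structure forces any real-coefficient form to be a real combination of them. The trade-off relative to the paper's bare citation is exactly as you diagnose: your route makes the lemma checkable at the cost of one external input, the valence/dimension formula together with the ramification data of $\Gamma_0(2)$.
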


We then apply Lemma 3.1 to $P_2(\nabla^Z, \tau)$ to get that \be
P_2(\nabla^Z, \tau)
 =h_0(T_\CC Z)(8\delta_2)^{2m+1}+h_1(T_\CC Z)(8\delta_2)^{2m-1}\varepsilon_2
+\cdots+h_m(T_\CC Z)(8\delta_2)\varepsilon_2^m. \ee Comparing (2.14), we can see that
$$h_r(T_\CC Z)=\left\{ \widehat{A}(TZ,\nabla^{Z})
\mathrm{ch}(b_r(T_\CC Z))\right\}^{(8m+4)}, \ 0\leq r \leq m. $$

By (3.5), (3.11) and (3.12), we have
\be P_1(\nabla^Z, \tau)=2^{4m+2}[h_0(T_\CC Z)(8\delta_1)^{2m+1}+h_1(T_\CC Z)(8\delta_1)^{2m-1}\varepsilon_1
+\cdots+h_m(T_\CC Z)(8\delta_1)\varepsilon_1^m]. \ee

Comparing the constant term of the above equality, we see that
\be \{\widehat{L}(TZ, \nabla^Z)\}^{(8m+4)}=8\sum_{r=0}^{m}2^{6m-6r}\{\widehat{A}(TZ,\nabla^{Z})
\mathrm{ch}(b_r(T_\CC Z))\}^{(8m+4)}. \ee

In the following, we will deal with the even case and odd case respectively.

\subsubsection{The case of  even dimensional fibers}

We have the following Bismut-Freed theorem on the curvature of the determinant line bundle with Bismut-Freed connection. 
\begin{theorem}[\protect Bismut-Freed, \cite{BF2}]
\be R^{\mathcal{L}_{D^{Z}\otimes V}}=2\pi \sqrt{-1} \left\{\int_Z \widehat{A}(TZ,\nabla^{Z})
\mathrm{ch}(V, \nabla^V)\right\}^{(2)}. \ee \end{theorem}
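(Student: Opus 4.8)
The plan is to obtain this curvature through Bismut's superconnection formalism, which realizes the right-hand side as the degree-two component of the Chern character form of a family and then identifies that component with the curvature of the Bismut--Freed connection. I would work over $Y$ with the infinite-dimensional $\mathbf{Z}_2$-graded bundle $H^\infty=H^\infty_+\oplus H^\infty_-$ whose fibre is $C^\infty(Z_y,(F\otimes V)|_{Z_y})$, carrying the $L^2$ metric induced by $g^Z$, $h^F$, $h^V$ and a natural unitary connection $\nabla^{H^\infty}$ built from $\nabla^Z$, $\nabla^V$ and the horizontal distribution $T^HM$. The family $D^Z\otimes V$ then acts as an odd, fibrewise self-adjoint first-order operator on $H^\infty$.

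First I would form the Bismut superconnection
\be
\mathbb{A}_t=\nabla^{H^\infty}+\sqrt{t}\,(D^Z\otimes V)-\tfrac{1}{4\sqrt{t}}\,c(T),
\ee
where $c(T)$ is the Clifford action of the curvature $2$-form $T$ of the fibration, and consider the even form $\mathrm{Tr}_s[\exp(-\mathbb{A}_t^2)]$ on $Y$. Two standard facts drive the argument: the transgression formula shows this form is closed with $t$-independent de Rham class equal to $\mathrm{ch}(\mathrm{Ind}(D^Z\otimes V))$; and Bismut's local family index theorem gives the pointwise limit of differential forms
\be
\lim_{t\to 0^+}\mathrm{Tr}_s[\exp(-\mathbb{A}_t^2)]=\int_Z\widehat{A}(TZ,\nabla^Z)\,\mathrm{ch}(V,\nabla^V).
\ee
Its two-form component is exactly the claimed right-hand side up to the factor $2\pi\sqrt{-1}$.

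Next I would equip $\mathcal{L}_{D^Z\otimes V}$ with the Quillen metric $g^{\mathcal{L}_{D^Z\otimes V}}$, correcting the $L^2$ metric on $\det(\ker(D^Z\otimes V)_+)^*\otimes\det(\ker(D^Z\otimes V)_-)$ by the zeta-regularized determinant of $(D^Z\otimes V)_-(D^Z\otimes V)_+$. The essential point is that this determinant varies smoothly over $Y$, so the metric is smooth even where $\dim\ker$ jumps; one proves this by covering $Y$ with opens on which a cutoff $a>0$ avoids the spectrum and gluing the heat-kernel definitions. The Bismut--Freed connection $\nabla^{\mathcal{L}_{D^Z\otimes V}}$ is then the canonical unitary connection compatible with $g^{\mathcal{L}_{D^Z\otimes V}}$. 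The crux is the Bismut--Freed anomaly formula for the Quillen metric, which expresses the variation of $\log\|\cdot\|_Q$ via the degree-one transgression of $\mathrm{Tr}_s[\exp(-\mathbb{A}_t^2)]$; differentiating again and passing to the limits $t\to 0$ and $t\to\infty$ identifies $\frac{1}{2\pi\sqrt{-1}}R^{\mathcal{L}_{D^Z\otimes V}}$ with the degree-two part of $\lim_{t\to 0}\mathrm{Tr}_s[\exp(-\mathbb{A}_t^2)]$, namely $\{\int_Z\widehat{A}(TZ,\nabla^Z)\mathrm{ch}(V,\nabla^V)\}^{(2)}$.

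The hard part is entirely analytic: establishing the small-time asymptotics of the superconnection heat kernel uniformly in $y$ (Getzler rescaling applied to $\mathbb{A}_t^2$), proving the smoothness and the precise variation of the zeta-regularized determinant across spectral jumps, and controlling the large-time behaviour so that the eta-form contributions do not disturb the degree-two term. Granting these estimates, taking the two-form component and multiplying by $2\pi\sqrt{-1}$ yields the stated identity; in the present paper this is precisely the input invoked from \cite{BF1, BF2}.
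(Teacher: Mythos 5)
Your proposal is essentially the canonical Bismut--Freed argument: the paper gives no proof of this statement at all, quoting it directly from \cite{BF1, BF2}, and your outline --- the superconnection $\mathbb{A}_t$, the transgression and $t$-independence of the class of $\mathrm{Tr}_s[\exp(-\mathbb{A}_t^2)]$, the local family index theorem for the $t\to 0^+$ limit, the Quillen metric via the zeta-regularized determinant, and the identification of $\frac{1}{2\pi\sqrt{-1}}R^{\mathcal{L}_{D^Z\otimes V}}$ with the degree-two component of that limit --- is exactly how the cited source establishes it. One minor imprecision worth noting: metric compatibility alone does not single out the Bismut--Freed connection (a unitary connection on a line bundle is unique only up to adding $\sqrt{-1}$ times a real one-form); Bismut and Freed define the connection explicitly by zeta-regularized correction terms over opens determined by a spectral cutoff $a$, and its compatibility with the Quillen metric is then a theorem rather than its definition, though this does not affect the soundness of your overall plan.
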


To detect mod $k$ information of the first Chern class of the determinant line bundle, Freed has the following result.
 \begin{theorem} [\protect Freed, \cite{F}] If $(\Sigma, S)$ is a $\mathbf{Z}/k$ surface and  $f:\overline{\Sigma}\to Y$ is a map, then
\be
\begin{split} & c_1\left(f^*\left(\mathcal{L}_{D^Z\otimes V}\right)\right)[\overline{\Sigma}]\\
=& \frac{1}{k}\frac{\sqrt{-1}}{2\pi}\int_\Sigma f^*\left(\int_Z \widehat{A}(TZ,\nabla^{Z})
\mathrm{ch}(V, \nabla^V)\right)+\frac{\sqrt{-1}}{2\pi}\mathrm{ln
hol}_{\mathcal{L}_{D^Z\otimes V}}(S) \ \
\mathrm{mod}\  1,\end{split} \ee where we view
$\mathbf{Z}/k\cong\mathbf{Z}\left[1/k \right]/\mathbf{Z}\subset
\mathbf{Q}/\mathbf{Z}.$
\end{theorem}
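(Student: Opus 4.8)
The plan is to realize this as a special case of Freed's $\mathbf{Z}/k$-index theorem: the value of a first Chern number on a $\mathbf{Z}/k$ surface is computed by a Chern--Weil integral over the bounding surface, weighted by $1/k$ and corrected by the boundary holonomy, after which one substitutes the Bismut--Freed curvature formula stated above to identify the interior density. First I would fix notation: write $L=f^*\mathcal{L}_{D^Z\otimes V}$ for the pulled-back Hermitian line bundle carrying the pulled-back Bismut--Freed connection $\nabla^L$, and recall that the $\mathbf{Z}/k$ surface $(\Sigma, S)$ determines a fundamental class $[\overline{\Sigma}]\in H_*(\overline{\Sigma},\mathbf{Z}/k)$, so that $c_1(L)[\overline{\Sigma}]$ is a well-defined element of $\mathbf{Z}/k\cong\mathbf{Z}[1/k]/\mathbf{Z}\subset\mathbf{Q}/\mathbf{Z}$. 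The object to compute is the pairing of the \emph{integral} class $c_1(L)\in H^2(\overline{\Sigma},\mathbf{Z})$ against this singular fundamental class.

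The mechanism behind the formula is a transgression argument on the bounding surface. Over a \emph{closed} surface the Chern--Weil form $c_1(L,\nabla^L)=\tfrac{\sqrt{-1}}{2\pi}R^L$ integrates to the integral first Chern number. On $\Sigma$, which carries the boundary $\partial\Sigma=\coprod_{i=1}^k(\partial\Sigma)_i$, the bundle $L$ is topologically trivial, so choosing a global trivialization and writing $\nabla^L=d+A$ gives $\int_\Sigma c_1(L,\nabla^L)=\tfrac{\sqrt{-1}}{2\pi}\int_{\partial\Sigma}A$ by Stokes, while the holonomy of $\nabla^L$ around $\partial\Sigma$ is $\exp\bigl(-\int_{\partial\Sigma}A\bigr)$; hence, modulo integers, $\int_\Sigma c_1(L,\nabla^L)$ is given by the total boundary log-holonomy. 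The identifications $\alpha_i\colon S\to(\partial\Sigma)_i$ force $f$ to take the same value along all $k$ boundary components, so the total boundary holonomy is $k$ times the single holonomy around $S$, and the defining $1/k$-weighting of the $\mathbf{Z}/k$ fundamental class then produces the key identity
\be
c_1(L)[\overline{\Sigma}]\equiv\frac{1}{k}\int_\Sigma c_1(L,\nabla^L)+\frac{\sqrt{-1}}{2\pi}\mathrm{ln\, hol}_{\mathcal{L}_{D^Z\otimes V}}(S)\ \ \ \mathrm{mod}\ 1.
\ee

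It then remains only to identify the interior term. By the Bismut--Freed curvature formula, $R^L=f^*R^{\mathcal{L}_{D^Z\otimes V}}=2\pi\sqrt{-1}\,f^*\bigl\{\int_Z\widehat{A}(TZ,\nabla^{Z})\mathrm{ch}(V,\nabla^V)\bigr\}^{(2)}$, so that $c_1(L,\nabla^L)=\tfrac{\sqrt{-1}}{2\pi}R^L$ pushes $\tfrac{1}{k}\int_\Sigma c_1(L,\nabla^L)$ into exactly the term $\tfrac{1}{k}\tfrac{\sqrt{-1}}{2\pi}\int_\Sigma f^*\bigl(\int_Z\widehat{A}(TZ,\nabla^{Z})\mathrm{ch}(V,\nabla^V)\bigr)$ appearing on the right-hand side, completing the formula.

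The step I expect to be the genuine obstacle is the passage from this formal transgression computation to a rigorous identity: one must define the pairing of an integral cohomology class with the singular fundamental class $[\overline{\Sigma}]$ and prove that it equals the interior integral weighted by $1/k$ plus a \emph{single} boundary holonomy term (rather than $k$ of them) modulo $1$. The easy half is already visible above, namely that the right-hand side lands in $\tfrac{1}{k}\mathbf{Z}/\mathbf{Z}\cong\mathbf{Z}/k$; the hard half is showing it equals the topological pairing $c_1(L)[\overline{\Sigma}]$, which is precisely the content of Freed's $\mathbf{Z}/k$-index theorem and demands careful bookkeeping of the $\mathbf{Z}/k$ homology-cohomology pairing together with the normalization forced by $\mathbf{Z}/k\cong\mathbf{Z}[1/k]/\mathbf{Z}$. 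Once that framework is in place, the remaining transgression and substitution steps are routine.
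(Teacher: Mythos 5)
You should first be aware that the paper contains no proof of this statement at all: it is Theorem 3.2.2, imported verbatim with the attribution [Freed, \cite{F}] and used in Section 3.2.1 purely as a black box to pass from the curvature identity (3.17) to the holonomy formula of Theorem 2.2.3. So there is no internal proof to compare against; the benchmark is Freed's original argument, and your outline is in fact a reconstruction of that same route (relative transgression over the bounding surface $\Sigma$, identification of the $k$ boundary contributions via the gluing maps $\alpha_i$, then substitution of the Bismut--Freed curvature from Theorem 3.2.1), not a genuinely different one.

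Within that outline, however, there is one genuine gap, and it sits exactly at the step carrying all the content. Your Stokes computation uses a single global trivialization of $L$ over $\Sigma$ (which exists, since $H^2(\Sigma;\mathbf{Z})=0$ for a compact surface with nonempty boundary) together with the assertion that the total boundary log-holonomy is $k$ times the log-holonomy around $S$. Taken literally, this is self-defeating: it would give $\tfrac{1}{k}\int_\Sigma c_1(L,\nabla^L)\equiv -\tfrac{\sqrt{-1}}{2\pi}\mathrm{ln\,hol}_{\mathcal{L}_{D^Z\otimes V}}(S)$, making the right-hand side of your key identity vanish identically mod $1$, i.e.\ ``proving'' that the pairing is always zero. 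The point is that each boundary integral $\tfrac{\sqrt{-1}}{2\pi}\oint_{(\partial\Sigma)_i}A$ agrees with $-\tfrac{\sqrt{-1}}{2\pi}\mathrm{ln\,hol}(S)$ only modulo $1$, with $i$-dependent integer parts (the restrictions of a global trivialization to the $k$ boundary circles differ by gauge transformations of different winding numbers), and the division by $k$ demotes a mod-$1$ equality to a mod-$1/k$ one: the discarded integers \emph{are} the answer. The correct bookkeeping --- which is Freed's --- is to trivialize $f^*\mathcal{L}_{D^Z\otimes V}$ once over $S$, transport this trivialization to every $(\partial\Sigma)_i$ by the $\alpha_i$ (so the $k$ boundary integrals are literally equal real numbers, because $f\circ\alpha_i$ is independent of $i$), form the relative Chern number $n=c_1(L,\mathrm{triv})[\Sigma,\partial\Sigma]\in\mathbf{Z}$, prove that $c_1(L)[\overline{\Sigma}]=n\bmod k$ under $\mathbf{Z}/k\cong\mathbf{Z}[1/k]/\mathbf{Z}$, and only then apply Chern--Weil with boundary correction and divide by $k$. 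Your closing paragraph honestly flags this as the obstacle but then defers it to ``Freed's $\mathbf{Z}/k$-index theorem,'' which is circular here, since the statement to be proved \emph{is} that theorem; the missing mod-$k$ winding-number argument is the proof. One smaller item: your final substitution does not land ``exactly'' on the stated term --- with the paper's Theorem 3.2.1 one has $\tfrac{\sqrt{-1}}{2\pi}R^L=-f^*\{\int_Z\widehat{A}(TZ,\nabla^Z)\mathrm{ch}(V,\nabla^V)\}^{(2)}$, which differs from the displayed interior term by a factor of $2\pi\sqrt{-1}$; this traces to the statement inheriting Freed's convention (where $\tfrac{\sqrt{-1}}{2\pi}$ multiplies the curvature itself) and should be reconciled explicitly rather than passed over.
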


If $TZ$ is of dimension $8m+2$, integrating both sides of (3.14)
along the fiber, we have \be \left\{\int_Z \widehat{L}(TZ,
\nabla^Z)\right\}^{(2)}-8\sum_{r=0}^{m}2^{6m-6r}\left\{\int_Z
\widehat{A}(TZ,\nabla^{Z}) \mathrm{ch}(b_r(T_\CC
Z))\right\}^{(2)}=0.\ee

By Theorem 3.2.1, we get
\be
\begin{split} &R^{\mathcal{L}_{D^Z_{sig}}}-8\sum_{r=0}^{m}2^{6m-6r}R^{\mathcal{L}_{D^Z\otimes
b_r(T_\CC Z)}}\\
=&2\pi \sqrt{-1}\left\{\int_Z \widehat{L}(TZ, \nabla^Z)\right\}^{(2)}-8\sum_{r=0}^{m}2^{6m-6r}2\pi \sqrt{-1}\left\{\int_Z \widehat{A}(TZ,\nabla^{Z})
\mathrm{ch}(b_r(T_\CC Z))\right\}^{(2)}\\
=&0. \end{split} \ee  Therefore Theorem 2.2.1 follows.

Similarly, Freed's Theorem 3.2.2 and (3.17) give us
\be
\begin{split}&c_1(f^*(\mathcal{L}_{D^Z_{sig}}))[\overline{\Sigma}]
-8\sum_{r=0}^{m}2^{6m-6r}c_1\left(f^*\left(\mathcal{L}_{D^Z\otimes
b_r(T_\CC Z)}\right)\right)[\overline{\Sigma}]\\
-&\left(\frac{\sqrt{-1}}{2\pi}\mathrm{ln
hol}_{\mathcal{L}_{D^Z_{sig}}}(S)-8\sum_{r=0}^{m}2^{6m-6r}\frac{\sqrt{-1}}{2\pi}\mathrm{ln
hol}_{\mathcal{L}_{D^Z\otimes
b_r(T_\CC Z)}}(S)\right)\equiv 0\ \
\mathrm{mod}\  1\end{split}\ee and so Theorem 2.2.3 follows.

To prove Theorem 2.2.5,  it's not hard to see from (32), (38) and (56) in \cite{AW} that, up to a same constant,
$$ \widehat{I}_{1/2}=\left\{\int_Z \widehat{A}(TZ, \nabla^Z)\right\}^{(2)}=R^{\mathcal{L}_{D^Z}},$$
$$\widehat{I}_{3/2}=\left\{\int_Z \widehat{A}(TZ, \nabla^Z)(\mathrm{ch}(T_\CC Z, \nabla^Z)-1)\right\}^{(2)}=R^{\mathcal{L}_{D^Z\otimes T_\CC Z}} -R^{\mathcal{L}_{D^Z}}$$ and
$$ \widehat{I}_{A}=-\frac{1}{8}\left\{\int_Z\widehat{L}(TZ, \nabla^Z)\right\}^{(2)}=-\frac{1}{8}R^{\mathcal{L}_{D^Z_{sig}}},$$
where in the fiber bundle $Z\rightarrow M  \rightarrow Y,$ $Z$ is a $4k+2$ dimensional spin manifold and $Y$ is the quotient space of the space of metrics on $Z$ by the action of certain subgroup of $\mathrm{Diff}(M)$.

In dimension 2, by (2.24),
$$ -\widehat{I}_{1/2}+\widehat{I}_{A}=-R^{\mathcal{L}_{D^Z}}-\frac{1}{8}R^{\mathcal{L}_{D^Z_{sig}}}=-\frac{1}{8}(R^{\mathcal{L}_{D^Z_{sig}}}+8R^{\mathcal{L}_{D^Z}})=0.$$ Therefore (1.1) follows.

In dimension 6, by (2.26),
\begin{equation*}\begin{split} &21\widehat{I}_{1/2}-\widehat{I}_{3/2}+8\widehat{I}_{A}\\
=&21 R^{\mathcal{L}_{D^Z}}-(R^{\mathcal{L}_{D^Z\otimes T_\CC Z}} -R^{\mathcal{L}_{D^Z}})-R^{\mathcal{L}_{D^Z_{sig}}}\\
=& 22R^{\mathcal{L}_{D^Z}}-R^{\mathcal{L}_{D^Z\otimes T_\CC Z}}-R^{\mathcal{L}_{D^Z_{sig}}}\\
=&0.\end{split}
\end{equation*} Therefore (1.2) follows.

In dimension 10, by (2.28),
\begin{equation*}\begin{split} &-\widehat{I}_{1/2}+\widehat{I}_{3/2}+\widehat{I}_{A}\\
=&-R^{\mathcal{L}_{D^Z}}+(R^{\mathcal{L}_{D^Z\otimes T_\CC Z}} -R^{\mathcal{L}_{D^Z}})-\frac{1}{8}R^{\mathcal{L}_{D^Z_{sig}}}\\
=& -\frac{1}{8}(16R^{\mathcal{L}_{D^Z}}-8R^{\mathcal{L}_{D^Z\otimes T_\CC Z}}+R^{\mathcal{L}_{D^Z_{sig}}})\\
=&0.\end{split}
\end{equation*} Therefore (1.3) follows.

\subsubsection{The case of odd dimensional fibers }

Lott has the following theorem for the curvature of index gerbes.
\begin{theorem} [\protect Lott, \cite{Lo}]
\be R^{\mathcal{G}_{D^Z\otimes V}}=\left\{\int_Z \widehat{A}(TZ,\nabla^{Z})
\mathrm{ch}(V, \nabla^V)\right\}^{(3)}. \ee \end{theorem}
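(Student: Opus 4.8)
The plan is to reconstruct Lott's argument, which realizes the index gerbe as the exact one-degree-higher analogue of the Bismut--Freed determinant line bundle of Theorem 3.2.1. Thus the target identity should emerge the same way the Bismut--Freed curvature formula does, except that one reads off the degree-$3$ rather than the degree-$2$ component of the local families index density $\int_Z \widehat{A}(TZ,\nabla^Z)\,\mathrm{ch}(V,\nabla^V)$. Throughout, the central analytic object is the Bismut superconnection $\mathbb{A}_t$ attached to the infinite-dimensional bundle $H^\infty\to Y$ carrying the fiberwise self-adjoint operators $(D^Z\otimes V)_y$, together with the associated Bismut--Cheeger eta form $\widehat{\eta}$, an even-degree form on $Y$ whose degree-$0$ part is precisely the reduced invariant $\overline{\eta}(D^Z\otimes V)$ of Section 2.4.

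First I would isolate the transgression/local-index input. Bismut's local index theorem controls the $t\to 0$ asymptotics of the rescaled heat kernel of $\mathbb{A}_t^2$ and shows that the limiting density is exactly $\int_Z \widehat{A}(TZ,\nabla^Z)\,\mathrm{ch}(V,\nabla^V)$. For the odd-dimensional fibers considered here, one packages this into the anomaly (variation) formula for the eta form: the exterior derivative $d\widehat{\eta}$ equals $\int_Z \widehat{A}(TZ,\nabla^Z)\,\mathrm{ch}(V,\nabla^V)$ modulo the Chern character of the relevant finite-rank spectral data. Extracting the degree-$3$ component of this identity isolates the prospective curvature.

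Next I would assemble the gerbe itself, following the spectral-section description of Melrose--Piazza. I cover $Y$ by open sets $U_\alpha$ on each of which the family admits a spectral section $P_\alpha$; over $U_\alpha\cap U_\beta$ the two sections differ by a finite-rank, $\mathbf{Z}$-graded bundle whose determinant line $L_{\alpha\beta}$, equipped with a Bismut--Freed-type connection, plays the role of the gerbe's transition line bundles, while over triple overlaps the canonical isomorphism $L_{\alpha\beta}\otimes L_{\beta\gamma}\otimes L_{\gamma\alpha}\cong\mathbf{C}$ supplies the cocycle $g_{\alpha\beta\gamma}$. The curving is then taken to be (a fixed multiple of) the degree-$2$ component $\{\widehat{\eta}_\alpha\}^{(2)}$ of the local eta form, and one checks, using the eta-form anomaly formula, that $B_\beta-B_\alpha$ equals the curvature of $L_{\alpha\beta}$, so that $H:=dB_\alpha$ is independent of $\alpha$ and globally defined on $Y$.

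Finally I would read off $R^{\mathcal{G}_{D^Z\otimes V}}=H=dB_\alpha$ and identify it, via the degree-$3$ part of the transgression formula from the second step, with $\{\int_Z \widehat{A}(TZ,\nabla^Z)\,\mathrm{ch}(V,\nabla^V)\}^{(3)}$, the finite-rank determinant-line contributions being exactly what has been absorbed into the transition data. The main obstacle is the analytic core: establishing the local families index theorem (the precise $t\to 0$ heat-kernel asymptotics of $\mathbb{A}_t$) together with the $t\to\infty$ control that makes $\widehat{\eta}$ well defined, and then verifying that the spectral-section construction is independent of all choices, so that the gerbe, its connective structure, and hence its curvature are canonically attached to $D^Z\otimes V$. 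Once these inputs are in place, the degree bookkeeping and the gluing identities are routine.
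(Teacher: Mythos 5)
The first thing to say is that the paper contains no proof of this statement to compare against: it is quoted verbatim as Theorem 3.2.3 with attribution to Lott \cite{Lo}, and Section 3.2.2 of the paper only \emph{uses} it, feeding the fiber integral of the modular identity (3.14) into it to deduce the gerbe cancellation formulas. So the correct benchmark is Lott's original argument, and measured against that your sketch is essentially faithful. Lott does build the index gerbe from local spectral data over an open cover of $Y$: his version takes on each $U_\alpha$ a spectral cut $\lambda_\alpha\in\mathbf{R}$ with $\lambda_\alpha\notin\mathrm{spec}\left((D^Z\otimes V)_y\right)$, transition line bundles given by the determinant lines of the finite-rank eigenbundles $E_{[\lambda_\alpha,\lambda_\beta]}$ with their projected connections, and curving $B_\alpha$ equal to the degree-$2$ component of the truncated Bismut--Cheeger eta form $\widehat{\eta}_\alpha$, so that $H=dB_\alpha$ is globally defined. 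Your substitution of Melrose--Piazza spectral sections for the cuts $\lambda_\alpha$ is harmless (over contractible charts the two devices are interchangeable, and Lott himself discusses the spectral-section picture), and your last step has the right mechanism: from the transgression formula $d\widehat{\eta}_\alpha=\int_Z\widehat{A}(TZ,\nabla^Z)\,\mathrm{ch}(V,\nabla^V)-\mathrm{ch}\left(E_\alpha,\nabla^{E_\alpha}\right)$, the finite-rank term drops out of the degree-$3$ component simply because Chern character forms of finite-rank bundles live in even degrees --- it would be better to state this parity reason explicitly than to say the contribution is ``absorbed into the transition data.''

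Two caveats keep your text a scheme rather than a proof, which you partly acknowledge. All of the analytic weight sits in quoted machinery: the odd-dimensional local families index theorem for the Bismut superconnection $\mathbb{A}_t$ (Bismut--Cheeger), the small-$t$ asymptotics together with the large-$t$ control that make $\widehat{\eta}_\alpha$ convergent after the spectral cut, the overlap anomaly formula identifying $\widehat{\eta}_\beta-\widehat{\eta}_\alpha$ with $\mathrm{ch}\left(E_{[\lambda_\alpha,\lambda_\beta]}\right)$ up to exact forms, and the independence of the resulting gerbe-with-connection of all choices. For reconstructing a cited theorem this is a reasonable division of labor, but none of these inputs is established in your argument, and they --- not the degree bookkeeping --- are the actual content of Lott's theorem; note also that Lott's normalization already omits the factor $2\pi\sqrt{-1}$ appearing in the determinant-line-bundle curvature formula of Theorem 3.2.1, consistent with the statement as quoted.
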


If $TZ$ is of dimension $8m+1$, integrating both sides of (3.14)
along the fiber, we get \be \left\{\int_Z \widehat{L}(TZ,
\nabla^Z)\right\}^{(3)}-8\sum_{r=0}^{m}2^{6m-6r}\left\{\int_Z
\widehat{A}(TZ,\nabla^{Z}) \mathrm{ch}(b_r(T_\CC
Z))\right\}^{(3)}=0.\ee

Note that we have $\widehat{A}(TZ,\nabla^{Z})
\mathrm{ch}(F', \nabla^{F'})=\widehat{L}(TZ, \nabla^Z)$ (\cite{KL}).
So by Theorem 3.2.3 and (3.21), we have
\be
\begin{split} &R^{\mathcal{G}_{B^Z_{sig}}}-8\sum_{r=0}^{m}2^{6m-6r}R^{\mathcal{G}_{D^Z\otimes
b_r(T_\CC Z)}}\\
=&\left\{\int_Z \widehat{L}(TZ, \nabla^Z)\right\}^{(3)}-8\sum_{r=0}^{m}2^{6m-6r}\left\{\int_Z \widehat{A}(TZ,\nabla^{Z})
\mathrm{ch}(b_r(T_\CC Z))\right\}^{(3)}\\
=&0. \end{split} \ee  Therefore Theorem 2.3.1 follows.

On the family odd signature operators, there is the following theorem:
\begin{theorem}[\protect Ebert, \cite{Eb}] The family index of the odd signature operator on an oriented bundle $M \rightarrow Y$ with odd dimensional fibers is trivial, i.e, $\mathrm{ind}(B^Z_{sig})=0\in K^1(Y)$.
\end{theorem}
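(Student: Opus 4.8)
This statement is quoted from Ebert \cite{Eb}, so the plan is to indicate how one reconstructs it rather than to reprove the modular material. Since each $(B^Z_{sig})_y$ is a self-adjoint elliptic operator and $B^Z_{sig}=D^Z\otimes F'$ is a fiberwise self-adjoint twisted Dirac operator, the family defines a classifying map $Y\to\widehat{\mathcal F}^{sa}_*$ into the space of self-adjoint Fredholm operators with essential spectrum on both sides of $0$, whence a well-defined class $\mathrm{ind}(B^Z_{sig})\in K^1(Y)$. By the Atiyah--Singer family index theorem this class is the topological $K$-theoretic pushforward $\pi_!^{K}[\sigma(B^Z_{sig})]$ of the fiberwise symbol, so it depends only on the tangential data along the fibers and may be computed universally. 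First I would establish the rational statement, which already suffices for the cohomological applications of the theorem in this paper.

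To prove $\mathrm{ch}(\mathrm{ind}(B^Z_{sig}))=0$ in $H^{\mathrm{odd}}(Y;\mathbf{R})$, I would pass to the even-dimensional bundle $\bar\pi: M\times S^1\to Y$ with fibers $W=Z\times S^1$ and apply the even-dimensional family signature theorem, which gives $\mathrm{ch}(\mathrm{Sign}(W/Y))=\bar\pi_!\,L(TW)$. Multiplicativity of the Hirzebruch class yields $L(TW)=L(TZ)\cdot L(TS^1)=L(TZ)$, a class pulled back from $M$ with no $S^1$-dependence, so integrating first over the $S^1$-factor annihilates it and $\bar\pi_!\,L(TW)=0$. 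Combined with the product formula relating the graded signature operator on $W$ to the self-adjoint family $B^Z_{sig}$ through the suspension isomorphism $K^1(Y)\cong\widetilde K^0(\Sigma Y_+)$, this forces $\mathrm{ind}(B^Z_{sig})$ to be rationally trivial. Using $\widehat A(TZ,\nabla^Z)\,\mathrm{ch}(F',\nabla^{F'})=\widehat L(TZ,\nabla^Z)$, the same conclusion reads $\pi_!\,\widehat L(TZ)=0$, which is precisely the input needed for the cohomology-level cancellation formulas (Theorems 2.3.3 and 2.3.4).

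The substantive point, and the one I expect to be the main obstacle, is upgrading this rational vanishing to the integral identity $\mathrm{ind}(B^Z_{sig})=0$ in $K^1(Y)$, i.e. ruling out torsion. Two complementary tools are available. First, there is a fiberwise antilinear symmetry: complex conjugation $J$ on $\Lambda_\CC(T^*Z)$ commutes with $d$ and fixes each $c(e_i)$, so from $\Gamma=(\sqrt{-1})^{n+1}c(e_1)\cdots c(e_{2n+1})$ one computes $J\,B^Z_{sig}\,J^{-1}=(-1)^{n+1}B^Z_{sig}$; this endows the family with a Real ($KR$-type) structure that constrains the class to a torsion subgroup. Second, one reduces to the universal bundle over the classifying space for oriented fiber bundles with odd-dimensional fibers and shows that the universal $K^1$-index class vanishes there, where the only remaining difficulty is again torsion. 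Combining the symmetry constraint with the rational vanishing (no free part) and a torsion-freeness computation in the relevant degrees of the universal base would complete the argument. I expect this torsion analysis, rather than any of the index-theoretic or modular inputs, to be the genuinely delicate step, and it is exactly the content supplied by \cite{Eb}.
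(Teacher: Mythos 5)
The paper offers no proof of this statement at all: Theorem 3.2.4 is imported verbatim from Ebert \cite{Eb} and used as a black box, together with the Bismut--Freed odd character (Theorem 3.2.5), to deduce $\left[\int_Z\widehat{L}(TZ,\nabla^Z)\right]=0$ in the proofs of Theorems 2.3.3 and 2.3.4. So your decision to defer the integral (torsion) statement to \cite{Eb} is consistent with the paper's own treatment; the genuine gap lies in the one piece you do claim to prove, the rational vanishing --- and you correctly observe that this piece is all the paper actually needs, which makes the flaw material. Crossing the \emph{fiber} with $S^1$ detects nothing: by multiplicativity of the family index, the index of the family signature operator of $M\times S^1\to Y$ is the product of $\mathrm{ind}(B^Z_{sig})\in K^1(Y)$ with the class of the signature operator of $S^1$ in $K^1(\mathrm{pt})=0$, hence vanishes identically for \emph{every} bundle; equivalently, your form-level identity $\bar\pi_!\,L(TW)=0$ holds for trivial reasons (the integrand has no component along the $S^1$ factor) and therefore imposes no constraint whatever on the odd-degree classes $[\pi_!\,L(TZ)]$ over $Y$. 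The suspension isomorphism $K^1(Y)\cong\widetilde{K}^0(\Sigma Y_+)$ you invoke requires suspending the \emph{base}, not the fiber: the even-dimensional index computing $\mathrm{ind}(B^Z_{sig})$ lives over $Y\times S^1$ with fiber still $Z$ (a clutching/mapping-torus construction from the spectral projections), and for that family the product formula for $L$-classes gives nothing. The rational statement $[\pi_!\,L(TZ)]=0$ in $H^{\mathrm{odd}}(Y;\mathbf{R})$ is not an elementary multiplicativity fact; it is precisely the Chern-character shadow of Ebert's theorem, so your proposal ends up assuming, already at the rational level, what it set out to show.

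Two secondary points. First, your antilinear symmetry carries a parity dependence you did not track: from $\Gamma=(\sqrt{-1})^{n+1}c(e_1)\cdots c(e_{2n+1})$ one gets $J\Gamma J^{-1}=(-1)^{n+1}\Gamma$, hence $JB^Z_{sig}J^{-1}=(-1)^{n+1}B^Z_{sig}$, so $J$ anti-commutes with $B^Z_{sig}$ only when the fiber dimension is $\equiv 1\ (\bmod\ 4)$; in dimensions $\equiv 3\ (\bmod\ 4)$ it commutes and yields no constraint at all. Second, even in the favorable parity, an antilinear anti-commuting symmetry gives symmetry of the fiberwise spectra and hence vanishing of the spectral flow, i.e.\ of the degree-one component of the class, but your assertion that it ``constrains the class to a torsion subgroup'' is unjustified: a Real structure does not by itself kill the higher rational components. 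What one would want is a \emph{linear} self-adjoint bundle involution $\epsilon$ anti-commuting with the operator, since then $\cos(t)\,B^Z_{sig}+\sin(t)\,\epsilon$ is a homotopy through self-adjoint Fredholm families to an invertible one and the index vanishes integrally in one stroke --- no separate rational step and no torsion bookkeeping over a universal base. What does survive of your proposal is the correct framing of $\mathrm{ind}(B^Z_{sig})$ via families of self-adjoint Fredholm operators and the identity $\widehat{A}(TZ,\nabla^Z)\,\mathrm{ch}(F',\nabla^{F'})=\widehat{L}(TZ,\nabla^Z)$, both of which the paper also uses.
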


The following theorem on the odd Chern form for a family of self-adjoint Dirac operators is due to Bismut and Freed.
\begin{theorem} [\protect Bismut-Freed, \cite{BF2}] $\int_Z \widehat{A}(TZ,\nabla^{Z})
\mathrm{ch}(V, \nabla^V)$ represents the odd Chern character of $\mathrm{ind}(D^Z\otimes V)$.

\end{theorem}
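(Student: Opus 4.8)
The plan is to represent the analytic index $\mathrm{ind}(D^Z\otimes V)\in K^1(Y)$ through the odd analogue of the Bismut superconnection and to compute its odd Chern character by heat-kernel localization. First I would recall the odd family index setup: over $Y$ one has the infinite-dimensional Hermitian bundle $H^\infty=C^\infty(Z,F'\otimes V)$ carrying the $L^2$-metric and the unitary connection $\nabla^{H^\infty}$ induced by $\nabla^Z$, $\nabla^V$ and the horizontal distribution $T^HM$, and the fiberwise self-adjoint operator $D=D^Z\otimes V$ acts on its sections. Because the fibers are odd-dimensional and closed, the spectral data of $D$ determine a class in $K^1(Y)$ whose odd Chern character lies in $H^{\mathrm{odd}}(Y;\mathbf{R})$; the goal is to exhibit a closed odd form representing it.

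Second, I would introduce the rescaled superconnection $\mathbb{B}_t=\sqrt{t}\,D+\nabla^{H^\infty}+(\text{lower-order curvature terms})$ and the associated Bismut--Cheeger eta-form $\widetilde{\eta}$. The structural input is a transgression formula: the odd Chern character form built from $\mathbb{B}_t$ is closed and its de Rham class is independent of $t$, while the difference of the $t\to 0$ and $t\to\infty$ representatives is $d\widetilde{\eta}$. This reduces the theorem to computing the two limits.

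Third, I would evaluate both limits. The $t\to\infty$ limit recovers the honest odd Chern character of $\mathrm{ind}(D\otimes V)$: after a smooth finite-rank perturbation producing a spectral gap and making the family invertible, the large-$t$ superconnection concentrates on the spectral-projection data and yields the odd Chern character of the $K^1$ class. The $t\to 0$ limit is computed by Bismut's heat-kernel asymptotics together with Getzler rescaling adapted to the self-adjoint odd-dimensional fibers; the local index density localizes on the fiber diagonal and integrates to the odd-degree component of $\int_Z\widehat{A}(TZ,\nabla^Z)\,\mathrm{ch}(V,\nabla^V)$. Since the two limits represent the same class, the theorem follows. Alternatively, one could deduce the odd case from the already-established even family index theorem via the suspension isomorphism $K^1(Y)\cong\widetilde{K}^0(\Sigma Y)$, at the cost of tracking Chern characters through the suspension.

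The hard part will be the $t\to 0$ local computation: performing Getzler rescaling for a family of self-adjoint Dirac operators on odd-dimensional fibers and establishing the uniform off-diagonal heat-kernel estimates over $Y$ needed to justify taking the limit under the fiber integral. A secondary obstacle is the precise definition and convergence of the eta-form $\widetilde{\eta}$, in particular controlling the small-eigenvalue (spectral flow) contributions so that the transgression identity remains valid.
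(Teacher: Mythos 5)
The paper offers no proof of this statement: it is quoted as a known theorem of Bismut--Freed \cite{BF2} and used as a black box (together with Ebert's vanishing theorem) to pass from the form-level identity (3.14) to the cohomological cancellation formulas. So there is no internal argument to compare against, and your proposal must be judged as a reconstruction of the cited result. In outline it is the standard route that the literature actually follows: transgress the Chern character form of a rescaled Bismut superconnection, identify the $t\to 0$ limit with the local density $\int_Z\widehat{A}(TZ,\nabla^{Z})\,\mathrm{ch}(V,\nabla^V)$ by Getzler rescaling, and identify the large-$t$ behavior with the Chern character of the index class. One technical omission: for a family of self-adjoint operators on odd-dimensional fibers the expression $\sqrt{t}\,D+\nabla^{H^\infty}+\cdots$ is not odd with respect to any grading, so to even define a closed odd Chern character form you need Quillen's $\mathbf{Z}_2$-graded formalism with an extra odd Clifford generator $\sigma$ (as in Melrose--Piazza's $\mathbf{C}l(1)$-superconnections); this is a routine but necessary addition to your second step.

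The genuine gap is in your $t\to\infty$ step. You invoke ``a smooth finite-rank perturbation producing a spectral gap and making the family invertible,'' but for a family of self-adjoint Dirac operators such a perturbation is precisely a spectral section in the sense of Melrose--Piazza, and one exists if and only if $\mathrm{ind}(D^Z\otimes V)=0$ in $K^1(Y)$: an invertible self-adjoint family maps $Y$ into the contractible set of invertibles in the relevant component of self-adjoint Fredholm operators, so its $K^1$ class is automatically trivial. Thus your argument as stated only proves the theorem in the case where both sides are zero. The standard repairs are either to split off the finite-dimensional low-eigenvalue spectral data over compact pieces of $Y$ (which is where spectral flow genuinely enters, not merely as a convergence issue for the eta-form), or to take seriously the alternative you mention in passing: deduce the odd case from Bismut's even family index theorem via the suspension isomorphism $K^1(Y)\cong\widetilde{K}^0(\Sigma Y)$, tracking the Chern character through the clutching construction. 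That suspension argument is in fact the cleaner fix and suffices for the statement at hand, since the theorem asserts only an identity of de Rham classes (``represents''), not an equality of forms, so the delicate uniform heat-kernel estimates you flag as the hard part can be confined to compact subsets of $Y$.
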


Combining Theorem 3.2.4 and 3.2.5, we see that $[\int_Z \widehat{L}(TZ,\nabla^{Z})]$ is zero in de Rham cohomology. In particular, by Theorem 3.2.3, $\left[R^{\mathcal{G}_{B^Z_{sig}}}\right]=0$.  Therefore, (3.22) implies that
\be \sum_{r=0}^{m}2^{6m-6r}[R^{\mathcal{G}_{D^Z\otimes
b_r(T_\CC Z)}}]=0.\ee So Theorem 2.3.3 follows.

If $d$ is a real number,  let $\{d\}$ denote the image of $d$ in $\mathbf{R}/\mathbf{Z}$. As noted in \cite{APS1, APS3}, $\overline{\eta}_y(D^{Z}\otimes V)(0)$ has integer jumps and therefore $\{\overline{\eta}((D^{Z}\otimes V))\}$ is a $C^\infty$ function of on Y with values in $\mathbf{R}/\mathbf{Z}$ (\cite{APS1, APS3}). For odd dimensional fibers,  we have the following Bismut-Freed theorem for the reduced $\eta$-invariants.
\begin{theorem} [\protect Bismut-Freed, \cite{BF2}]
\be d\{\overline{\eta}(D^{Z}\otimes V)\}=\left\{\int_Z \widehat{A}(TZ,\nabla^{Z})
\mathrm{ch}(V, \nabla^V)\right\}^{(1)}. \ee \end{theorem}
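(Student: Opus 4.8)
The plan is to realize $\{\overline{\eta}(D^{Z}\otimes V)\}$ as an $\mathbf{R}/\mathbf{Z}$-valued function whose variation is governed by a local heat-kernel formula, and then to identify that local term with the degree-one fibre integral on the right-hand side by means of the local family index theorem. First I would record the analytic input from \cite{APS1, APS3} already invoked above: as an eigenvalue of $(D^{Z}\otimes V)_y$ crosses zero, the integer $\dim\ker(D^{Z}\otimes V)_y$ jumps while $\eta_y(0)$ jumps by an offsetting integer, so that $\{\overline{\eta}(D^{Z}\otimes V)\}$ is a genuinely smooth $\mathbf{R}/\mathbf{Z}$-valued function on $Y$ and $d\{\overline{\eta}(D^{Z}\otimes V)\}$ is a well-defined real closed $1$-form. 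It therefore suffices to compute the directional derivative $\tfrac{d}{du}\overline{\eta}$ along an arbitrary smooth curve $u\mapsto y(u)$ in $Y$, working modulo $\mathbf{Z}$.

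Second, using the Mellin representation $\eta_y(s)=\Gamma(\tfrac{s+1}{2})^{-1}\int_0^\infty t^{(s-1)/2}\,\mathrm{Tr}\big((D^{Z}\otimes V)_y\, e^{-t((D^{Z}\otimes V)_y)^2}\big)\,dt$ together with Duhamel's formula for $\tfrac{d}{du}e^{-tD_u^2}$, I would differentiate under the integral and exploit the cyclicity of the trace. The bulk terms organize into a total $t$-derivative, so an integration by parts in $t$ collapses $\tfrac{d}{du}\eta_y(s)$ to $-\,s\,\Gamma(\tfrac{s+1}{2})^{-1}\int_0^\infty t^{(s-1)/2}\,\mathrm{Tr}(\dot{D}_u\,e^{-tD_u^2})\,dt$. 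Passing to $s\to 0$, the explicit factor of $s$ kills the contribution from $t\to\infty$ and from all but the simple pole produced by the small-$t$ asymptotics, leaving $\tfrac{d}{du}\overline{\eta}$ equal, up to the universal constant fixed by the normalization and mod $\mathbf{Z}$, to the constant ($t^0$) coefficient in the short-time expansion of $\mathrm{Tr}(\dot{D}_u\,e^{-tD_u^2})$.

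Third, I would upgrade this one-parameter computation to the $1$-form statement by replacing $\dot{D}_u$ with the Bismut superconnection derivative and assembling the directional derivatives into the exterior derivative on $Y$; equivalently one runs the same Mellin/transgression argument for the Bismut superconnection $\mathbb{A}_t$ attached to $D^{Z}\otimes V$. The decisive step is then the local family index theorem: Getzler rescaling of the heat kernel identifies the surviving $t^0$ coefficient, after fibre integration, with the $1$-form component $\{\int_Z \widehat{A}(TZ,\nabla^{Z})\,\mathrm{ch}(V,\nabla^V)\}^{(1)}$. Combined with the previous step this yields the asserted identity $d\{\overline{\eta}(D^{Z}\otimes V)\}=\{\int_Z\widehat{A}(TZ,\nabla^{Z})\,\mathrm{ch}(V,\nabla^V)\}^{(1)}$, consistent in degree with Theorem 3.2.5, which represents the odd Chern character of $\mathrm{ind}(D^{Z}\otimes V)$ by the same fibre integral.

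The main obstacle is precisely this last identification, together with the analytic bookkeeping around it: one must justify differentiating the Mellin integral under the integral sign uniformly in $s$, control the interchange with $\lim_{s\to 0}$, and, hardest of all, prove that the non-cancelling $t\to 0$ contribution is \emph{exactly} the degree-one local index density rather than some other term in the expansion. This is where Bismut's heat-kernel proof of the family index theorem (or, read off in degree one, the Bismut–Cheeger eta-form transgression $d\widehat{\eta}=\int_Z\widehat{A}\,\mathrm{ch}(V)-\mathrm{ch}(\mathrm{ind})$) does the real work, and is the reason the result is attributed to Bismut and Freed \cite{BF2}.
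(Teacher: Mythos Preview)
The paper does not prove this statement at all: Theorem~3.2.6 is quoted verbatim from \cite{BF2} as an external input and immediately applied to the fibre-integrated identity (3.25) to deduce (3.26)--(3.27). So there is no proof in the paper to compare your proposal against.

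What you have written is a reasonable outline of the original Bismut--Freed argument itself: establish smoothness of $\{\overline{\eta}\}$ modulo $\mathbf{Z}$ via the APS jump analysis, compute the variation of $\eta$ through the Mellin representation and Duhamel's formula, reduce by integration by parts in $t$ to the constant term in the short-time expansion of $\mathrm{Tr}(\dot{D}\,e^{-tD^2})$, and then invoke the local family index theorem (Getzler rescaling) to identify that term with the degree-one component of the fibre integral $\int_Z\widehat{A}(TZ,\nabla^Z)\mathrm{ch}(V,\nabla^V)$. You correctly flag that the last identification is the substantive analytic step and is precisely the content of \cite{BF2}. For the purposes of this paper, however, no such argument is required or given; the theorem is simply cited.
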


If $TZ$ is of $8m+3$ dimensional, integrating both sides of (3.14) along the fiber, we get
\be \left\{\int_Z \widehat{L}(TZ, \nabla^Z)\right\}^{(1)}-8\sum_{r=0}^{m}2^{6m-6r}\left\{\int_Z \widehat{A}(TZ,\nabla^{Z})
\mathrm{ch}(b_r(T_\CC Z))\right\}^{(1)}=0.\ee Then the Bismut-Freed Theorem 3.2.6 gives us
\be \begin{split} &d\{\overline{\eta}(B^{Z}_{sig})\}-8\sum_{r=0}^{m}2^{6m-6r}d\{\overline{\eta}(D^{Z}\otimes b_r(T_\CC Z))\}\\
=&\left\{\int_Z \widehat{L}(TZ, \nabla^Z)\right\}^{(1)}-8\sum_{r=0}^{m}2^{6m-6r}\left\{\int_Z \widehat{A}(TZ,\nabla^{Z})
\mathrm{ch}(b_r(T_\CC Z))\right\}^{(1)}\\
=&0. \end{split}\ee Therefore we obtain
\be d\left(\{\overline{\eta}(B^{Z}_{sig})\}-8\sum_{r=0}^{m}2^{6m-6r}\{\overline{\eta}(D^{Z}\otimes b_r(T_\CC Z))\}\right)=0. \ee Since $Y$ is connected,
$$\{\overline{\eta}(B^{Z}_{sig})\}-8\sum_{r=0}^{m}2^{6m-6r}\{\overline{\eta}(D^{Z}\otimes b_r(T_\CC Z))\}$$ must be a constant function on $Y$.  Therefore  it's not hard to see that Theorem 2.4.1 follows.

\subsection{Proofs of Theorem 2.2.2, 2.2.4, 2.3.2, 2.3.4 and 2.4.2} The proofs are similar to the proofs of Theorem 2.2.1. 2.2.3, 2.3.1, 2.3.3 and 2.4.1.

Let the dimension of $TZ$ be $8m-1, 8m-2$ or $8m-3$. For the vertical tangent bundle $TZ$, set \be Q_1(\nabla^{Z}, \tau):=\left\{ \widehat{L}(TZ,\nabla^{Z})
\mathrm{ch}\left(\Theta_1(T_\CC
Z),\nabla^{\Theta_1(T_\CC Z)}\right)\right\}^{(8m)},\ee  \be
Q_2(\nabla^{Z}, \tau):= \left\{ \widehat{A}(TZ,\nabla^{Z})
\mathrm{ch}\left(\Theta_2(T_\CC Z),\nabla^{\Theta_2(T_\CC
Z)}\right)\right\}^{(8m)}. \ee

Similar to Proposition 3.1, we have
\be Q_1(\nabla^{Z}, \tau)=\left\{ \mathrm{det}^{1\over
2}\left(\frac{R^{Z}}{2{\pi}^2}\frac{\theta'(0,\tau)}{\theta(\frac{R^{Z}}{2{\pi}^2},\tau)}
\frac{\theta_{1}(\frac{R^{Z}}{2{\pi}^2},\tau)}{\theta_{1}(0,\tau)}\right)\right\}^{(8m)},
\ee
\be Q_2(\nabla^{Z}, \tau)=\left\{\mathrm{det}^{1\over
2}\left(\frac{R^{Z}}{4{\pi}^2}\frac{\theta'(0,\tau)}{\theta(\frac{R^{Z}}{4{\pi}^2},\tau)}
\frac{\theta_{2}(\frac{R^{Z}}{4{\pi}^2},\tau)}{\theta_{2}(0,\tau)}\right)\right\}^{(8m)}.
\ee

Also $Q_1(\nabla^{Z}, \tau)$ is a modular form of weight $4m$ over $\Gamma_0(2)$ and $Q_2(\nabla^{Z}, \tau)$ is a modular form of weight $4m$ over $\Gamma^0(2)$. Moreover,
\be Q_1\left(\nabla^{Z}, -\frac{1}{\tau}\right)=2^{4m}\tau^{4m}Q_2(\nabla^{Z}, \tau), \ Q_1(\nabla^{Z}, \tau+1)=Q_1(\nabla^{Z}, \tau).  \ee

Similar to (3.12) and (3.13), by using Lemma 3.1 and (3.32), we have
\be \begin{split} Q_2(\nabla^Z, \tau)
 =&\left\{ \widehat{A}(TZ,\nabla^{Z})
\mathrm{ch}(z_0(T_\CC Z))\right\}^{(8m)}(8\delta_2)^{2m}\\
&+\left\{ \widehat{A}(TZ,\nabla^{Z})
\mathrm{ch}(z_1(T_\CC Z))\right\}^{(8m)}(8\delta_2)^{2m-2}\varepsilon_2\\
&+\cdots+\left\{ \widehat{A}(TZ,\nabla^{Z})
\mathrm{ch}(z_m(T_\CC Z))\right\}^{(8m)}\varepsilon_2^m, \end{split}\ee and
\be \begin{split} Q_1(\nabla^Z, \tau)=&2^{4m}\left[\left\{ \widehat{A}(TZ,\nabla^{Z})
\mathrm{ch}(z_0(T_\CC Z))\right\}^{(8m)}(8\delta_1)^{2m}\right.\\
&\left.+\cdots+\left\{ \widehat{A}(TZ,\nabla^{Z})
\mathrm{ch}(z_m(T_\CC Z))\right\}^{(8m)}\varepsilon_1^m\right]. \end{split}\ee

Comparing the constant term of the above equality, we see that
\be \{\widehat{L}(TZ, \nabla^Z)\}^{(8m)}=\sum_{r=0}^{m}2^{6m-6r}\{\widehat{A}(TZ,\nabla^{Z})
\mathrm{ch}(z_r(T_\CC Z))\}^{(8m)}. \ee

Then one can integrate both sides of (3.35) along the fiber and combine the theorems of Bismut-Freed, Freed, Lott and Ebert to obtain Theorem 2.2.2, 2.2.4, 2.3.2, 2.3.4 and 2.4.2.

$$ $$

\noindent {\bf Acknowledgements} We are grateful to Professor Weiping Zhang for helpful discussions.  The first author is supported by a start-up grant from National University of Singapore. 

$$ $$

\end{document}